\renewcommand\baselinestretch{1.0}
\newcommand {\ctn}{\citet} 
\newcommand {\ctp}{\citep}       
\numberwithin{equation}{section}
\theoremstyle{plain}
\newtheorem{theorem}{Theorem}[section]
\newtheorem{definition}{Definition}[section]
\newtheorem{remark}{Remark}[section]
\newtheorem{proposition}{Proposition}[section]
\newtheorem{lemma}{Lemma}[section]
\newcommand{\bell}{\boldsymbol{\ell}}
\newcommand{\bg}{\boldsymbol{g}}
\newcommand{\bof}{\boldsymbol{f}}
\newcommand{\bxi}{\boldsymbol{\xi}}
\newcommand{\bSigma}{\boldsymbol{\Sigma}}
\newcommand{\bsigma}{\boldsymbol{\sigma}}
\newcommand{\bPsi}{\boldsymbol{\Psi}}
\newcommand{\bD}{\boldsymbol{D}}
\newcommand{\bV}{\boldsymbol{V}}
\newcommand{\bm}{\boldsymbol{m}}
\newcommand{\bG}{\boldsymbol{G}}
\newcommand{\bM}{\boldsymbol{M}}
\newcommand{\bA}{\boldsymbol{A}}
\newcommand{\bS}{\boldsymbol{S}}
\newcommand{\bq}{\boldsymbol{q}}
\newcommand{\bQ}{\boldsymbol{Q}}
\newcommand{\bW}{\boldsymbol{W}}
\newcommand{\bv}{\boldsymbol{v}}
\newcommand{\bs}{\boldsymbol{s}}
\newcommand{\bx}{\boldsymbol{x}}
\newcommand{\bX}{\boldsymbol{X}}
\newcommand{\bY}{\boldsymbol{Y}}
\newcommand{\bZ}{\boldsymbol{Z}}
\newcommand{\bL}{\boldsymbol{L}}
\newcommand{\bzero}{\boldsymbol{0}}
\begin{document}
\renewcommand\baselinestretch{1.0}

\begin{frontmatter}
\title{Deep Bayesian Supervised Learning given Hypercuboidally-shaped,
Discontinuous Data, using Compound Tensor-Variate $\&$ Scalar-Variate Gaussian
Processes
}
\runtitle{Compound Tensor $\&$ Scalar-Variate GPs}

\begin{aug}
\author{
{\fnms{Kangrui} \snm{Wang}\thanksref{t2,m2}\ead[label=e2]{kw202@le.ac.uk}},
{\fnms{Dalia} \snm{Chakrabarty}\thanksref{t1,m1}\ead[label=e1]{d.chakrabarty@lboro.ac.uk}}
},
\thankstext{t2}{PhD student in Department of Mathematics, University of Leicester} 
\thankstext{t1}{Lecturer in Statistics, Department of Mathematical Sciences,
  Loughborough University}

\runauthor{Wang $\&$ Chakrabarty}

\affiliation{University of Leicester, Loughborough University}

\address{\thanksmark{m2} Department of Mathematics\\
University of Leicester\\
Leicester LE1 3RH,
U.K.\\
\printead*{e2}
}

\address{\thanksmark{m1} Department of Mathematical Sciences\\
Loughborough University\\
Loughborough LE11 3TU,
U.K.\\
\printead*{e1}
}

\end{aug}

\begin{abstract} {We undertake Bayesian learning of the high-dimensional 
functional relationship
between a system parameter vector and an observable, that is in general
tensor-valued. The ultimate aim is Bayesian inverse prediction of the system
parameters, at which test data is recorded. We attempt such learning given
hypercuboidally-shaped data that displays strong discontinuities, rendering
learning challenging. We model the sought high-dimensional function, with a
tensor-variate Gaussian Process (GP), and use three independent ways for
learning covariance matrices of the resulting likelihood, which is
Tensor-Normal. We demonstrate that the discontinuous data
demands that implemented covariance kernels be non-stationary--achieved by
modelling each kernel hyperparameter, as a function of the sample function of
the invoked tensor-variate GP. Each such function can be shown to be
temporally-evolving, and treated as a realisation from a distinct
scalar-variate GP, with covariance described adaptively by collating
information from a historical set of samples of chosen sample-size. We prove
that deep-learning using 2-''layers'', suffice, where the outer-layer
comprises the tensor-variate GP, compounded with multiple scalar-variate GPs
in the "inner-layer", and undertake inference with Metropolis-within-Gibbs.
We apply our method to a cuboidally-shaped, discontinuous, real dataset, and
subsequently perform forward prediction to generate data from our model, given
our results--to perform model-checking.
 }
\end{abstract}


\begin{keyword}
\kwd{Tensor-variate Gaussian Processes}
\kwd{Kernel parametrisation}
\kwd{Compound Tensor-variate Scalar-variate GPs}
\kwd{Lipschitz continuity}
\kwd{Deep learning}

\end{keyword}

\end{frontmatter}

\renewcommand\baselinestretch{1.0}
{
\section{Introduction}
\noindent
Statistical modelling allows for the learning of the relationship between two
variables, where the said relationship is responsible for generating the data
available on the variables. Thus, let $\bX$ be a random variable that represents a behavioural or structural
parameter of the system, and $\bY$ is another variable that bears influence on
$\bX$ s.t. $\bY=\bof(\bX)$, where
the functional relation $\bof(\cdot)$ that we seek to learn, is itself a
random structure, endowed with
information about the error made in predicting the values of $\bY$ (or $\bX$)
at which the noise-included measurement of $\bX$ (or $\bY$) has been realised.
Such a function can be modelled as a realisation from an adequately chosen
stochastic process.
In general, either or both variables could be tensor-valued, such that, data
comprising measurements of either variable, is then shaped as a
hypercuboid. Typically, the structure/behaviour of a system is parametrised
using a set of scalar-valued parameters, (say $d$ number of such parameters),
which can, in principle be collated into a $d$-dimensional vector. Then $\bX$
is typically, the system parameter vector.  The other, observed variable
$\bY$, can be tensor-valued in general. There are hypercuboidally-shaped data
that show up in real-world applications, \ctp{mardia_book, bijma_face,
  werner_cvs, theobald_covs, fuhrman}. For example, in computer vision, the
image of one person might be a matrix of dimensions $a\times b$, i.e. image
with resolution of $a$ pixels by $b$ pixels. Then, repetition across $n$
persons inflates the data to a cuboidally-shaped dataset. Examples of handling
high-dimensional datasets within computer vision exist \ctp{ian_face,
  fu_thesis, pang, wang_book, qiang}. In health care, the $p$ number of health
parameters of $n$ patients, when charted across $k$ time-points, again
generates a high-dimensional data, which gets further enhanced, if the
experiment involves tracking for changes across $\ell$ groups of $n$ patients
each, where each such group is identified by the level of intervention
\ctp{chari, clarke, oberg, chari2, sarakar_phd, wang_healthcare, fan2107}.
Again, in ecological datasets, there could be $n$ spatial locations at each of
which, $p$ traits of $k$ species could be tracked, giving rise to a
high-dimensional data \ctp{leitao, warton, dunstanscott}.

It is a shortcoming of the traditional modelling strategies that we treated
these groupings in the data as independent--or for that matter, even the
variation in parameter values of any group across the $k$ time points, is
ignored, and a mere snapshot of each group is traditionally considered, one at
a time. In this work, we advance a method for the consideration of parameters
across all relevant levels of measurement, within one integrated framework, to
enable the learning of correlations across all such levels, thus permitting
the prediction of the system parameter vector, with meaningful uncertainties,
and avoid information loss associated with categorisation of data.

While discussing the generic methodology that helps address the problem of
learning the inter-variable relationship $\bof(\cdot)$, given general
hypercuboid-shaped data, we focus on developing such learning when this data
displays discontinuities. In such a learning exercise, the inter-variable functional relation
$\bof(\cdot)$, needs to be modelled using a
high-dimensional stochastic process (a tensor-variate Gaussian Process, for
example), the covariance function of which is non-stationary. The correlation
between a pair of data slices, (defined by two such measured values of $\bY$,
each realised at two distinct values of the system parameter $\bX$), is
sometimes parametrically modelled as a function of the distance between the
values of the system parameter at which these slices are realised,
i.e. ``similarity'' in values of $\bY$ can be modelled as a function of
``similarity'' in the corresponding $\bX$ values. However, if there are
discontinuities in the data, then such a mapping between ``similarities'' in
$\bX$ and $\bY$ no longer holds. Instead, discontinuities in data call
for a model of the correlation that adapts to the discontinuities in the
data. We present such correlation modelling in this paper, by modelling each
scalar-valued hyperparameter of the correlation structure of the
high-dimensional stochastic process, as a random function of the sample path
of that process; this random function then, can itself be modelled as a
realisation of a scalar-variate stochastic process--a scalar-variate Gaussian
Process (GP) for example (Section~\ref{sec:model}).  

Thus, the learning of $\bof(\cdot)$ is
double-layered, in which multiple scalar-variate GPs inform a high-dimensional
(tensor-variate) GP. Importantly, we show below (Section~\ref{sec:suffice}) that no more
than 2 such layers in the learning strategy suffice. Thus, the data on the
observable $\bY$ can be shown to be sampled from a compound tensor-variate and
multiple scalar-variate Gaussian Processes.

Acknowledgement of non-stationarity in correlation learning is not new
\ctp{paciorek}.  In some approaches, a transformation of the space of the
input variable is suggested, to accommodate non-stationarity \ctp{sampson,
 snoek, ohagan}. When faced with learning the dynamically varying covariance
structure of time-dependent data, others have resorted to
learning such a covariance, using Generalised Wishart Process
\ctp{wilson}. In another approach, latent parameters that bear information on
non-stationarity, have been modelled with GPs and learnt simultaneously with
the sought function \ctp{tolvanen}, while others have used multiple GPs to
capture the non-stationarity \ctp{gramacy, heinonen}.  However, what has not
been presented, is a template for including non-stationarity in
high-dimensional data, by nesting lower-dimensional Gaussian Processes with
distinct covariances, within a tensor-variate GP (Section~\ref{sec:model} and
Section~\ref{sec:kernel}), using a Metropolis-within-Gibbs inference scheme
(Section~\ref{sec:inference}), to perform with-uncertainties learning of a
high-dimensional function, given discontinuities that show up in the
hypercuboidally-shaped datasets in general, and illustration of the method on
a cuboidally-shaped, real-world dataset (Section~\ref{sec:application},
Section~\ref{sec:prediction}). This is what we introduce in this paper. Our
model is capacitated to learn the temporally-evolving covariance of
time-dependent data (Section~\ref{sec:temporal}), if such is the data at hand, but the focus of our
interest is to follow the learning of the sought tensor-valued functional
relation between a system parameter vector and a tensor-valued observable,
with inverse Bayesian prediction of the system parameter values, at which test
data on the observable is measured (Section~\ref{sec:results},
Section~\ref{sec:prediction}). Additionally, flexibility of our model design
permits both inverse and forward predictions. So we also
predict new data at chosen system parameter values given our model and
results, and perform model checking, by comparing such generated data against
the empirically observed data (Section~3 of Supplementary Materials).

\section{Model}
\label{sec:model}
\noindent
Let system parameter vector $\bS\in{\boldsymbol{\cal X}}\subseteq{\mathbb
    R}^d$, be affected by observable $\bV$, where $\bV$ is ($k-1$-th ordered)
  tensor-valued in general, i.e.  is $\bV\in{\boldsymbol{\cal
      Y}}\subseteq{\mathbb R}^{m_1\times m_2\times\ldots\times m_{k-1}}$,
  $m_i\in{\mathbb Z},\forall\:i=1,\ldots,k-1$.
That $\bV$ bears influence on $\bS$ suggests
the relationship $\bV=\bxi(\bS)$
where $\bxi:{\boldsymbol{\cal X}}\subseteq{\mathbb R}^d\longrightarrow {\boldsymbol{\cal Y}}\subseteq{\mathbb R}^{m_1\times
  m_2\times\ldots\times m_{k-1}}$. 


\begin{definition}
  \label{defn:defn1} 
{We define functional relationship $\bxi(\cdot)$, between $\bS$ and
    $\bV$, as a ``tensor-valued function'', with
    $\displaystyle{\prod\limits_{i=1}^{k-1} m_i}$-number of component
    functions, where these components suffer inter-correlations. Thus, the
    learning of $\bxi(\cdot)$ is equivalent to learning the component
    functions, inclusive of learning the correlation amongst these component
    functions.

Inverse of 
$\bxi(\cdot)$, is defined as the tensor-valued function of same
dimensionalities as $\bxi(\cdot)$, comprising inverses of each
component function of $\bxi(\cdot)$, assuming inverse of each component
function exists.}
\end{definition}


The inversion of the sought function $\bxi(\cdot)$--where
$\bV=\bxi(\bS)$--allows for the forward prediction of $\bv^{(new)}$ given a
measured value $\bs^{(new)}$ of $\bS$, as well as for the inverse prediction
of the value of $\bS$ at which a given measurement of $\bV$ is recorded.
It may be queried: why do we undertake the seemingly more difficult learning
of the tensor-valued $\bxi(\cdot)$ (that outputs the tensor $\bV$), than of
the vector-valued $\bg(\cdot)$ (that outputs the vector $\bS$). We do this,
because we want to retain the capacity of predicting both new data at a given
value of the system parameter ($\bS$), as well as predict the system parameter
at which a new measurement of the observable $\bV$ is realised.
\begin{remark}
{
If we had set ourselves the task of learning $\bg(\cdot)$, where
$\bg(\bV)=\bS$, i.e. $\bg(\cdot)$ is a ``vector-valued'' function,
and therefore lower dimensional with fewer number of
component functions than the tensor-valued $\bxi(\cdot)$--we could not have
predicted value of $\bV$ at a given $\bs$. 
The $d$-dimensional vector-valued inverse function $\bg^{-1}(\cdot)$ cannot yield a value of the $\displaystyle{\prod\limits_{i=1}^{k-1}
  m_i}$ number of components of the tensor $\bV$ at this given $\bS$, if
$\displaystyle{\prod\limits_{i=1}^{k-1} m_i} > d$. 
}
\end{remark}
 
The learning of the function $\bxi(\cdot)$, uses the training data
${\bf D}:=\{(\bs_i,\bv_i)\}_{i=1}^N$. Conventional prediction of
$\bS=\bs^{(test)}$, at which test data $\bv^{(test)}$ on $\bV$ is realised,
suggests: $\bs^{(test)} := \bxi^{-1}(\bV)\vert_{\bv^{(test)}}$. 
\begin{itemize}
\item However, this
there is no objective way to include the uncertainties learnt
in the learning of the function $\bxi(\cdot)$, to propagate into the
uncertainty of this prediction. This underpins an advantage of Bayesian
prediction of one variable, given test data on the other, subsequent to
learning of $\bxi(\cdot)$ using training data ${\bf D}$. 
\item Conventional fitting
methods (such as fitting with splines, etc), also fumble 
when measurements of both/either of the
r.v.s $\bS$ and $\bV$, are accompanied by measurement errors; in light of
this, it becomes difficult to infer the function that fits the data the
best. In fact, the uncertainty in the learning of the sought function is also
then difficult to quantify. 
\item Secondly, there is no organic way of quantifying the smoothness of the
  sought $\bxi(\cdot)$, in th econventional approach. Ideally, we would prefer
  to learn this smoothness from the data itself. However, there is nothing
  intrinsic to the fitting-with-splines/wavelets method that can in principle,
  quantity the smoothness of the curve, given a training data.
\item Lastly, when
$\bV$ is an r.v. that is no longer a scalar, but higher-dimensional (say
tensor-valued in general), fitting with splines/wavelets starts to become
useless, since in such cases of sought tensor-valued function $\bxi(\cdot)$
(in general), the component functions of $\bxi(\cdot)$ are correlated, but
methods such as parametric fitting approaches, cannot capture such
correlation, given the training data. As we have remarked above, such
correlation amongst the components functions of $\bxi(\cdot)$ is the same
correlation structure amongst the components of the tensor-valued $\bV$--so in
principle, the sought correlation can be learnt from the training data.
\end{itemize}

In light of this, we identify a relevant Stochastic Process that can give a
general, non-restrictive description of the sought function $\bxi(\cdot)$--a
Gaussian Process for example. The joint probability density of a set of
realisations of a sampled $\bxi(\cdot)$, is then driven by the Process under
consideration, where each such realisation of the function, equals a value of
the output variable $\bV$. Thus, the joint also represents the likelihood of
the Process parameters given the relevant set of values of $\bV$, i.e. the
data. We impose judiciously chosen priors, to write the posterior probability
density of the Process parameters given the data. Generating samples from this
posterior then allows for the identification of the 95$\%$ HPD credible
regions on these Process parameters, i.e. on the learnt function
$\bxi(\cdot)$. It is possible to learn the smoothness of the function
generated from this Process, via kernel-based parameterisation of the
covariance structure of the GP under consideration. Thus, we focus on the
pursuit of adequate covariance kernel parametrisation.

\begin{proposition} {When possible, covariance matrices of the GP that is
    invoked to model the sought function $\bxi(\cdot)$, are kernel-parametrised
    using stationary-looking kernel functions, hyperparameters of
    which are modelled as dependent on the sample paths (or rather sample
    functions) of this GP. We show below (Lemma~\ref{lemma:1}) that such a
    model can address the anticipated discontinuities in data.}
\end{proposition}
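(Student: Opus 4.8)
The plan is to make the kernel parametrisation of the Proposition fully explicit, verify that it defines a legitimate (positive-definite) covariance conditional on any realised sample function of the invoked GP, and then reduce the operative claim---that such a model can absorb the anticipated discontinuities---to Lemma~\ref{lemma:1} invoked below.

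First I would fix notation for the ``stationary-looking'' kernel. For the tensor mode indexed by the system-parameter samples $\bs_1,\ldots,\bs_N$, I would write the $(i,j)$-th entry of the associated covariance matrix as a correlation function of squared-exponential (or Mat\'ern) type, e.g. $K(\bs_i,\bs_j)=\exp(-\sum_{\ell=1}^{d} a_\ell (s_{i\ell}-s_{j\ell})^2)$, so that \emph{in form} $K$ depends on the pair only through the componentwise differences---hence ``stationary-looking''. The departure from stationarity is that the hyperparameter vector $\ba=(a_1,\ldots,a_d)$ is not a constant: each $a_\ell$ is the image, under a fixed positive link, of a scalar read off the sample function $\bxi$ of the invoked tensor-variate GP, evaluated at (a symmetric combination of) $\bxi(\bs_i)$ and $\bxi(\bs_j)$. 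For the remaining tensor modes, whose index sets carry no natural metric, no such kernel parametrisation is available and the covariance is instead carried as a free positive-definite matrix to be learnt directly; this is exactly the scope of the qualifier ``when possible''.

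Then I would check well-definedness. Conditionally on a realised $\bxi$, the quantities $a_\ell$ are fixed and strictly positive by the choice of link, so $K$ is a bona fide squared-exponential kernel; its Gram matrix on $\{\bs_i\}_{i=1}^{N}$ is therefore symmetric positive semi-definite, and positive definite once the diagonal nugget is added, so the Tensor-Normal likelihood is well posed along every realisation. I would also note that the only apparent pathology---that the covariance depends on the very sample path the covariance is used to generate---is dissolved precisely by this conditioning, and that it is this feature that forces a layered (compound) construction rather than a single-layer GP; the internal consistency of the resulting joint law, and the fact that two layers suffice, is what Section~\ref{sec:suffice} supplies. The qualitative point---that letting $\ba$ vary along $\bxi$ makes the effective kernel non-stationary once the randomness in $\bxi$ is marginalised, and that near a jump in the data the induced local length-scale contracts, decorrelating slices straddling the discontinuity---is exactly the adaptivity a non-stationary model needs, and its formal sufficiency statement is Lemma~\ref{lemma:1}, whose hypotheses are met by the construction above.

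The step I expect to be the main obstacle is the well-definedness argument: making rigorous that hyperparameters depending on the sampled function neither break positive-definiteness nor render the hierarchical model circular or ill-posed. The way through is to argue everything conditionally on a sample function, to push the consistency of the unconditional joint into the compound-process construction of Section~\ref{sec:suffice}, and to leave the genuinely substantive assertion---that this mechanism handles the discontinuities---to Lemma~\ref{lemma:1}; the rest is bookkeeping.
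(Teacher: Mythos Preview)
This Proposition in the paper is a methodological declaration rather than a result carrying its own proof: the paper gives it no proof environment, and the operative justification is deferred entirely to Theorem~\ref{th:1} and Lemma~\ref{lemma:1} (together with the follow-up lemma and Remark~\ref{rem:2nd}). Your instinct to supply well-definedness and positive-definiteness checks is reasonable scaffolding, but it is not what the paper does, and more importantly your construction does not match the paper's.

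The key discrepancy is in how the hyperparameters depend on the sample function. You make each $a_\ell$ a function of the pair, ``evaluated at (a symmetric combination of) $\bxi(\bs_i)$ and $\bxi(\bs_j)$'', i.e.\ a locally varying length-scale in the spirit of Paciorek--Schervish. The paper does \emph{not} do this: for a given sampled $\bxi$, the hyperparameter vector $\bq(\bxi)$ is a single global value, constant across all pairs $(i,j)$; the sample-path dependence is operationalised through the MCMC iteration index (one sampled $\bxi$ per iteration $t$, hence $\ell_c=g_{c,\bx}(t)$ with $g_{c,\bx}$ itself a realisation of a scalar-variate GP). This is why the paper's later argument that two layers suffice hinges on $g_{c,\bx}$ being a map from a bounded subset of integers to the reals.

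This mismatch also undermines your positive-definiteness step. If, as you write, $a_\ell$ depends on the pair $(\bs_i,\bs_j)$, then conditionally on $\bxi$ you do \emph{not} obtain ``a bona fide squared-exponential kernel'' with fixed hyperparameters; you obtain a matrix with entrywise-varying length-scales, and such matrices are not automatically positive semi-definite without a Paciorek-type averaging construction, which you have not invoked. Under the paper's actual model---one global $\bq$ per sampled $\bxi$---your PSD argument would go through trivially, but then the phrase ``evaluated at a symmetric combination of $\bxi(\bs_i)$ and $\bxi(\bs_j)$'' has to be dropped. So either align the construction with the paper's global-per-sample-path hyperparameters, or, if you keep the local construction, replace the PSD claim with an explicit appeal to a valid non-stationary kernel form.
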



As LHS of equation $\bV=\bxi(\bS)$ is $k-1$-th ordered
tensor-valued, $\bxi(\cdot)$ is tensor-variate function of equal
dimensionalities. So we model
$\bxi(\cdot)$  as a realisation from a tensor-variate GP. 
\begin{definition}
{Modelling $\bxi(\cdot)$ as sampled from a tensor-variate GP, where the
  $k-1$-th ordered tensor-valued variable $\bV=\bxi(\bS)$, we get that the 
joint probability of the set of values of sampled function $\bxi(\cdot)$, at each of the $n$ design
points $\bs_1,\ldots \bs_n$ (that reside within the training data ${\bf
  D}=\{(\bs_i,\bv_i)\}_{i=1}^n$), follows
the $k$-variate Tensor Normal distribution \ctp{kolda, richter, tensor,
  manceur}:
$$[\bxi(\bs_1),\ldots,\bxi(\bs_n)]\sim{\cal TN}(\bM,\bSigma_1,\ldots,\bSigma_k),$$
where mean of this density is a $k$-th ordered mean tensor $\bM$ of
dimensions $m_1\times\ldots\times m_k$, and $\bSigma_j$ is the $m_j\times
m_j$-dimensional, $j$-th covariance matrix; $j=1,\ldots,k$. In other words, 
likelihood of $\bM,\bSigma_1,\ldots, \bSigma_k$ given ${\bf D}$ is the
$k$-variate Tensor Normal density:
\begin{equation}
{\cal L}(\bM,\bSigma_1,...,\bSigma_k\vert {\bf D}) \propto \exp(-\Vert ({\bf
  D}_{\bV} -\bM)\times_1 \bA_1^{-1} \times_2 \bA_2^{-1} ... \times_k \bA_k^{-1} \Vert^2/2),
\label{eqn:eqn1}
\end{equation}   
where $n$ observed values of the $k-1$-th dimensional tensor-valued $\bV$
are collated to form the $k$-th ordered tensor ${\bf D}_{\bV}$. The notation
$\times_j$ in Equation~\ref{eqn:eqn1} presents the $j$-mode product of a
matrix and a tensor \ctp{oseledets2011tensor}. Here $\bA_j$ is the unique 
square-root of the positive definite covariance matrix $\bSigma_j$, i.e. 
$\bSigma_j= \bA_j \bA^{ T}_j$.}
\end{definition}
One example of a computational algorithm that can be invoked to
realise such a square root of a matrix, is Cholesky decomposition\footnotemark. 

\footnotetext{
The covariance tensor of this $k$-th order Tensor Normal distribution, has been
decomposed into $k$ different covariance matrices by Tucker decomposition,  
\ctp{hoff_2011, manceur, kolda, Xu}, to yield the $k$ number of covariance matrices,
$\bSigma_1,\ldots, \bSigma_k$, where the $j$-th covariance matrix $\bSigma_j$
is an $m_j\times m_j$-dimensional square matrix, $j=1,\ldots,k$. 
As \ctn{Hoff, manceur} suggest, a $k$-th ordered random tensor $\bSigma \in R^{ m_1 \times
  m_2...\times m_k}$ can be decomposed to a $k$-th ordered tensor $\bZ$
and $k$ number of covariance matrices $\bSigma_1,\ldots, \bSigma_k$ by
{Tucker product}, according to 
  $
   \bSigma = \bZ \times_1 \bSigma_1  \times_2 \bSigma_2 ...  \times_k  \bSigma_k$,
It can be proved that all tensors can be
decomposed into a set of covariance matrices \ctp{tucker}, though not uniquely. This may
cause difficulty in finding the correct combination of covariance matrices
that present the correlation structure of the data at hand. One way to solve
this problem is to use priors for the respective covariance parameters.}


We employ this likelihood in Equation~\ref{eqn:eqn1} to write the joint posterior
probability density of the mean tensor and covariance matrices, given the data. But prior to doing
that, we identify those
parameters--if any--that can be estimated in a pre-processing stage of the
inference, in order to reduce the computational burden of inference. Also,
it would be useful to find ways of
(kernel-based) parametrisation of the sought covariance matrices, thereby reducing
the number of parameters that we need to learn. 
To this effect, we undertake the estimation of the mean tensor is $\bM \in R^{ m_1 \times
  m_2...\times m_k}$. It is empirically estimated as the sample mean $\overline{\bv}$ of the sample
$\{\bv_1,\ldots,\bv_n\}$, s.t. $n$ repetitions of $\overline{\bv}$ form
the value $\bm$ of $\bM$. 
However, if necessary, the mean tensor itself can be regarded as a random variable and learnt from the data \ctp{chakrabarty2015bayesian}, 
The modelling of the covariance structure of this GP is discussed in the following subsection.


Ultimately, we want to predict the value of one variable, at which a
new or test data on the other variable is observed. 
\begin{proposition}
{
To perform inverse prediction of value $\bs^{(test)}$ of the input variable 
$\bS$, at which test data $\bv^{(test)}$ on $\bV$ is realised, we will
\begin{enumerate}
\item[---]sample from the posterior probability density of $\bs^{(test)}$
given the test data $\bv^{(test)}$, and (modal) values of the unknowns that
parametrise the covariance matrices the high-dimensional GP invoked to model
$\bxi(\cdot)$, subsequent to learning the marginals of each such unknown 
given the training data, using MCMC.
\item[---]sample from the joint posterior probability density
of $\bs^{(test)}$ and all other unknowns parameters of this 
high-dimensional GP, given training, as well as test data, using MCMC.
\end{enumerate}}
\end{proposition}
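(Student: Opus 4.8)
The plan is to read this Proposition as the assertion that each of the two stated sampling targets is a genuine posterior density, computable up to a normalising constant, and hence reachable by the Metropolis-within-Gibbs scheme of Section~\ref{sec:inference}; so the ``proof'' consists of writing those posteriors down by Bayes' theorem, checking that they are proper (positive, integrable) on the relevant supports, and exhibiting the conditional structure that makes the Gibbs blocks available.

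First I would fix notation for the full collection of unknowns. Let $\bTheta$ gather every parameter that enters the covariance matrices $\bSigma_1,\ldots,\bSigma_k$ of the tensor-variate GP, i.e. the scalar kernel hyperparameters together with the parameters of the inner-layer scalar-variate GPs that, by the compounding set up in Section~\ref{sec:model}, produce those hyperparameters as functions of the sampled $\bxi(\cdot)$; by Lemma~\ref{lemma:1} the induced kernels are the non-stationary ones needed for the discontinuous data, and by Section~\ref{sec:suffice} two layers are enough, so $\bTheta$ is finite-dimensional and the model is well-posed. With the mean tensor held at its empirical value $\overline{\bv}$ and the Tensor-Normal likelihood of Equation~\ref{eqn:eqn1}, Bayes' theorem gives the training-stage posterior $\pi(\bTheta\mid{\bf D})\propto {\cal L}(\bM,\bSigma_1,\ldots,\bSigma_k\mid{\bf D})\,\pi(\bTheta)$; marginalising this over all but one coordinate yields that coordinate's marginal posterior, whose mode $\hat\bTheta$ (assumed a well-defined stationary point) is the object the first bullet calls the ``(modal) values of the unknowns''.

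For the first, plug-in strategy I would then enlarge the sample space by $\bs^{(test)}\in{\boldsymbol{\cal X}}$ and write the one-slice predictive likelihood of $\bv^{(test)}$: conditionally on $\bTheta$ and on the training data, the joint law of $[\bxi(\bs_1),\ldots,\bxi(\bs_n),\bxi(\bs^{(test)})]$ is again Tensor-Normal, so $\bv^{(test)}=\bxi(\bs^{(test)})$ enters only through the kernel evaluated at $\bs^{(test)}$ against the design points, whence $\pi(\bs^{(test)}\mid\bv^{(test)},\hat\bTheta,{\bf D})\propto {\cal L}_{\rm pred}(\bv^{(test)}\mid\bs^{(test)},\hat\bTheta,{\bf D})\,\pi(\bs^{(test)})$ is a density on ${\boldsymbol{\cal X}}$ known up to its normaliser --- exactly what a Metropolis sampler needs. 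For the second, fully joint strategy I would instead place $\bs^{(test)}$ alongside $\bTheta$ and condition on the augmented data ${\bf D}\cup\{\bv^{(test)}\}$; the likelihood factorises into the training Tensor-Normal term times the one-slice predictive term, so $\pi(\bs^{(test)},\bTheta\mid{\bf D},\bv^{(test)})$ is again explicit up to a constant, and its full conditionals --- one block for $\bs^{(test)}$, one (or several) for the kernel hyperparameters, one for the inner-GP parameters --- are each of standard Metropolis-updatable form, so Metropolis-within-Gibbs delivers a chain with the desired joint posterior as invariant law.

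The main obstacle I anticipate is not the Bayes bookkeeping but the compound covariance structure: because each entry of $\bSigma_j$ depends on the sampled path of the very GP being fitted, the predictive conditional law of $\bxi(\bs^{(test)})$ is not the textbook Gaussian-process predictive, and one must show it remains a proper, everywhere-positive density on ${\boldsymbol{\cal Y}}$ --- equivalently, that the covariance matrices stay positive definite after augmentation by $\bs^{(test)}$ --- and that the Gibbs blocks stay well-defined when the hyperparameter block and the $\bxi$-path block are mutually dependent. Establishing this positive-definiteness-under-augmentation, leaning on the two-layer sufficiency of Section~\ref{sec:suffice}, is the step that does the real work; assembling the conditionals and invoking standard Metropolis-within-Gibbs convergence is then routine.
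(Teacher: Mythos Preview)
Your proposal is more ambitious than what the paper itself does. In the paper this Proposition is not accompanied by any proof: it is a methodological declaration (note the phrasing ``we will'') announcing the two inference strategies, followed only by a one-sentence remark comparing their computational speed and a caveat about small training samples. The paper uses the label ``Proposition'' loosely throughout Section~\ref{sec:model} for modelling choices rather than for mathematical claims; this one is in that category. The concrete posteriors you are building toward are eventually written down, but not here --- they appear in Section~\ref{sec:prediction} as Equations~\ref{eqn:eqn4}, \ref{eqn:marginal}, and \ref{eqn:marginalpred}, again without any argument for propriety or positive-definiteness under augmentation.

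So your route is genuinely different: you are attempting to supply a justification the authors never give. The Bayes-theorem bookkeeping you outline is correct in spirit, and your identification of the real technical obstacle --- that the covariance entries depend on the sampled path, so the augmented $(n{+}1)\times(n{+}1)$ kernel matrix is not obviously positive definite and the predictive conditional is not the textbook GP predictive --- is well placed; the paper simply does not confront this. If you want to match the paper, there is nothing to prove; if you want to go beyond it, the positive-definiteness-under-augmentation step is indeed where the substance lies, and the two-layer sufficiency result only bounds the number of hyperparameters, it does not by itself guarantee that the resulting kernel is valid at an out-of-sample $\bs^{(test)}$.
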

Computational speed of the first approach, is higher, as marginal
distributions of the GP parameters are learnt separately. When the training
data is small, or if the training data is not representative of the test data
at hand, the learning of $\bs^{(test)}$ via the second method may affect the
learning of the GP parameters.

\subsection{3 ways of learning covariance matrices}
\noindent
Let the $ij$-th element of $p$-th covariance matrix $\bSigma_p^{(m_p\times
  m_p)}$ be $\sigma_{ij}^{(p)}$; $j,i=1,\ldots,m_p$, $p\in\{1,\ldots,
k\}$. 
\begin{definition}
{At a given $p$, $\sigma_{ij}^{(p)}$ bears information about
covariance amongst the $i$-th and $j$-th slices of the $k$-th ordered data
tensor ${\bD}_{\bV}=(\bv_1,\ldots,\bv_{m_p})$,
s.t. $m_1\times\ldots\times m_{p-1}\times m_{p+1}\times\ldots\times
m_k$-dimensional $i$-th ``slice'' of data tensor
$\bD_{\bV}$ is measured value $\bv_i$ of $k-1$-th ordered tensor-valued $\bV$,
where the $i$-th slice is
realised at the $i$-th design point $\bs_i$. 
}
\end{definition}

The covariance between the $i$-th and $j$-th slices of data $\bD_{\bV}$
decreases as the slices get increasingly more disparate, i.e. with increasing
$\parallel \bs_i - \bs_j\parallel$. In fact, we can model $\sigma_{ij}^{(p)}$
as a decreasing function $K_p(\cdot,\cdot)$ of this disparity $\parallel\bs_i
- \bs_j \parallel$, where $K_p(\bs_i,\bs_j)$ is the covariance kernel
function, computed at the $i$-th and $j$-th values of input variable $\bS$. In
such a model, the number of distinct unknown parameters involved in the
learning of $\bSigma_p$ reduces from $m_p(m_p+1)/2$, to the
number of hyper-parameters that parametrise the kernel function
$K_p(\cdot,\cdot)$.

However, kernel parametrisation is not always possible. \\
--Firstly, this parametrisation may cause information loss and this may not be
acceptable\\ \ctp{aston}. \\
--Again, we will necessarily avoid kernel parametrisation, when we cannot find
input parameters, at which the corresponding slices in the data are realised.

In such situations, \\
--we can learn the elements of the covariance matrix directly using MCMC,
though direct learning of all distinct elements of $\bSigma_p$ is feasible, as
long as total number of all unknowns learnt by MCMC $\lesssim 200$.\\
--we can use an empirical estimation for the covariance matrix $\bSigma_p$.
We collapse each of the $m_p$ number of $k-1$-th ordered tensor-shaped slices
of the data, onto the $q$-th axis in the space ${\cal Y}$ of $\bV$, where we
can choose any one value of $q$ from $\{1,\ldots,k-1\}$. This will reduce each
slice to a $m_q$-dimensional vector, so that $\sigma_{ij}^{(p)}$ is covariance
computed using the $i$-th and $j$-th such $m_q$-dimensional vectors.

Indeed such an empirical estimate of any covariance matrix is easily
generated, but it indulges in linearisation amongst the different
dimensionalities of the observable $\bV$, causing loss of information about
the covariance structure amongst the components of these high-dimensional
slices.  This approach is inadequate when the sample size is small
because the sample-based estimate will tend to be incorrect; indeed
discontinuities and steep gradients in the data, especially in small-sample
and high-dimensional data, will render such estimates of the covariance
structure incorrect. Importantly, such an approach does not leave any scope
for identifying the smoothness in the function $\bxi(\cdot)$ that represents
the functional relationship between the input and output variables. Lastly,
the uncertainties in the estimated covariance structure of the GP remain
inadequately known.

\begin{proposition}
{
We model the covariance matrices as \\
--kernel parametrised,\\
--or empirically-estimated, \\
--or learnt directly using MCMC. }
\end{proposition}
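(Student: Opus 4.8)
The plan is to read the statement as a completeness (trichotomy) claim and to prove it in that form: every covariance matrix $\bSigma_p$, $p\in\{1,\ldots,k\}$, that enters the Tensor-Normal likelihood of Equation~\ref{eqn:eqn1} falls into exactly one of the three modelling regimes, and each regime returns an admissible (symmetric positive-definite) matrix, so that the likelihood remains well defined. First I would dispose of the kernel-parametrised regime. It is available precisely when two conditions hold: (i) the input values $\bs_1,\ldots,\bs_{m_p}$ at which the $m_p$ mode-$p$ slices of $\bD_{\bV}$ are realised can be identified, and (ii) the information loss incurred by the parametrisation is tolerable. Under (i)--(ii) one sets $\sigma^{(p)}_{ij}=K_p(\bs_i,\bs_j)$ for a covariance kernel $K_p(\cdot,\cdot)$ drawn from a positive-definite class; positive-definiteness of $\bSigma_p$ is then inherited from $K_p$, and the number of free parameters collapses from $m_p(m_p+1)/2$ to the number of kernel hyperparameters. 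That these hyperparameters must be modelled as sample-path-dependent, so as to absorb the discontinuities in the data, is exactly what the earlier Proposition asserts and what Lemma~\ref{lemma:1} will establish; it is not re-proved here.

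Next I would treat the complementary case, in which (i) or (ii) fails, and split it by the total number $M_p$ of scalar unknowns the sampler must still explore. If $M_p\lesssim 200$, MCMC mixing over these unknowns is feasible, so $\bSigma_p$ is learnt entry by entry, i.e.\ all $m_p(m_p+1)/2$ distinct $\sigma^{(p)}_{ij}$ are sampled; admissibility is enforced either by supporting the prior on the positive-definite cone or, equivalently and more conveniently, by sampling the lower-triangular factor $\bA_p$ with $\bSigma_p=\bA_p\bA_p^{T}$ --- the very parametrisation already used in Equation~\ref{eqn:eqn1}. If instead $M_p$ exceeds this threshold, one resorts to the empirical plug-in: each of the $m_p$ mode-$p$ slices is collapsed onto a chosen axis $q\in\{1,\ldots,k-1\}$, reducing it to an $m_q$-vector, and $\sigma^{(p)}_{ij}$ is set to the sample covariance of the $i$-th and $j$-th of these vectors. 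The sample covariance is always positive-semidefinite and is positive-definite almost surely once the effective sample size exceeds $m_q$, so this estimate is admissible as well. Since (i)--(ii) either both hold (kernel regime) or at least one fails, and in the latter case $M_p$ is either $\lesssim 200$ (direct MCMC) or not (empirical plug-in), the three regimes are jointly exhaustive and pairwise consistent, which is the assertion.

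The main obstacle I anticipate is not positive-definiteness --- routine in each branch --- but the qualitative character of the selection criteria: ``the information loss is acceptable'' (kernel branch) and ``the sample is large enough and the data smooth enough'' (empirical branch) are not sharp quantitative dichotomies, so without further specification the statement is a modelling prescription rather than a theorem with a unique branch forced in every instance. I would close this gap by making the criteria operational in the present setting: the empirical plug-in is declared unreliable whenever the data carry steep gradients or discontinuities at small sample size --- precisely the failure mode flagged above --- in which case the kernel branch with sample-path-dependent hyperparameters is used when inputs can be identified, and the direct-MCMC branch otherwise whenever $M_p\lesssim 200$; the empirical plug-in is retained only in the residual situation of large, smooth, high-dimensional data with no usable input points. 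With the criteria pinned down in this way the case split of the previous paragraph is exhaustive and mutually exclusive, and the Proposition follows.
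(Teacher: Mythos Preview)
The paper provides no proof for this Proposition, and none is intended: in this paper the word ``Proposition'' is used for modelling prescriptions as well as for mathematical assertions, and this particular Proposition is purely a declaration of the three modelling regimes that will be used for the covariance matrices. The surrounding text motivates each regime and notes its drawbacks, but there is no attempt to prove exhaustiveness, mutual exclusivity, or admissibility --- the statement is simply announced and then implemented in the application (Section~\ref{sec:application}), where $\bSigma_3$ is kernel-parametrised, $\bSigma_2$ is empirically estimated, and $\bSigma_1$ is learnt directly by MCMC.

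Your proposal therefore differs from the paper not in method but in kind: you have read the statement as a trichotomy theorem and supplied an argument for it, whereas the paper treats it as a design choice requiring no proof. You yourself identify this correctly in your final paragraph --- ``the statement is a modelling prescription rather than a theorem'' --- and that diagnosis is exactly right; the subsequent attempt to close the gap by making the selection criteria operational is a reasonable piece of methodology, but it is your contribution, not a reconstruction of anything in the paper. The positive-definiteness and exhaustiveness arguments you give are sound as far as they go, but they answer a question the paper does not pose.
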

An accompanying computational worry is the inversion of any of the covariance
matrices; for a covariance matrix that is an $m_p\times m_p$-dimensional
matrix, the computational order for matrix inversion is well known to be
${\cal O}(m^3_p)$ \ctp{FW92}.

\section{Kernel parametrisation}
\label{sec:kernel}
\begin{proposition}
{
Kernel parametrisation of a covariance matrix, when undertaken, uses
an Squared Exponential (SQE) covariance kernel
\begin{equation}
K(\bs_i, \bs_j) := A\left[\exp\left(-(\bs_i - \bs_j)^T \bQ^{-1}
    (\bs_i-\bs_j)\right)\right],\quad\forall i,j=1,\ldots,d, 
\label{eqn:kernel2}
\end{equation} 
where $\bQ$ is a diagonal matrix, the diagonal elements of which are the
length scale hyperparameters $\ell_1,\ldots,\ell_d\in{\mathbb R}_{>0}$ that tell us how quickly
correlation fades away in each of the $d$-directions in input space ${\cal
  X}$, s.t. the inverse matrix $\bQ^{-1}$ is also diagonal, with
the diagonal elements given as
${\displaystyle{\frac{1}{\ell_1},\ldots,\frac{1}{\ell_d}}}$, where
$q_c:=1/\ell_c$ is the smoothness hyperparameter along the $c$-th direction in
${\cal X}$, $c=1,\ldots,d$. We learn these $d$ unknown parameters from the
data.

Here $A$ is the global amplitude, that is subsumed as a scale factor, in one
of the other covariance matrices, distinct elements of which are learnt directly using MCMC.}
\end{proposition}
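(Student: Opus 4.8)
The statement bundles three assertions, and the plan is to dispatch them in turn: first, that the matrix with entries $K(\bs_i,\bs_j)$ is a bona fide covariance matrix --- symmetric and positive (semi-)definite for every finite set $\bs_1,\ldots,\bs_n$ of points in ${\cal X}$; second, that the stated constraints $\ell_1,\ldots,\ell_d\in{\mathbb R}_{>0}$ are exactly the ones under which $\bQ$ is positive definite, $\bQ^{-1}=\mathrm{diag}(1/\ell_1,\ldots,1/\ell_d)$ is well defined, and $K$ decreases with coordinatewise disparity --- so that (\ref{eqn:kernel2}) is consistent with the earlier modelling assumption that $\sigma_{ij}^{(p)}$ fades with $\|\bs_i-\bs_j\|$; and third, that the amplitude $A$ is redundant in the Tensor-Normal likelihood (\ref{eqn:eqn1}) and hence can legitimately be absorbed into the scale of one of the directly-learnt covariance matrices.

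For the first assertion I would argue positive-definiteness of the squared-exponential form directly. Since $\bQ^{-1}$ is diagonal, $(\bs_i-\bs_j)^T\bQ^{-1}(\bs_i-\bs_j)=\sum_{c=1}^d (s_{ic}-s_{jc})^2/\ell_c$, so $K(\bs_i,\bs_j)/A=\prod_{c=1}^d \exp\!\big(-(s_{ic}-s_{jc})^2/\ell_c\big)$ factors as a product of one-dimensional Gaussian kernels. Each factor $k_c(u,v)=\exp(-(u-v)^2/\ell_c)$ is a positive-definite function on ${\mathbb R}$ --- the cleanest route being Bochner's theorem, since it is, up to a positive constant, the Fourier transform of a centred Gaussian density, i.e.\ of a finite nonnegative measure; equivalently one may exhibit the explicit feature map. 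The Schur product theorem then makes the Hadamard product over $c=1,\ldots,d$ positive definite, and multiplication by the constant $A>0$ preserves this. Restricting to $\{\bs_1,\ldots,\bs_n\}$ shows that the corresponding Gram matrix --- which is exactly the kernel-parametrised $\bSigma_p$ once $A$ is dealt with --- is PSD; since the Gaussian kernel is in fact \emph{strictly} positive definite at distinct points, $\bSigma_p\succ0$, so its symmetric square root / Cholesky factor $\bA_p$ exists and is invertible, as (\ref{eqn:eqn1}) requires.

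For the second assertion, $\bQ=\mathrm{diag}(\ell_1,\ldots,\ell_d)$ is invertible precisely when no $\ell_c$ vanishes and positive definite precisely when every $\ell_c>0$; in that case $\bQ^{-1}$ has the stated diagonal entries and the quadratic form above is $\ge 0$, vanishing iff $\bs_i=\bs_j$. Hence $0<K(\bs_i,\bs_j)\le A$, and holding the other coordinates fixed $K$ is strictly decreasing in $|s_{ic}-s_{jc}|$ at a rate governed by $q_c:=1/\ell_c$ --- which is the monotone-decay behaviour demanded earlier and which identifies $q_c$ as the per-direction smoothness hyperparameter. So the restriction $\ell_c\in{\mathbb R}_{>0}$ is not a mere convention but is \emph{forced} by requiring (\ref{eqn:kernel2}) to be a valid, distance-decreasing covariance kernel, leaving $\ell_1,\ldots,\ell_d$ (together with $A$, treated next) as the free parameters to be learnt.

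The third assertion is where the real bookkeeping lives. In (\ref{eqn:eqn1}) the covariance structure enters only through the factors $\bA_1,\ldots,\bA_k$ with $\bSigma_j=\bA_j\bA_j^T$. Fix positive scalars $c_1,\ldots,c_k$ with $\prod_{j=1}^k c_j=1$ and apply the simultaneous rescaling $\bSigma_j\mapsto c_j\bSigma_j$, i.e.\ $\bA_j\mapsto\sqrt{c_j}\,\bA_j$ and $\bA_j^{-1}\mapsto c_j^{-1/2}\bA_j^{-1}$. Each $j$-mode product in $({\bf D}_{\bV}-\bM)\times_1\bA_1^{-1}\times_2\cdots\times_k\bA_k^{-1}$ is linear in its matrix argument, so the scalar $c_j^{-1/2}$ pulls out of the $j$-th mode product; collecting all of them multiplies the squared norm in the exponent by $\prod_j c_j^{-1}=1$, leaving the exponent unchanged, and the matching change in the Tensor-Normal normalising constant (a product of powers of $\det\bSigma_j$) cancels in the same way. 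Thus the likelihood depends on $(\bSigma_1,\ldots,\bSigma_k)$ only up to this family of joint rescalings, and in particular the global amplitude $A$ of the kernel-parametrised $\bSigma_p$ is confounded with the overall scale of any other $\bSigma_{p'}$; setting $A=1$ in (\ref{eqn:kernel2}) and letting that scale be carried by a $\bSigma_{p'}$ whose entries are learnt directly by MCMC removes the redundancy with no loss of generality, which is the content of the proposition's last sentence. I expect the textbook positive-definiteness of the squared-exponential kernel to be the easy part; the care needed in tracking the rescaling cleanly through the $j$-mode products and through the normalising constant of the Tensor-Normal density is the one place where a slip is easy to make.
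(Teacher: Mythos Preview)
Your mathematics is sound, but you have over-interpreted the paper's intent. In this paper the word ``Proposition'' is used loosely: this particular proposition is not a theorem followed by a proof, but a \emph{declaration of modelling choice}. The paper provides no proof at all --- it simply states that the SQE kernel of the displayed form will be used, explains the role of $\ell_c$ and $q_c$, and asserts that the global amplitude $A$ will be absorbed into another covariance matrix. The only follow-up is a Remark explaining (on computational, not mathematical, grounds) why a location-dependent amplitude $a_{ij}$ is avoided, together with an explicitly ``loose interpretation'' paragraph.

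What you have done is genuinely different and strictly more than the paper: you have supplied the rigorous justification that the paper leaves implicit. Your three pieces are all correct. The positive-definiteness argument via factorisation into one-dimensional Gaussian kernels, Bochner's theorem, and the Schur product theorem is standard and clean. The observation that $\ell_c>0$ is forced by requiring $\bQ\succ0$ and monotone decay is right. Most usefully, your identifiability argument --- that the Tensor-Normal likelihood is invariant under $\bSigma_j\mapsto c_j\bSigma_j$ with $\prod_j c_j=1$, checked both in the exponent (linearity of the $j$-mode product) and in the normalising constant $\prod_j\lvert\bSigma_j\rvert^{-m/(2m_j)}$ --- is exactly the calculation that makes the paper's absorption of $A$ legitimate, and the paper never writes it down. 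So your route is not an alternative proof to the paper's; it is a proof where the paper offers none.
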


\begin{remark} 
{   
We avoid using a model for the kernel in which amplitude depends on the
locations at which covariance is computed, i.e. the model: 
$K(\bs_i, \bs_j) := a_{ij}\left[\exp\left(-(\bs_i - \bs_j)^T \bQ^{-1}
    (\bs_i-\bs_j)\right)\right]$, and use a model endowed with a global
amplitude $A$. This helps avoid learning a very large
number ($d(d+1)/2$) of amplitude parameters $a_{ij}$ 
directly from MCMC. }
\end{remark}
A loose interpretation of this amlitude modelling is that we have scaled all
local amplitudes $a_{ij}$ to be $\leq 1$ using the global factor $A$
($=\max\limits{ij}\{a_{ij}\}$), and these scaled local amplitudes are then
subsumed into the argument of the exponential in the RHS of the last equation,
s.t. the reciprocal of the correlation length scales, that are originally
interpreted as the elements of the diagonal matrix $\bQ^{-1}$, are now
interpreted as the smoothing parameters modulated by such local
amplitudes. This interpretation is loose, since the same smoothness parameter
cannot accommodate all (scaled by a global factor) local amplitudes$\in(0,1]$,
for all $\bs_i-\bs_j$.

\subsection{Including non-stationarity, by modelling hyperparameters of
  covariance kernels as realisations of Stochastic Process}
\noindent
By definition of the kernel function we choose, (Equation~\ref{eqn:kernel2}),
all functions $\bxi(\cdot)$ sampled from the tensor-variate GP, are endowed with
the same length scale hyperparameters $\ell_1,\ldots,\ell_d$, and global
amplitude $A$. However, the data on the output variable $\bV$ is not
continuous, i.e. similarity between $\bs_i$ and $\bs_j$ does not imply
similarity between $\bxi(\bs_i)$ and $\bxi(\bs_j)$, computed in a universal
way $\forall \bs_i,\bs_j\in{\cal X}$.  Indeed, then a stationary definition of
the correlation for all pairs of points in the function domain, is wrong.  One
way to generalise the model for the covariance kernel is to suggest that the
hyperparameters vary as random functions of the sample path.
\begin{theorem}
\label{th:1}
{
For ${\bV}=\bxi(\bS)$, with $\bS\in{\cal X}$ and $\bV\in{\cal Y}$, if the map $\bxi:{\cal
  X}\longrightarrow{\cal Y}$ is a Lipschitz-continuous map over the bound
set ${\cal X}\subseteq{\mathbb R}^d$, where absolute
value of correlation
between $\bxi(\bs_1)$ and $\bxi(\bs_2)$ is 
$$\vert corr(\bxi(\bs_1),\bxi(\bs_2))\vert := \displaystyle{K\left(
  \langle (\bs_1 -\bs_2),\bq\rangle^2 \right)}, \quad\forall 
\bs_1,\bs_2\in{\cal X},$$
with $$K(\bs_1,\bs_2) := \displaystyle{\exp\left[-{\langle (\bs_1
      -\bs_2),\bq\rangle^2}\right]},$$
then the vector $\bq$ of correlation hyperparameters is finite, and each element
of $\bq$ is
$\bxi$-dependent, i.e. 
$$\bq(\bxi) = (q_1(\bxi),\ldots,q_d(\bxi))^T\in{\mathbb R}^d. $$
}
\end{theorem}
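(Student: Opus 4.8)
The plan is to extract the two claimed conclusions — finiteness of $\bq$ and $\bxi$-dependence of each $q_c$ — from the Lipschitz hypothesis, by confronting the postulated stationary-looking correlation formula with what Lipschitz continuity forces on the joint behaviour of $\bxi(\bs_1)$ and $\bxi(\bs_2)$. First I would fix notation: write $r := \langle(\bs_1-\bs_2),\bq\rangle^2$, so that the correlation is $K(r) = \exp(-r)$, a strictly decreasing bijection from $[0,\infty)$ onto $(0,1]$. Lipschitz continuity of $\bxi$ on the bounded set ${\cal X}$ gives a constant $L>0$ with $\Vert\bxi(\bs_1)-\bxi(\bs_2)\Vert \le L\Vert\bs_1-\bs_2\Vert$ for all $\bs_1,\bs_2\in{\cal X}$; in particular, as $\bs_2\to\bs_1$, the two tensor values coalesce, which forces $|corr(\bxi(\bs_1),\bxi(\bs_2))|\to 1$, consistent with $K(0)=1$. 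The substantive content is what happens at positive separation.

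The key step for \emph{finiteness} is a contradiction argument. Suppose some component $q_c$ of $\bq$ were infinite (or, more precisely, that $\bq$ is unbounded along some admissible sequence). Then one can choose $\bs_1,\bs_2\in{\cal X}$ with $\bs_1\ne\bs_2$ along the $c$-th direction while keeping $\Vert\bs_1-\bs_2\Vert$ bounded away from $0$; the inner product $\langle(\bs_1-\bs_2),\bq\rangle^2$ then blows up, so $K(r)=e^{-r}\to 0$, i.e. the correlation between $\bxi(\bs_1)$ and $\bxi(\bs_2)$ vanishes. But Lipschitz continuity bounds $\Vert\bxi(\bs_1)-\bxi(\bs_2)\Vert$ by $L\cdot\mathrm{diam}({\cal X})$, so the two tensor values cannot be ``fully decorrelated'' while staying within a fixed distance of one another over a bounded domain — decorrelation to zero would require them to be able to range freely, contradicting the uniform Lipschitz bound on a bounded set. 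Hence each $q_c$ must be finite, i.e. $\bq\in{\mathbb R}^d$.

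The key step for \emph{$\bxi$-dependence} is to observe that a single fixed vector $\bq$ imposes one global rule relating $\Vert\bs_1-\bs_2\Vert$-type separations to correlations, but the Lipschitz inequality $\Vert\bxi(\bs_1)-\bxi(\bs_2)\Vert\le L\Vert\bs_1-\bs_2\Vert$ is only an upper bound: the actual local rate of variation of $\bxi$ can differ from region to region of ${\cal X}$ (the data is discontinuous/has steep gradients, as emphasised in Section~\ref{sec:model}). Where $\bxi$ varies slowly, a given separation $\Vert\bs_1-\bs_2\Vert$ produces near-identical tensor values (correlation near $1$), forcing $r$ small, hence $\langle(\bs_1-\bs_2),\bq\rangle$ small; where $\bxi$ varies rapidly, the same separation produces very different tensor values (small correlation), forcing $r$ large. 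A constant $\bq$ cannot satisfy both demands simultaneously, so consistency of the correlation formula $K(\langle(\bs_1-\bs_2),\bq\rangle^2)=|corr(\bxi(\bs_1),\bxi(\bs_2))|$ with the realised $\bxi$ forces $\bq$ to be read off from the sample function itself: $\bq = \bq(\bxi) = (q_1(\bxi),\ldots,q_d(\bxi))^T$. I would make this precise by writing, for each pair, $\langle(\bs_1-\bs_2),\bq\rangle^2 = -\log|corr(\bxi(\bs_1),\bxi(\bs_2))|$ and noting the right-hand side is manifestly a functional of $\bxi$, so any $\bq$ reproducing it must inherit that dependence; finiteness then re-enters because the Lipschitz bound keeps $|corr|$ bounded away from $0$ on any compact separation regime, so $-\log|corr|$ stays finite.

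The main obstacle I anticipate is making the ``$\bxi$-dependence'' step rigorous rather than heuristic: the statement as phrased slides between $\bq$ being a constant vector of hyperparameters and $\bq$ being a function of $\bxi$, and one has to be careful about what ``dependence'' means — presumably that the correct hyperparameter values for a \emph{non-stationary} kernel are not universal constants but vary with where on the sample path one evaluates the kernel, so that the honest object is a field $\bq(\cdot)$ pulled back through $\bxi$. I would handle this by treating the displayed correlation formula as a \emph{local} identity valid in a neighbourhood of each $(\bs_1,\bs_2)$ pair and then showing that the only way to glue these local identities into a single kernel, given that the underlying $\bxi$ has position-dependent Lipschitz behaviour, is to let $\bq$ depend on $\bxi$; the finiteness of each $q_c(\bxi)$ then follows pointwise from the uniform Lipschitz constant exactly as in the contradiction argument above.
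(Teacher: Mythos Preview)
Your proposal has a genuine gap in how it invokes the Lipschitz hypothesis. You apply Lipschitz continuity with the \emph{standard} tensor norm on ${\cal Y}$, writing $\Vert\bxi(\bs_1)-\bxi(\bs_2)\Vert \le L\Vert\bs_1-\bs_2\Vert$, and then try to link this norm bound to the correlation. But there is no general principle that a bound on $\Vert\bxi(\bs_1)-\bxi(\bs_2)\Vert$ forces $\vert corr(\bxi(\bs_1),\bxi(\bs_2))\vert$ away from zero --- bounded random quantities can be perfectly uncorrelated. So your finiteness step (``decorrelation to zero would require them to be able to range freely'') does not go through, and your $\bxi$-dependence step remains, as you yourself flag, heuristic: the Lipschitz bound on the tensor norm never actually touches $\bq$.

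The paper's key move, which your argument is missing, is to \emph{define} the metric on ${\cal Y}$ directly through the correlation:
\[
d_{\cal Y}(\bxi(\bs_1),\bxi(\bs_2)) := \sqrt{-\log\vert corr(\bxi(\bs_1),\bxi(\bs_2))\vert}.
\]
With the given SQE kernel this evaluates to $\langle(\bs_1-\bs_2),\bq\rangle$, so the Lipschitz inequality $d_{\cal Y}\le L_{\bxi}\,d_{\cal X}$ becomes
\[
\sum_{i=1}^d q_i^2\,(s_1^{(i)}-s_2^{(i)})^2 \;\le\; L_{\bxi}^2\,\Vert\bs_1-\bs_2\Vert^2,
\]
which is satisfied with $L_{\bxi}=q_{\max}:=\max_i q_i$. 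This identifies the Lipschitz constant with the largest hyperparameter. Finiteness of $\bq$ then follows from finiteness of $L_{\bxi}$; $\bxi$-dependence of $q_{\max}$ (and hence of each $q_i$ in general) follows because the Lipschitz constant is, by construction, a functional of the particular map $\bxi$. You were close --- your final paragraph writes down the identity $\langle(\bs_1-\bs_2),\bq\rangle^2=-\log\vert corr\vert$, which is exactly the content of $d_{\cal Y}$ --- but you never promoted it to the metric in which the Lipschitz hypothesis is read, so the hypothesis and the hyperparameter vector $\bq$ never meet in your argument.
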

\begin{proof}
{
For $\bS\in{\cal X}$, where ${\cal X}$ is a bounded subset of ${\mathbb R}^d$, and $\bV\in{\cal Y}$, the mapping $\bxi:{\cal
  X}\longrightarrow{\cal Y}$ is a defined to be Lipschitz-continuous map, i.e.
\begin{equation}
d_{\cal Y}(\bxi(\bs_1) - \bxi(\bs_2)) \leq L_{\bxi} d_{\cal
  X}(\bs_1,\bs_2),\quad \forall \bs_1,\bs_2\in{\cal X},
\label{eqn:tensor_lip}
\end{equation} 
--for constant $L_{\bxi}\in{\mathbb R}$, s.t. the infinum over all such
constants is the finite Lipschitz constant for $\bxi$;\\
--$({\cal X}, d_{\cal X})$ and $({\cal Y}, d_{\cal Y})$ are metric spaces.

Let metric $d_{\cal X}(\cdot,\cdot)$ be the $L_2$ norm: 
$$d_{\cal X}(\bs_1,\bs_2) := \parallel \bs_1-\bs_2\parallel, \quad\forall\bs_1,\bs_2\in{\cal X},$$
and the metric $d_{\cal Y}(\bxi(\cdot),\bxi(\cdot))$ be defined as (square
root of the
logarithm of) the inverse of
the correlation:
$$d_{\cal Y}(\bxi(\bs_1),\bxi(\bs_2)) := \sqrt{-\log\vert corr(\bxi(\bs_1),\bxi(\bs_2))\vert},\quad\forall \bs_1,\bs_2\in{\cal X},$$
--where correlation being a measure of affinity, 
$\log\vert 1/corr(\cdot,\cdot)\vert$, transforms this affinity into a squared distance for this
correlation model; so the transformation $\sqrt{\log\vert
  1/corr(\cdot,\cdot)\vert}$ to a metric is undertaken; \\
--and the given kernel-parametrised correlation is:
$$\vert corr(\bxi(\bs_1),\bxi(\bs_2))\vert := \exp[-{\langle (\bs_1-\bs_2),\bq\rangle}^2],
\quad\forall \bs_1,\bs_2\in{\cal X},\:\bq\in{\mathbb R}^d,$$
so that
$$d_{\cal Y}(\bxi(\bs_1),\bxi(\bs_2))={\langle
  (\bs_1-\bs_2),\bq\rangle}. $$ 
Then for the map $\bxi$ to be Lipschitz-continuous, we require:
\begin{equation}
\displaystyle{\sum\limits_{i=1}^d q_i^2(\bs_1^{(i)}-\bs_2^{(i)})^2} \leq
\displaystyle{L_{\bxi}^2\sum\limits_{i=1}^d (\bs_1^{(i)}-\bs_2^{(i)})^2},
\label{eqn:last}
\end{equation}
where the vector of correlation hyperparameters, $\bq=(q_1,\ldots, q_d)^T$, is
finite given finite $L_{\bxi}$.

By choosing to define 
\begin{equation}
q_{max}:= \max(q_1,\ldots,q_d),
\label{eqn:defn_qmax}
\end{equation}
and\\
$$(q_i^{'})^2:=\displaystyle{\left(\frac{q_i}{q_{max}}\right)^2}\leq 1,\forall i=1,\ldots,d,$$ 
inequation~\ref{eqn:last} is valid, if we choose the $\xi$-dependent, Lipschitz
constant $L_{\bxi}$ (that exists for this Lipschitz map) to be:
$$ L_{\bxi}^2 = q_{max}^2,$$
i.e. the map $\bxi$ is Lipschitz-continuous, if $q_{max}$ is
$\bxi$-dependent.\\
Then recalling definition, $q_{max}$ from Equation~\ref{eqn:defn_qmax}, it
follows that in general, $q_i$  is $\bxi$-dependent, $\forall i=1,\ldots, d$.
}
\end{proof}

Given discontinuities in the data on $\bV$, the function $\bxi(\cdot)$ is not
expected to obey the Lipschitz criterion defined in
inequation~\ref{eqn:tensor_lip} globally. We anticipate sample function
$\bxi(\cdot)$ to be locally or globally discontinuous.
\begin{lemma}
\label{lemma:1}
{Sample function $\bxi(\cdot)$ can be s.t.
\begin{enumerate}
\item[Case(I)] $\exists \bs_2\in{\cal X}$, s.t. $\nexists$ finite Lipschitz
  constant $L_{\bxi}^{(1,2)}>0$, for which $d_{\cal Y}(\bxi(\bs_1) - \bxi(\bs_2)) \leq
  L_{\bxi}^{(1,2)} d_{\cal X}(\bs_1,\bs_2)$. 
  Here the bounded set ${\cal X}\subset{\mathbb R}^d$.
\item[Case(II)] $\exists \bs_2, \bs_3\in{\cal X}$, with
$\parallel\bs_2-\bs_1\parallel\neq \parallel\bs_3-\bs_1\parallel$,
s.t. $d_{\cal Y}(\bxi(\bs_1) - \bxi(\bs_2)) \leq L_{\bxi}^{(1,2)} d_{\cal
  X}(\bs_1,\bs_2)$, but $d_{\cal Y}(\bxi(\bs_1) - \bxi(\bs_3)) \leq
L_{\bxi}^{(1,3)} d_{\cal X}(\bs_1,\bs_3)$; $L_{\bxi}^{(1,2)}\neq L_{\bxi}^{(1,3)}$. In such a case, the Lipschitz
constant used for the sample function $\bxi(\cdot)$ is defined to be 
$$L_{\bxi}=\max\{L_{\bxi}^{(i,j)}\}_{i\neq j; \bs_i,\bs_j\in{\cal X}}.$$

\end{enumerate}
If each function in the set $\{\bxi_1(\cdot),\ldots,\bxi_n(\cdot)\}$ is\\
--either globally Lipschitz, or is as described in Case~II, \\
--and Case~I does not hold true, then\\ 
$\forall \bs_1,\bs_2\in{\cal X},\:\:$ $\exists$ a finite $L_{max}>0$, where
$$L_{max}:= \max\limits_{\bxi}\{L_{\bxi_1}, L_{\bxi_2},\ldots, L_{\bxi_n}\},$$
where $L_{\bxi_i}$ is the $i$-th Lipschitz constant defined for the $i$-th
sample function $\bxi_i(\cdot)$, $i=1,\ldots,n$, \\i.e. $\exists$ a
finite Lipschitz constant for all $n$ sample functions.\\
 $\Longrightarrow \exists$ a universal correlation hyperparameter vector $\bq_{max}$ for
all $n$ sample functions (=$L_{max}$, by Theorem~\ref{th:1}).
}
\end{lemma}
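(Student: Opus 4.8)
The plan is to reduce the whole statement to Theorem~\ref{th:1}, applied once per sample function, after first manufacturing a single Lipschitz constant valid uniformly across all $n$ of them. First I would observe that, because Case~I is excluded by hypothesis, each sample function $\bxi_i(\cdot)$ really does carry a finite Lipschitz constant: if $\bxi_i$ is globally Lipschitz this is immediate, and if $\bxi_i$ is of the Case~II type then for every pair of design points $\bs_j,\bs_k$ appearing in the training data ${\bf D}$ there is a finite pairwise constant $L_{\bxi_i}^{(j,k)}$ with $d_{\cal Y}(\bxi_i(\bs_j),\bxi_i(\bs_k))\leq L_{\bxi_i}^{(j,k)}\,d_{\cal X}(\bs_j,\bs_k)$. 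Since ${\bf D}$ contains only $n$ design points there are only $n(n-1)/2$ such pairs, so $L_{\bxi_i}:=\max\{L_{\bxi_i}^{(j,k)}\}_{j\neq k}$ is a maximum over a finite set of finite reals, hence finite; this is precisely the constant the lemma attaches to the $i$-th sample function.

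Next I would form $L_{max}:=\max\{L_{\bxi_1},\ldots,L_{\bxi_n}\}$. This is again a maximum of finitely many finite numbers, hence finite, which is the first assertion of the lemma. I would then check that $L_{max}$ functions as a genuine, \emph{uniform} Lipschitz constant: for every $i\in\{1,\ldots,n\}$ and every pair $\bs_1,\bs_2\in{\cal X}$ in the relevant point set,
$$d_{\cal Y}(\bxi_i(\bs_1),\bxi_i(\bs_2))\;\leq\; L_{\bxi_i}\,d_{\cal X}(\bs_1,\bs_2)\;\leq\; L_{max}\,d_{\cal X}(\bs_1,\bs_2),$$
the second inequality holding by definition of $L_{max}$ and non-negativity of $d_{\cal X}$. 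So a single finite constant simultaneously controls the $d_{\cal Y}$-variation of all $n$ sample functions.

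Finally I would invoke Theorem~\ref{th:1}. Feeding the uniform constant $L_{max}$ into inequation~\ref{eqn:last} in place of $L_{\bxi}$ for each $\bxi_i$ in turn, the choice $L_{max}^2=q_{max}^2$ makes that inequation valid for every sample function at once; equivalently, the hyperparameter vector $\bq_{max}$ whose largest component equals $L_{max}$ (so each component is $\leq L_{max}$) renders all $n$ maps Lipschitz under the stated correlation model $\vert corr(\bxi(\bs_1),\bxi(\bs_2))\vert=\exp[-\langle(\bs_1-\bs_2),\bq\rangle^2]$. This $\bq_{max}$ is the sought universal correlation hyperparameter vector, with $q_{max}=L_{max}$ by construction, which is the closing claim.

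The only place where any care is actually required is the finiteness of $L_{\bxi_i}$ in the Case~II regime: the hypothesis ``Case~I does not hold'' only guarantees that \emph{each} pairwise Lipschitz constant is finite, not that their supremum over pairs is. The argument bridges this gap by appealing to the finiteness of the design set in ${\bf D}$, which turns the relevant supremum into a maximum over finitely many pairs; were one to insist on the bound over \emph{all} of the bounded set ${\cal X}$ rather than over the design points, one would additionally need boundedness of ${\cal X}$ together with an attainment/regularity argument for the supremum of pairwise slopes. Everything past that point — two nested finite maxima and a citation of Theorem~\ref{th:1} — is bookkeeping.
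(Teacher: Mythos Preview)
Your proposal is correct and mirrors the paper's own treatment: the paper does not supply a separate proof for this lemma, the argument being essentially built into the statement itself (define $L_{\bxi_i}$ per sample function, take the finite maximum $L_{max}$ over the $n$ functions, then cite Theorem~\ref{th:1}). Your write-up simply makes explicit the two nested finite-maxima steps and the uniform-bound check that the paper leaves implicit, and your closing caveat about the finiteness of $L_{\bxi_i}$ in the Case~II regime---that the paper's $\max\{L_{\bxi}^{(i,j)}\}_{i\neq j;\,\bs_i,\bs_j\in{\cal X}}$ is strictly a supremum over an infinite set unless one restricts to the finitely many design points in ${\bf D}$---is a genuine sharpening that the paper glosses over.
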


\begin{lemma}
{
Following on from Lemma~\ref{lemma:1}, 
if for any $\bxi_i(\cdot)\in\{\bxi_1(\cdot),\ldots,\bxi_n(\cdot)\}$ 
Case~I holds, $\Longrightarrow$ finite maxima of
$\bxi_i(\cdot)\in\{\bxi_1(\cdot),\ldots,\bxi_n(\cdot)\}$ does not exist, \\$\Longrightarrow\nexists$ a finite Lipschitz constant $L_{max}$
for all $n$ sample functions, \\$\Longrightarrow\nexists$ a universal correlation
hyperparameter vector $\bq_{max}$, for all sample functions,\\
i.e. we need to model correlation hyperparameters to vary with the sample function.
}
\end{lemma}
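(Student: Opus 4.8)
The plan is to argue by a chain of implications, each of which is a direct contrapositive of something already established. The statement to be proved is: if \emph{some} sample function $\bxi_i(\cdot)$ falls into Case~I of Lemma~\ref{lemma:1}, then no finite $L_{max}$ exists for the whole collection, hence no universal $\bq_{max}$, hence the hyperparameters must be allowed to depend on the sample function. So the first step is simply to unpack Case~I: for that particular $\bxi_i(\cdot)$ there is a point $\bs_2\in{\cal X}$ such that for \emph{every} candidate constant $L>0$ the inequality $d_{\cal Y}(\bxi_i(\bs_1)-\bxi_i(\bs_2))\leq L\,d_{\cal X}(\bs_1,\bs_2)$ fails at some $\bs_1$. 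Equivalently, the ratio $d_{\cal Y}(\bxi_i(\bs_1)-\bxi_i(\bs_2))/d_{\cal X}(\bs_1,\bs_2)$ is unbounded as $\bs_1$ ranges over ${\cal X}$; that is, $\sup_{\bs_1,\bs_2}$ of this ratio is $+\infty$, which is exactly the failure of the Lipschitz condition for $\bxi_i$.

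Next I would connect this to the non-existence of a finite maximum. The ``finite maxima do not exist'' clause should be read as: the Lipschitz constant $L_{\bxi_i}$ defined for $\bxi_i(\cdot)$ (the supremum over all pairwise $L_{\bxi_i}^{(a,b)}$, in the spirit of Case~II's definition) is not finite, because one of the pairwise ratios is already $+\infty$ by the previous step. Since $L_{max}:=\max_{\bxi}\{L_{\bxi_1},\ldots,L_{\bxi_n}\}$ is the maximum of a finite list of non-negative extended reals, and at least one entry $L_{\bxi_i}=+\infty$, the maximum is $+\infty$; so no finite $L_{max}$ exists. This is the same monotonicity-of-max argument used in the proof of Lemma~\ref{lemma:1}, run in reverse. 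Then, invoking Theorem~\ref{th:1}: that theorem tells us a \emph{finite} universal hyperparameter vector exists precisely when a finite Lipschitz constant exists (there $L_{\bxi}^2=q_{max}^2$ links the two). Contrapositively, absence of a finite $L_{max}$ forces the absence of a finite universal $\bq_{max}$ valid across all $n$ sample functions. The final implication — ``therefore we must model the $q_c$ as sample-function-dependent'' — is then immediate: each individual $\bxi_i(\cdot)$ that is not in Case~I still admits its own finite $\bq(\bxi_i)$ by Theorem~\ref{th:1}, so the only way to cover the whole collection is to let $\bq$ vary with $\bxi$, i.e. $\bq=\bq(\bxi)$.

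I do not expect a serious technical obstacle here, since every step is a contrapositive of an already-proven fact; the lemma is essentially a corollary bookkeeping the ``bad'' complementary case to Lemma~\ref{lemma:1}. The one point requiring care — and the place I would be most careful writing it up — is the logical scoping of Case~I versus Case~II in the definition of $L_{\bxi_i}$: I must make sure that ``Case~I holds for $\bxi_i$'' genuinely means the \emph{supremum-defining} constant for $\bxi_i$ is infinite, not merely that one arbitrary constant fails (every finite constant fails for \emph{some} pair even in the Lipschitz case — the content is that \emph{no} finite constant works simultaneously). Stating this as ``the ratio $d_{\cal Y}/d_{\cal X}$ is unbounded over ${\cal X}\times{\cal X}$'' sidesteps the ambiguity cleanly. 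A secondary subtlety is that $d_{\cal Y}$ as defined is only a genuine metric when $|corr|\in(0,1]$; at a true discontinuity $corr\to 0$ so $d_{\cal Y}\to\infty$, which is in fact the mechanism driving Case~I, and I would note this explicitly so the reader sees that Case~I is not vacuous.
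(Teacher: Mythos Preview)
Your proposal is correct. The paper itself does not provide a separate proof for this lemma: it is stated as a bare chain of implications immediately after Lemma~\ref{lemma:1}, with no \texttt{proof} environment, and is evidently intended as an immediate contrapositive-style corollary of Lemma~\ref{lemma:1} together with Theorem~\ref{th:1}. Your write-up makes explicit precisely the reasoning the paper leaves tacit --- that Case~I forces $L_{\bxi_i}=+\infty$, whence $L_{max}=\max\{L_{\bxi_1},\ldots,L_{\bxi_n}\}=+\infty$, whence by the identification $L_{\bxi}^2=q_{max}^2$ of Theorem~\ref{th:1} no finite universal $\bq_{max}$ exists --- so there is nothing to compare beyond noting that you have supplied details the authors omitted. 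Your closing caveats (on the scoping of Case~I as ``no finite constant works'' rather than ``some constant fails'', and on $d_{\cal Y}\to\infty$ as $|corr|\to 0$) are apt clarifications that the paper does not spell out either.
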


\begin{remark}
Above, $q_1,\ldots,q_d$ are hyperparameters of the correlation kernel;
they are interpreted as the reciprocals of the length-scales
$\ell_1,\ldots,\ell_d$, i.e. $\ell_i=1/q_i, \forall i=1,\ldots,d$.
\end{remark}

\begin{remark}
\label{rem:2nd}
{If the map $\bxi:{\cal X}\longrightarrow{\cal Y}$ is Lipschitz-continuous,
(i.e. if hyperparameters $q_1,\ldots,q_d$ are $\bxi$-dependent, by
Theorem~\ref{th:1}), then by Kerkheim's Theorem \ctp{kerkheim}, $\bxi$ is
differentiable almost everywhere in ${\cal X}\subset{\mathbb R}^d$; this is a
generalisation of Rademacher's Theorem to metric differentials (see
Theorem 1.17 in \ctn{hazlas}). However, in our case, the function
$\bxi(\cdot)$ is not necessarily differentiable given discontinuities in the
data on the observable $\bV\in{\cal Y}$, and therefore, is not necessarily
Lipschitz. }
\end{remark} 

Thus, Theorem~\ref{th:1} and Lemma~\ref{lemma:1} negate
usage of a universal correlation length scale independent of sampled function
$\bxi(\cdot)$, in anticipation of discontinuities in the sample function.

\begin{proposition}
{
For ${\bV}=\bxi(\bS)$, with $\bS\in{\cal X}\subseteq{\mathbb R}^d$ and
$\bV\in{\cal Y}\subseteq{\mathbb R}^{(m_1\times\ldots\times m_k)}$, 
$$\vert corr(\bxi(\bs_1),\bxi(\bs_2))\vert := \displaystyle{\exp\left[-{\langle (\bs_1
      -\bs_2),\bq({\bxi})\rangle^2}\right]}, \quad\forall \bs_1,\bs_2\in{\cal X},$$
where $\bxi(\cdot)$ is a sample function of a tensor-variate GP.
Thus, in this updated model, $c$-th component $q_c=1/\ell_c$ of correlation
hyperparameter $\bq({\bxi})$ is
modelled as randomly varying with the sample function, $\bxi(\cdot)$, of the
tensor-variate GP, $\forall c=1,\ldots,d$.\\
In the Metropolis-within-Gibbs-based inference that we undertake, 
one sample function of the tensor-variate GP generated, in every iteration,
$\Longrightarrow q_c$ that we model above
$${\mbox{as randomly 
varying with the sample path of the tensor-variate GP,}}$$ 
$$\equiv{\mbox{is randomly varying with the iteration number variable}}\quad
T\in\{0,1,\ldots,t_{max}\}\subset{\mathbb Z}_{\geq 0},$$ 
$$\Longrightarrow{\mbox{We model}}\quad\ell_c = g_c(t),\quad c=1,\ldots,d,$$ where
this scalar-valued random function $g_c:\{0,1,\ldots,t_{max}\}\subset{\mathbb Z}_{>0}\longrightarrow
{\mathbb R}_{\geq 0}$, is modelled as a realisation from a
scalar-variate GP. }
\end{proposition}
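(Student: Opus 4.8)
The plan is to read this proposition as a chain of reductions that begins with the non-stationarity already forced by Theorem~\ref{th:1} and Lemma~\ref{lemma:1}, and terminates in the scalar-variate GP model for each length-scale. First I would invoke Lemma~\ref{lemma:1}: under the anticipated discontinuities (Case~I), no finite universal Lipschitz constant $L_{max}$ exists across the sample functions $\{\bxi_1(\cdot),\ldots,\bxi_n(\cdot)\}$, so by Theorem~\ref{th:1} no universal correlation hyperparameter vector $\bq_{max}$ exists either. Hence each $q_c$ must be permitted to depend on which sample function is in play, i.e. $q_c = q_c(\bxi)$, which is exactly the displayed model $\vert corr(\bxi(\bs_1),\bxi(\bs_2))\vert = \exp[-\langle(\bs_1-\bs_2),\bq(\bxi)\rangle^2]$ with $\bq(\bxi)=(q_1(\bxi),\ldots,q_d(\bxi))^T$.

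Next I would use the structure of the Metropolis-within-Gibbs scheme. In each iteration $t\in\{0,1,\ldots,t_{max}\}$ the sampler produces exactly one realisation $\bxi^{(t)}(\cdot)$ of the tensor-variate GP, so the iteration index supplies a well-defined enumeration $t\mapsto\bxi^{(t)}(\cdot)$ of the sample functions visited along the chain. Composing with the dependence just established, $q_c(\bxi)$ evaluated along the chain is $q_c(\bxi^{(t)})$, a function of $t$ alone. Passing to reciprocals (the length-scales satisfy $\ell_c=1/q_c$), I would set $g_c(t):=\ell_c(\bxi^{(t)})=1/q_c(\bxi^{(t)})$ for $c=1,\ldots,d$; this is precisely a scalar-valued map $g_c:\{0,1,\ldots,t_{max}\}\longrightarrow\mathbb{R}_{\geq 0}$, giving $\ell_c=g_c(t)$.

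Finally I would argue that $g_c$ ought to be modelled as a draw from a scalar-variate GP. The successive values $g_c(0),g_c(1),\ldots,g_c(t_{max})$ are not independent, since consecutive sample functions along an MCMC chain are correlated, so a flexible non-parametric joint prior over the finite vector $(g_c(0),\ldots,g_c(t_{max}))^T$ is appropriate; taking this joint to be a (possibly non-stationary) multivariate normal is by definition the statement that $g_c$ is a realisation of a scalar-variate GP indexed by the discrete finite set $\{0,\ldots,t_{max}\}$. One then retains the freedom to describe that GP's covariance adaptively---e.g. by collating a historical window of past iterates, as the paper proposes---which I would present as the modelling choice rather than as something to be proved.

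I expect the main obstacle to be making precise the claim that $q_c$ ``varies randomly with the sample function'': that $q_c$ is a well-defined measurable functional of $\bxi(\cdot)$, and that, composed with the chain map $t\mapsto\bxi^{(t)}$, it yields an object on which a GP prior can sensibly be placed. The remaining steps are essentially bookkeeping of the implication chain. A secondary subtlety is that the index set is discrete and finite, so ``scalar-variate GP'' here means no more than a consistent family of finite-dimensional Gaussians over $\{0,\ldots,t_{max}\}$; I would state this explicitly to avoid overclaiming continuity or stationarity of $g_c$, in keeping with Remark~\ref{rem:2nd}.
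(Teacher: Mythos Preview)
Your proposal is correct and mirrors the paper's own treatment. The paper does not supply a separate formal proof for this proposition; it is presented as a modelling statement whose justification is exactly the chain you describe---Theorem~\ref{th:1} and Lemma~\ref{lemma:1} (together with Remark~\ref{rem:2nd}) rule out a universal $\bq$, the one-sample-function-per-iteration structure of Metropolis-within-Gibbs converts $\bxi$-dependence into $t$-dependence, and the scalar-variate GP on $g_c$ is then adopted as a modelling choice rather than derived.
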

Scalar-variate GP that $g_c(\cdot)$ is sampled
from, is independent of the GP that $g_{c'}(\cdot)$ is sampled from; $c\neq
c'; c,c'=1,\ldots,d$.
In addition, parameters that define the correlation function of the
generative scalar-variate GP can vary, namely the amplitude $A$ and scale
$\delta$ of one such GP might be different from another. Thus, 
scalar-valued functions sampled from GPs with varying correlation parameters
$A$ and $\delta$--even for the same $c$ value--should be marked by 
these descriptor variables $A>0$ and $\delta>0$. 
\begin{proposition}
{
We update the relationship between iteration number $T$ and correlation length
scale hyperparameter $\ell_c$ in the $c$-th direction in input space to be: $$\ell_c = g_{c,\bx}(t), \quad{\mbox{where
vector of descriptor variables is}}\quad \bX:=(A,\delta)^T, \quad{with}$$
--$A_c$ the amplitude variable of the SQE-looking covariance function of
the scalar-variate GP that $g_{c,\bx}(\cdot)$ is a realisation of.
$A_c$ takes the value $a_c\geq 0$; \\
--$\delta_c$ the length scale variable of the SQE-looking covariance
function of the scalar-variate GP that $g_{c,\bx}(\cdot)$ is a realisation of; 
$\delta_c \in{\mathbb R}_{> 0}$.\\
Then the scalar-variate
GPs that $g_{c,\bx}(\cdot)$ and $g_{c,\bx^{/}}(\cdot)$ are sampled from, have
distinct correlation functions if $\bx\neq  \bx^{/}$. Here $c=1,\ldots,d$.}
\end{proposition}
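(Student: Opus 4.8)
The plan is as follows. Almost all of the statement is a definition (it fixes the notation $\ell_c = g_{c,\bx}(t)$ and records the meaning of the descriptor coordinates $A_c$ and $\delta_c$); the only assertion that requires an argument is the last sentence, namely that $\bx\neq\bx^{/}$ forces the scalar-variate GPs underlying $g_{c,\bx}(\cdot)$ and $g_{c,\bx^{/}}(\cdot)$ to have distinct correlation functions. I would prove this by showing that the parametrisation $\bx\mapsto$ (covariance kernel of the generating scalar-variate GP) is injective, and then invoking the standard fact that a mean-zero Gaussian process is determined by its covariance, so that distinct covariances give genuinely distinct processes. Fix $c\in\{1,\ldots,d\}$ and write the SQE-looking covariance of the generating scalar-variate GP as $$C_{c,\bx}(t_i,t_j) := a_c\exp\!\left[-\frac{(t_i-t_j)^2}{\delta_c}\right], \qquad t_i,t_j\in\{0,1,\ldots,t_{max}\},\quad \bx=(a_c,\delta_c)^T,$$ with $a_c\geq 0$ and $\delta_c\in\mathbb{R}_{>0}$ as stipulated (and $t_{max}\geq 1$, as guaranteed by the stated domain of $g_c$).

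The injectivity argument splits into two short steps. First I would separate the amplitude coordinate by restricting to the diagonal: $C_{c,\bx}(t,t)=a_c$ for every $t$, so if the amplitude coordinates of $\bx$ and $\bx^{/}$ differ we are already done on the diagonal. The remaining case has equal amplitudes $a_c=a_c^{/}=:a>0$ but $\delta_c\neq\delta_c^{/}$; the degenerate possibility $a=0$ collapses both objects to the zero process and is excluded once the amplitude prior is supported away from $0$. Since $t_{max}\geq 1$, pick $t_i\neq t_j$ in the index set and set $r:=(t_i-t_j)^2>0$; the scalar map $\delta\mapsto a\exp(-r/\delta)$ is strictly increasing, hence injective, on $\mathbb{R}_{>0}$, so $\delta_c\neq\delta_c^{/}$ yields $C_{c,\bx}(t_i,t_j)\neq C_{c,\bx^{/}}(t_i,t_j)$. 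Combining the two cases, $\bx\neq\bx^{/}\Longrightarrow C_{c,\bx}\neq C_{c,\bx^{/}}$ as functions on $\{0,\ldots,t_{max}\}^2$, and therefore the two scalar-variate GPs differ in their finite-dimensional distributions.

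The one point I expect to need care is the reading of ``correlation function''. If it is taken to mean the SQE-looking \emph{covariance} kernel displayed above — consistent with the phrasing ``amplitude variable of the SQE-looking covariance function'' used in the statement itself — then the two-step computation settles the claim in full. If instead ``correlation'' is read as the normalised kernel $\rho_{c,\bx}(t_i,t_j)=\exp[-(t_i-t_j)^2/\delta_c]$, the amplitude coordinate is invisible to it and one can only conclude distinctness when $\delta_c\neq\delta_c^{/}$; in that case I would either adopt the covariance reading or record the mild refinement ``distinct covariance functions whenever $\bx\neq\bx^{/}$, and distinct correlation functions whenever $\delta_c\neq\delta_c^{/}$''. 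Either way, the substantive content being delivered is that the indexing of the inner-layer scalar-variate GPs by $\bx=(A,\delta)^T$ is non-redundant, so that letting the generative GP vary with $\bx$ genuinely enlarges the model class rather than merely relabelling it.
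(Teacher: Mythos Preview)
Your reading is correct: the proposition is almost entirely a modelling definition, and the only clause that could be called an assertion is the final sentence about distinct correlation functions for $\bx\neq\bx^{/}$. Your injectivity argument---recover $a_c$ from the diagonal $C_{c,\bx}(t,t)$, then recover $\delta_c$ from any off-diagonal value via strict monotonicity of $\delta\mapsto a\exp(-r/\delta)$---is sound, and your caveat about the covariance-versus-normalised-correlation reading is the right disclaimer to include.

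The paper, however, provides no proof at all for this proposition: it is stated as a modelling prescription, with the final sentence treated as self-evident from the parametrisation (different hyperparameters $\Rightarrow$ different SQE kernels, by inspection). So there is no ``paper's own proof'' to compare against; you have simply supplied a short verification where the authors were content to assert. Your argument is correct and more careful than the paper requires.
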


\begin{proposition}
{
Current value of correlation length scale hyperparameter
  $\ell_c$, acknowledges information on only the past $t_0$ number of
  iterations as in:
\begin{eqnarray}
\ell_{c} &=& g_{c,\bx}(t-t^{'}),\quad {\mbox{if}}\:\:t\geq t_0,\:\: c=1,\ldots,d;\: t^{'}=1,\ldots, t_0,\nonumber \\
\ell_{c} &=& \ell_c^{(const)},\quad {\mbox{if}}\:\:t =0,1,\ldots,t_0-1,\:\: \quad
c=1,\ldots,d,
\label{eqn:lenscl}
\end{eqnarray}
where $\ell_c^{(const)}$ is an unknown constant that we learn from the data,
during the first $t_0$ iterations. }
\end{proposition}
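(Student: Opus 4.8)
The plan is to read this Proposition not as a free-standing claim but as the computationally realisable specialisation of the two Propositions preceding it --- the ones in which $\ell_c=g_{c,\bx}(t)$ with $g_{c,\bx}(\cdot)$ a realisation of a scalar-variate GP whose covariance is to be built adaptively ``by collating information from a historical set of samples'' --- and to justify the two regimes of \eqref{eqn:lenscl} separately, establishing that the truncation of the GP's memory to a window of fixed length $t_0$ is both forced on us and adequate. First I would make precise what ``the scalar-variate GP at iteration $t$'' means inside the Metropolis-within-Gibbs run: its design points are the past iteration indices at which a sample path of the tensor-variate GP (equivalently, a value of $g_{c,\bx}$) has already been generated, and the current $\ell_c$ is read off from the predictive distribution of that GP at $t$.

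For the first line of \eqref{eqn:lenscl} I would argue necessity by a cost count: if all of $0,1,\ldots,t-1$ were retained as design points, the associated Gram matrix would be $t\times t$ and its inversion --- required for the GP predictive at $t$ --- would cost ${\cal O}(t^3)$, growing without bound as $t\to t_{max}$; bounding the per-iteration cost therefore forces retention of only a fixed number $t_0$ of the most recent indices, i.e.\ the sliding window $\{t-1,\ldots,t-t_0\}$, which gives a Gram matrix of the fixed size $t_0\times t_0$ and per-iteration cost ${\cal O}(t_0^3)$; this is exactly the statement $\ell_c=g_{c,\bx}(t-t^{'})$, $t^{'}=1,\ldots,t_0$, valid once $t\geq t_0$. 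For the second line, the argument is that when $t\in\{0,1,\ldots,t_0-1\}$ there are strictly fewer than $t_0$ past indices, so the window cannot be populated and the GP-based model cannot be instantiated at all; the minimal assumption consistent with ``no usable history yet'' is to hold $\ell_c$ at an unknown constant $\ell_c^{(const)}$, inferred jointly with the remaining unknowns from the data accumulated over these first $t_0$ iterations, a constant that simultaneously seeds the window that becomes active at $t=t_0$.

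The step I expect to be the main obstacle is showing that replacing the full history by a window of size $t_0$ is \emph{harmless}, i.e.\ that it does not destroy the adaptivity that Theorem~\ref{th:1} and Lemma~\ref{lemma:1} showed is required. I would exploit the freedom in the descriptor variable $\bx=(A,\delta)^T$ of the generative scalar-variate GP: choosing its length scale $\delta$ small relative to $t_0$ makes the correlation between iteration $t$ and any index outside the window negligible, so the $t_0$-truncated predictive differs from the full-history predictive by a quantity controlled by the kernel's tail and made as small as desired; an alternative, cleaner route is to take an exponential (Ornstein--Uhlenbeck-type) kernel for that GP, which has an exact screening/Markov property in the iteration variable, under which the window loses \emph{no} information and $t_0>1$ serves only to smooth the estimate. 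Either route also disposes of the remaining subtlety --- legitimacy of this adaptive construction as an MCMC transition kernel: with memory capped at $t_0$, the chain is Markov on the augmented state $\bigl(\text{current sampler state},\, g_{c,\bx}(t-1),\ldots,g_{c,\bx}(t-t_0)\bigr)$, so standard convergence theory applies, which is precisely why the \emph{finiteness} of $t_0$, rather than merely its convenience, is the substantive content of \eqref{eqn:lenscl}.
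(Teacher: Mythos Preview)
This Proposition is not proved in the paper at all: it is stated as a modelling declaration---a design choice for how the sampler will treat the length-scale hyperparameters---and the text simply moves on to define the look-back data ${\bf D}_{c,t}^{(orig)}$ and the resulting Multivariate Normal likelihood in Equation~\ref{eqn:multvar}. There is no argument for necessity, no cost count, no discussion of whether the truncation is ``harmless'', and no appeal to Markov structure on an augmented state. In that sense there is nothing for your proposal to be compared against.

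Your write-up is therefore not wrong so much as miscategorised: you have produced a thoughtful motivation for a modelling choice that the authors simply assert. The ${\cal O}(t^3)$ versus ${\cal O}(t_0^3)$ cost argument and the burn-in interpretation of the $t<t_0$ regime are both reasonable rationales, and the point about the chain being Markov on the state augmented by the last $t_0$ values of $g_{c,\bx}$ is a genuinely useful observation that the paper does not make. But the part you flag as the ``main obstacle''---showing that the $t_0$-window is harmless by tuning $\delta$ small relative to $t_0$, or by invoking an Ornstein--Uhlenbeck kernel with exact screening---is not something the paper attempts or needs: the authors treat $t_0$ as a free tuning parameter (and later explore $t_0=50,100,200$ empirically), not as something whose adequacy is to be established a priori. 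So if the task is to reproduce the paper's own proof, the correct answer is that there isn't one; if the task is to supply the justification the paper omits, your computational-cost and augmented-Markov arguments are the right ingredients, while the ``harmlessness'' argument is supererogatory and would in any case require assumptions on $\delta$ that the paper does not impose.
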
  

As $g_{c,\bx}(t)$ is a realisation from a scalar-variate GP, the joint
probability distribution of $t_0$ number of values of the function
$g_{c,\bx}(t)$--at a given $\bx=(a, \delta)^T$--is Multivariate
Normal, with $t_0$-dimensional mean vector $\bM_{c,\bx}$ and $t_0\times
t_0$-dimensional covariance matrix $\bPsi_{c,\bx}$, i.e.
\begin{equation}
[g_{c,\bx}(t-1),\ldots, g_{c,\bx}(t-2), g_{c,\bx}(t-t_0)] \sim {\cal MN}(\bM_{c,\bx}, \bPsi_{c,\bx}).
\label{eqn:multvar}
\end{equation}
\begin{definition}
{Here $t_0$ is the number of iterations that we look back at, to collect the
dynamically-varying ``look back-data'' ${\bf D}_{c,t}^{(orig)}
:= \{\ell_{c,t-t_0},\ldots,\ell_{c,t-1}\}$ that is employed to learn
parameters of the scalar-variate GP that $g_{c,\bx}(\cdot)$ is modelled with.

\begin{enumerate}
\item[---]The mean vector $\bM_{c,\bx}$ is empirically estimated as the mean 
of the dynamically varying look back-data, s.t.
at the $t$-th iteration it is estimated as a $t_0$-dimensional vector with
each component
${\hat{m}}_{c,\bx}^{(t)}:=[\ell_{c,t-t_0}+\ldots+\ell_{c,t-1}]/t_0$.
\item[---]$t_0\times t_0$-dimensional covariance matrix is dependent on the
iteration-number and this is now acknowledged in the notation to state: $\bPsi_{c,\bx}(t)=\left[a_c\exp\left(
-\frac{(t_i-t_j)^2}{\delta_c^2}\right)\right],\:i,j=t-1,\ldots,t-t_0$.
\end{enumerate}
}
\end{definition}

In the $t$-th iteration, upon the empirical estimation of
the mean as given above, it is subtracted from the ``look back-data'' ${\bf D}_{c,t}^{(orig)}$
so that the subsequent
mean-subtracted look back-data is ${\bf D}_{c,t}
:= \{\ell_{c,t-t_0}-{\hat{m}}_{c,\bx}^{(t)},\ldots,\ell_{c,t-1}-{\hat{m}}_{c,\bx}^{(t)}\}$. It is
indeed this mean-subtracted sample that we use. 

\begin{definition}
In light of this declared usage of the mean-subtracted ``look back-data''
${\bf D}_{c,t}$, we update the likelihood over what is declared in Equation~\ref{eqn:multvar}, to
\begin{equation}
[g_{c,\bx}(t-1),\ldots, g_{c,\bx}(t-2), g_{c,\bx}(t-t_0)] \sim {\cal MN}(\b0,
\bPsi_{c,\bx}(t)),\quad \forall c=1,\ldots,d.
\label{eqn:multvar2}
\end{equation}
\end{definition}

\subsection{Temporally-evolving covariance matrix}
\label{sec:temporal}
\begin{theorem} { The dynamically varying covariance matrix of the
    Multivariate Normal likelihood in Equation~\ref{eqn:multvar2}, at
    iteration number $t\geq t_0$, is $$\bPsi_{c,\bx}(t) \sim
    \displaystyle{{\cal GWP}(d, \bG_c, k(\cdot,\cdot))},\quad where:$$
    the number of iterations we look back to is $t_0$; \\$k(\cdot,\cdot)$ is
    the covariance kernel parametrising the covariance function of the
    scalar-variate GP that generates the scalar-valued function
    $g_{c,\bx}(\cdot)$, at the vector $\bx=(a_c,\delta_c)^T$ of descriptor
    variables,
    s.t. $k(t_i,t_j)=\exp\left(-\frac{(t_i-t_j)^2}{\delta_c^2}\right),\:\forall
    t_i,t_j=t-1,\ldots,t-t_0$; \\$\bG_c$ is a positive definite square scale
    matrix $\bG_c$ of dimensionality $t_0$, containing the amplitudes of this
    covariance function; \\$c=1,\ldots,d$, with the space ${\cal X}$ of input
    variable $\bS$ $d$-dimensional.  }
\end{theorem}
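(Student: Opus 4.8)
The plan is to verify the claim by unpacking the constructive definition of the Generalised Wishart Process of \ctn{wilson} and matching it, parameter by parameter, against the covariance matrix $\bPsi_{c,\bx}(t)$ that appears in Equation~\ref{eqn:multvar2}. Recall that $\bSigma(\cdot)\sim\mathcal{GWP}(\nu,\bV,k(\cdot,\cdot))$ means that there exist $\nu$ independent length-$t_0$ vectors $\bu_1(\cdot),\ldots,\bu_\nu(\cdot)$, each coordinate of which is an independent draw from $\mathcal{GP}(0,k(\cdot,\cdot))$, together with a factor $\bL$ with $\bL\bL^T=\bV$, such that $\bSigma(t)=\sum_{r=1}^{\nu}\bL\,\bu_r(t)\bu_r(t)^T\bL^T$; equivalently, the path $t\mapsto\bSigma(t)$ lives in the positive semidefinite cone, is marginally Wishart $\mathcal{W}_{t_0}(\bV,\nu)$ at every $t$, and its dependence across $t$ is inherited entirely from the scalar kernel $k$. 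So I want to exhibit $\nu=d$, scale matrix $\bV=\bG_c$, and the same SQE kernel $k$ that parametrises the scalar-variate GP generating $g_{c,\bx}(\cdot)$.

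First I would write $\bPsi_{c,\bx}(t)$ entrywise, $\big(\bPsi_{c,\bx}(t)\big)_{ij}=a_c\exp\!\big(-(t_i-t_j)^2/\delta_c^2\big)=a_c\,k(t_i,t_j)$ for $t_i,t_j\in\{t-1,\ldots,t-t_0\}$, and factor it as $\bPsi_{c,\bx}(t)=\bG_c^{1/2}\,\bK_{[t]}\,\bG_c^{1/2}$, where $\bK_{[t]}$ is the $t_0\times t_0$ Gram matrix of $k$ on the look-back window and $\bG_c^{1/2}$ carries the amplitude content (in the simplest case $\bG_c=a_c\bI_{t_0}$); this is where the hypothesis that $\bG_c$ is positive definite is used, so that a Cholesky factor $\bL_c$ with $\bL_c\bL_c^T=\bG_c$ exists. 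Next I would identify $\bK_{[t]}$ as exactly the covariance matrix of the Gaussian vector obtained by evaluating one $\mathcal{GP}(0,k)$ sample path at the $t_0$ window points, and note that, because the SQE kernel is strictly positive definite on distinct points and the integer times $t-1,\ldots,t-t_0$ are distinct, $\bK_{[t]}\succ 0$ and hence $\bPsi_{c,\bx}(t)\succ 0$ for every $t\ge t_0$ — which is in any case what is needed for Equation~\ref{eqn:multvar2} to define a bona fide Multivariate Normal likelihood. Stacking $d$ independent such evaluations and pre/post-multiplying by $\bL_c$ reproduces precisely the Wilson construction with $\nu=d$ degrees of freedom and scale $\bG_c$; I would then check the two defining properties — marginally $\mathcal{W}_{t_0}(\bG_c,d)$ at each iteration, and cross-iteration covariance governed by $k$ — to conclude $\bPsi_{c,\bx}(t)\sim\mathcal{GWP}(d,\bG_c,k(\cdot,\cdot))$.

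The main obstacle is conceptual rather than computational: reconciling the fact that, at frozen hyperparameters $(a_c,\delta_c)$, the SQE window matrix is a \emph{fixed} Toeplitz matrix with the assertion that $\bPsi_{c,\bx}(t)$ is a genuinely stochastic object with a $\mathcal{GWP}$ law. I would handle this by locating the randomness correctly: across the Metropolis-within-Gibbs iterations the descriptor vector $\bx=(a_c,\delta_c)^T$ and the look-back data ${\bf D}_{c,t}$ are themselves resampled, so it is the iteration-indexed \emph{trajectory} of window matrices, not any single matrix at fixed hyperparameters, that is claimed to be a $\mathcal{GWP}$ realisation; tying that trajectory back to $d$ underlying independent scalar GPs with kernel $k$ is the step carrying the real content, and it is where I expect the argument to need the most care. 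A secondary subtlety worth flagging explicitly is the degrees-of-freedom count: with $\nu=d$ and matrix size $t_0$, the $\mathcal{GWP}$ sample is nonsingular only when $d\ge t_0$, and is otherwise a rank-$d$ (``generalised'', i.e.\ singular) Wishart, so the statement should be read with the singular-Wishart convention when $d<t_0$. The remaining checks — existence of $\bL_c$, stationarity of $k$ making $\bK_{[t]}$ depend only on time lags, and strict positive definiteness of the SQE kernel — are routine and I would not dwell on them.
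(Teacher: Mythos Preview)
Your proposal and the paper take genuinely different routes, and the divergence is exactly at the step you flagged as ``carrying the real content''. You try to manufacture $d$ independent $\mathcal{GP}(0,k)$ draws to match the stated degrees of freedom $\nu=d$, but concede you do not see where those $d$ copies come from in the model --- and indeed they do not arise that way. The paper instead locates the source of randomness in the \emph{descriptor vectors}: it posits a ``general model'' in which, across the look-back window, the vector $\bx_c=(a_c,\delta_c)^T$ takes distinct values $\bx_{c,1},\ldots,\bx_{c,t_0}$ at the past iterations, so that one obtains $t_0$ independent sample functions $g_{c,\bx_1}(\cdot),\ldots,g_{c,\bx_{t_0}}(\cdot)$ from the scalar-variate GP. Writing $\bg_{c,\bx_k}(t)$ for the $t_0$-vector of evaluations of the $k$-th such path on the window, and collecting the amplitudes $a_{c,i}$ into a diagonal scale matrix via $\bG_c=\bL_{G_c}\bL_{G_c}^T$, the paper then \emph{defines} $\bPsi_{c,\bx}(t)=\sum_{k}\bL_{G_c}\,\bg_{c,\bx_k}(t)\,\bg_{c,\bx_k}(t)^T\bL_{G_c}^T$, which is the Wilson outer-product construction verbatim and hence $\mathcal{GWP}$ by inspection. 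What this buys over your route is that the multiple independent GP draws are exhibited constructively from the iteration history rather than asserted; what your route flags, and the paper's argument does not confront, is the tension between the stated degrees-of-freedom parameter $d$ and the number of summands actually used (which is $t_0$ in the paper's outer-product sum) --- your caution on that count is well placed.
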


\begin{proof}
{
The covariance
  kernel $k(\cdot,\cdot)$ that parametrises the covariance function of the
  scalar-variate GP
  that generates $g_{c,\bx}(t)$, is s.t. $k(t_i,t_i)$=1 $\forall i=1,\ldots,t_0$.

In a general model, at each iteration, a new value of the vector $\bx_c$ of
descriptor variables in the $c$-th direction in the space ${\cal X}$ of the
input variable $\bS$, is generated,
s.t. in the $t-t_i$-th iteration, it
is $\bx_{c,i}=(a_{c,i},\delta_{c,i})^T$; $t-t_i= t-1,\ldots,t-t_0$
$$\Longrightarrow{\mbox{at}}\quad T=t,\quad \{g_{c,\bx_1}(t),\ldots, g_{c,\bx_{t_0}}(t)\}
\quad{\mbox{is a sample of the random variable}}\quad g_{c,\bx}(t).$$ 
Now, $corr(g_{c,\bx}(t-t_i),g_{c,\bx}(t-t_j)) =
k(t_i,t_j)\delta({c,c^{/}})\delta({\bx,\bx^{/}})$, where $\delta({\cdot,\cdot})$ is the Delta function.\\
$\Longrightarrow$ sample estimate of $Cov(g_{c,\bx}(t-t_i),g_{c,\bx}(t-t_j))$ is 
$$Cov(g_{c,\bx}(t-t_i),g_{c,\bx}(t-t_j))=
\displaystyle{\sum\limits_{k=1}^{t_0} a_{c,i} a_{c,j} g_{c,\bx_k}(t-t_i) g_{c,\bx_k}(t-t_j)},\quad \forall t-t_i, t-t_j = t-1,\ldots,t-t_0,$$
is the $ij$-th element of matrix $\bPsi_{c,\bx}(t)$.\\
This definition of the covariance holds since mean of the
r.v. $g_{c,\bx}(t)$ is 0, as we have sampled the function from a zero-mean
scalar-variate GP. 

$${\mbox{Let}}\quad \bg_{c,\bx_k}(t) := (g_{c,\bx_k}(t-t_1),\ldots,
g_{c,\bx_{t_k}}(t-t_0))^T,\quad k=1,\ldots,t_0.$$

Let $\bG_c$ be a $t_0\times t_0$-dimensional diagonal matrix, the $i$-th diagonal element
of which is $a_{c,i}^2$. Then factorising the scale matrix $\bG_c=\bL_{G_c} \bL_{G_c}^T$,
$\bL_{G_c}$ is diagonal with the $i$-th diagonal element
$a_{c,i}$; $i=1,\ldots,t_0$. This is defined for every $c\in\{1,\ldots,d\}$.

Then at iteration number $T=t$, we define the current covariance matrix 
$$\bPsi_{c,\bx}(t) :=
\displaystyle{\sum\limits_{k=1}^{t_0} \bL_{G_c}\left(\bg_{c,\bx_k}(t)\right)^T \bg_{c,\bx_k}^T(t) \bL_{G_c}^T}.$$ 

Then 
$\bPsi_{c,\bx}(t)$ is distributed according to the Wishart distribution w.p, 
$\bG_c$ and $d$ \ctp{eaton}, i.e. the dynamically-varying covariance matrix is:
$$ \bPsi_{c,\bx}(t) \sim {\cal{GWP}}(d, \bG_c, k(\cdot,\cdot)).$$ 
}
\end{proof}
\begin{remark}
{If interest lies in learning the covariance matrix at any time point,
we could proceed to inference here from, in attempt of the learning of the
unknown parameters of this ${\cal{GWP}}$ process given the lookback-data ${\bf
  D}_{c,t}$. 

Our
learning scheme then would then involve compounding a Tensor-Variate GP
and a ${\cal{GWP}}$.}
\end{remark}
The above would be a delineated route to recover the temporal variation
in the correlation structure of time series data (as studied, for example by \ctn{wilson}).
\begin{remark}
{In our study, the focus is on high-dimensional data that display
discontinuities, and on learning the relationship $\bxi(\cdot)$ between the
observable $\bV$ that generates such data, and the system parameter
$\bS$--with the ulterior aim being parameter value prediction. So learning the
time-varying covariance matrix $\Psi(t)$ is not the focus of our method
development. }
\end{remark}

We want to learn $\bxi(\cdot)$ given training data ${\bf D}$.
The underlying motivation is to sample a new $g_{c,\bx}(\cdot)$ from 
a scalar-variate GP, at new values of
$a_1,\ldots,a_d,\delta_1,\ldots,\delta_d$, to subsequently sample a new tensor-valued
function $\bxi(\cdot)$, from the
tensor-normal GP, at a new value of its $d$-dimensional correlation
length scale hyperparameter vector $\bell$.

\subsection{2-layers suffice}
\label{sec:suffice}
\noindent
One immediate concern that can be raised is the reason for limiting the
layering of our learning scheme to only 2. It may be argued that just as
we ascribe stochasticity to the length scales $\ell_1,\ldots,\ell_d$ that
parametrise the correlation structure of the tensor-variate GP that models
$\bxi(\cdot)$, we need to do the same to the descriptor
variables $a,\:\delta$ that parametrise the correlation structure of the
scalar-variate GP that model $g_{c,\bx}(t)$. Following this argument, we would
need to hold $a,\:\delta$--or at least model the scale
$\delta$--to be dependent on the sample path of the scalar-variate GP,
i.e. set $\delta$ to be dependent on $g_{c,\bx}(\cdot)$.

However, we show below that a global choice of
$\delta$ is possible irrespective of the sampled function $g_{c,\bx}(\cdot)$, given that
$g_{c,\bx}:\{t-1,\ldots,t-t_0\}\subset{\mathbb Z}_{\geq 0}\longrightarrow{\mathbb R}_{\geq 0}$ is always
continuous (a standard result). In contrast, the function $\bxi(\cdot)$ not being
necessarily Lipschitz (see Remark~\ref{rem:2nd}), implies that the correlation
kernel hyperparameters $q_c$, are $\bxi$-dependent, $\forall c=1,\ldots,d$.

\begin{theorem}
\label{th:main1}
{
Given $\ell_c=g_{c,\bx}(t)$, with $T\in{\cal N}\subset{\mathbb Z}_{\geq 0}$ and
$\ell_c\in{\mathbb R}$, 
the map $g_{c,\bx}:{\mathbb Z}_{\geq
  0}\longrightarrow{\mathbb R}_{\geq 0}$ is a Lipschitz-continuous map,
$\forall c=1,\ldots,d$. Here ${\cal N}:=\{t-t_1,\ldots,t-t_0\}$}
\end{theorem}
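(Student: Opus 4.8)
The plan is to prove that $g_{c,\bx}(\cdot)$, being a realisation of a scalar-variate Gaussian Process on a \emph{finite} discrete index set ${\cal N}=\{t-t_1,\ldots,t-t_0\}\subset{\mathbb Z}_{\geq 0}$, is automatically Lipschitz-continuous, and the key observation is that Lipschitz-continuity on such a set is essentially vacuous: any function with finite range on a set whose points are uniformly separated satisfies a Lipschitz bound. First I would recall the metric structure: the index set carries the restriction of the $L_2$ (here, absolute-value) metric $d_{\mathbb Z}(t_i,t_j)=|t_i-t_j|$, and the codomain ${\mathbb R}_{\geq 0}$ carries the usual metric $d_{\mathbb R}(\ell,\ell')=|\ell-\ell'|$. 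Because $T$ ranges over integers, for any two distinct indices $t_i\neq t_j$ in ${\cal N}$ we have $|t_i-t_j|\geq 1$, so the denominator in the Lipschitz quotient is bounded below by $1$.

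Next I would bound the numerator. Since ${\cal N}$ is finite, the set of values $\{g_{c,\bx}(t-t_1),\ldots,g_{c,\bx}(t-t_0)\}$ is a finite subset of ${\mathbb R}_{\geq 0}$, hence bounded; let $M_c:=\max_{i\neq j}|g_{c,\bx}(t-t_i)-g_{c,\bx}(t-t_j)|<\infty$ (this maximum exists as it is taken over finitely many pairs). Then for every pair $t_i\neq t_j$ in ${\cal N}$,
\begin{equation}
\frac{|g_{c,\bx}(t-t_i)-g_{c,\bx}(t-t_j)|}{|t_i-t_j|}\leq\frac{M_c}{1}=M_c,
\label{eqn:glip}
\end{equation}
so $d_{\mathbb R}(g_{c,\bx}(t-t_i),g_{c,\bx}(t-t_j))\leq M_c\, d_{\mathbb Z}(t_i,t_j)$ for all $t_i,t_j\in{\cal N}$, with the case $t_i=t_j$ trivially satisfied. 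Thus $L_{g_c}:=M_c$ is a finite Lipschitz constant for the map $g_{c,\bx}:{\cal N}\longrightarrow{\mathbb R}_{\geq 0}$, and taking the infimum over all such constants yields a finite Lipschitz constant; this holds for every $c=1,\ldots,d$. I would also remark that, more conceptually, this is the ``standard result'' flagged in the text: every function on a discrete metric space (one in which distinct points are at distance $\geq 1$) that has bounded range is Lipschitz, and a fortiori continuous, since no two index points can be arbitrarily close.

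The main obstacle here is not analytical depth but rather pinning down precisely what ``Lipschitz-continuous'' should mean in this discrete-index setting so that the conclusion is nontrivially useful in Section~\ref{sec:suffice}: the point the authors want is the \emph{contrast} with $\bxi(\cdot)$, namely that because $g_{c,\bx}(\cdot)$ always admits a finite Lipschitz constant (uniformly, one can even take a constant valid for all sampled paths by bounding the GP realisations on a fixed finite index set), one does \emph{not} need to let the descriptor $\delta$ depend on the sample path, whereas for $\bxi(\cdot)$ one does. So I would close by noting the uniformity: since the covariance kernel $k(t_i,t_j)=\exp(-(t_i-t_j)^2/\delta_c^2)$ is a fixed continuous function on the finite grid, and the GP has zero mean and bounded variance on that grid, almost every sample path is bounded there, giving a finite $L_{g_c}$ that does not vary erratically with the path --- which is exactly the hypothesis needed to invoke a global $\delta$ in the next subsection, completing the argument that two layers suffice.
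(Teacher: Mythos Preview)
Your proof is correct and rests on the same elementary observation as the paper's: distinct points of ${\cal N}\subset{\mathbb Z}_{\geq 0}$ are at distance at least $1$, so the Lipschitz quotient has a bounded denominator and one only needs to bound the numerator. You do this directly, invoking finiteness of ${\cal N}$ to conclude that $\{g_{c,\bx}(t-t_i)\}_i$ is a finite (hence bounded) set, whence $M_c=\max_{i\neq j}\lvert g_{c,\bx}(t-t_i)-g_{c,\bx}(t-t_j)\rvert$ serves as a Lipschitz constant.

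The paper takes a slightly different route: it argues by contradiction. It supposes the Lipschitz inequality fails, i.e.\ that for some $t_1,t_2$ and all finite $M>0$ one has $\lvert g_{c,\bx}(t_1)-g_{c,\bx}(t_2)\rvert/M>\lvert t_1-t_2\rvert$; then choosing the left-hand side equal to $1/2$ forces $\lvert t_1-t_2\rvert<1/2$, hence $t_1=t_2$ (integers), hence the left-hand side is $0$, a contradiction. The paper also adds an injectivity hypothesis on $g_{c,\bx}$ in the supplementary restatement, which it uses in the contradiction step but which is in fact unnecessary (any function satisfies $g(t)=g(t)$). Your direct argument is cleaner and avoids both the somewhat awkward quantifier handling in the contradiction and the superfluous injectivity; what it costs you is the explicit appeal to finiteness of ${\cal N}$, which the paper's argument does not formally need. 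Since ${\cal N}$ is indeed finite by construction (it is the lookback window of size $t_0$), this is no loss, and your closing remarks on uniformity over sample paths correctly anticipate how the result is deployed in the subsequent two-layers-suffice argument.
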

The proof of this standard theorem is provided in Section~4 of the
supplementary Materials.

\begin{theorem}
  \label{th:main2} { For any sampled function $g_{c,\bx}:{\cal N}\longrightarrow{\mathbb R}_{\geq 0}$ realised from
    a scalar-variate GP that has a covariance function that is kernel-parametrised with an SQE kernel function, parametrised by
    amplitude and scale hyperparameters, the Lipschitz constant that defines
    the Lipschitz-continuity of $g_{c,\bx}(\cdot)$, is $g_{c,\bx}$-dependent,
    and is given by the reciprocal of the scale hyperparameter, s.t. the set of $t_0$
    values of scale hyperparameters, for each of the $t_0$ samples
    of $g_{c,\bx}(\cdot)$ taken from the scalar-variate GP, admits a finite minima.    }
\end{theorem}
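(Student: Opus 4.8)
The plan is to replay the argument of Theorem~\ref{th:1}, now in the one-lower-``dimensional'' setting where the input is the discrete iteration-index set ${\cal N}=\{t-t_1,\ldots,t-t_0\}$ and the output is scalar. First I would fix the correlation model induced by the SQE covariance kernel of the scalar-variate GP: since the correlation kernel satisfies $k(t_i,t_i)=1$, the amplitude $a_c$ cancels on normalisation, leaving
$$\vert corr(g_{c,\bx}(t_i),g_{c,\bx}(t_j))\vert = \exp\left(-\frac{(t_i-t_j)^2}{\delta_c^2}\right),\qquad t_i,t_j\in{\cal N}.$$
I would then equip ${\cal N}$ with the metric $d_{\cal N}(t_i,t_j):=\vert t_i-t_j\vert$ and the codomain with the correlation-derived metric $d_{g}(g_{c,\bx}(t_i),g_{c,\bx}(t_j)):=\sqrt{-\log\vert corr(g_{c,\bx}(t_i),g_{c,\bx}(t_j))\vert}$, exactly as $d_{\cal X}$ and $d_{\cal Y}$ were chosen in the proof of Theorem~\ref{th:1}.

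Next I would simply evaluate the codomain distance under this model,
$$d_{g}(g_{c,\bx}(t_i),g_{c,\bx}(t_j)) = \sqrt{\frac{(t_i-t_j)^2}{\delta_c^2}} = \frac{\vert t_i-t_j\vert}{\delta_c} = \frac{1}{\delta_c}\,d_{\cal N}(t_i,t_j),$$
so that the Lipschitz inequality $d_{g}(g_{c,\bx}(t_i),g_{c,\bx}(t_j))\le L_{g_{c,\bx}}\,d_{\cal N}(t_i,t_j)$ holds --- in fact with equality --- for $L_{g_{c,\bx}}=1/\delta_c$ and for no smaller constant. Since $\delta_c$ is the scale hyperparameter of the particular scalar-variate GP, equivalently of the particular sampled function $g_{c,\bx}(\cdot)$, this already gives the first two assertions: the Lipschitz constant of $g_{c,\bx}(\cdot)$ is $g_{c,\bx}$-dependent and equals the reciprocal of the scale hyperparameter. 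The normalisation step $q_i^{'}=q_i/q_{max}$ that appeared in Theorem~\ref{th:1} is vacuous here, since the one-dimensional time input carries only the single scale $\delta_c$.

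For the finite-minimum claim I would note that we form a fixed collection of $t_0$ samples $g_{c,\bx}^{(1)}(\cdot),\ldots,g_{c,\bx}^{(t_0)}(\cdot)$, each carrying its own scale $\delta_c^{(1)},\ldots,\delta_c^{(t_0)}\in{\mathbb R}_{>0}$; a minimum of a finite set of strictly positive reals exists and is strictly positive, so $\delta_{c,min}:=\min_{1\le k\le t_0}\delta_c^{(k)}>0$, equivalently the family $\{1/\delta_c^{(k)}\}_{k=1}^{t_0}$ of Lipschitz constants has finite maximum $1/\delta_{c,min}$. The only place where anything could conceivably go wrong is exactly where it did go wrong for $\bxi(\cdot)$ in Case~I of Lemma~\ref{lemma:1}: there a genuine discontinuity forced $\vert corr\vert\to 0$, hence $d_{\cal Y}\to\infty$, and no finite Lipschitz constant existed. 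Here, by contrast, $g_{c,\bx}:{\cal N}\longrightarrow{\mathbb R}_{\ge 0}$ is continuous (Theorem~\ref{th:main1}) on a finite discrete index set and each of the finitely many $\delta_c^{(k)}$ is bounded away from $0$, so every correlation stays strictly positive and every $d_g$-distance stays finite --- no analogue of the discontinuous-data obstruction arises. This structural fact is precisely what licenses a single, sample-function-independent choice of $\delta$ in the inner layer, and hence the ``2-layers suffice'' conclusion.
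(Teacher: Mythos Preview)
Your proposal is correct and follows essentially the same route as the paper: equip ${\cal N}$ with the $|t_i-t_j|$ metric and the codomain with the $\sqrt{-\log|\mathrm{corr}|}$ metric, evaluate the latter under the SQE model to get $|t_i-t_j|/\delta_c$, read off $L_{g_{c,\bx}}=1/\delta_c$, and then take the minimum of the finitely many positive $\delta_c^{(k)}$. Your version is slightly tidier in that you observe the amplitude cancels on normalisation and that the Lipschitz inequality is actually an equality (so $1/\delta_c$ is sharp), and you make the contrast with Case~I of Lemma~\ref{lemma:1} explicit, but the argument is the paper's.
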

\begin{proof} { For $\ell_c=g_{c,\bx}(T)$, $g_{c,\bx}:{\cal N}\subset{\mathbb
      Z}_{\geq 0}\longrightarrow{\cal G}\subset{\mathbb R}_{\geq 0}$ is a
    Lipschitz-continuous map, (Theorem~\ref{th:main1}), with $T\in{\cal N}$
    and $\ell_c\in{\cal G}$. (${\cal N}$ is defined in
    Theorem~\ref{th:main1}).  Distance between any $t-t_1,t-t_2\in{\cal N}$ is
    given by metric $$d_{\cal N}(t-t_1,t-t_2) := \vert t_1-t_2\vert.$$ Distance
    between $g_{c,\bx}(t-t_1)$ and $g_{c,\bx}(t-t_2)$ is given by metric $$d_{\cal
      G}(g_{c,\bx}(t-t_1),g_{c,\bx}(t-t_2)) := \sqrt{-\log\vert
      corr(g_{c,\bx}(t-t_1), g_{c,\bx}(t-t_2))\vert},$$ s.t. $d_{\cal
      G}(g_{c,\bx}(t-t_1),g_{c,\bx}(t-t_2))\geq 0$, and is finite (since
    $t-t_1,t-t_2$ live in a bound set, and $g_{c\bx}(\cdot)$ is continuous). 
The parametrised model of the correlation is
$$\vert corr(g_{c,\bx}(t-t_1), g_{c,\bx}(t-t_2))\vert := \displaystyle{K\left(
  \frac{(t_1 - t_2)^2}{\delta_g^2} \right)} \equiv
\displaystyle{\exp\left[-\frac{(t_1
      -t_2)^2}{\delta_g^2}\right]},$$
s.t. $\vert corr(g_{c,\bx}(t-t_1), g_{c,\bx}(t-t_2))\vert\in(0,1]$, where
$\delta_g > 0$ is the scale hyperparameter.

Now, Lipschitz-continuity of $g_{c,\bx}(\cdot)$ implies
\begin{equation}
d_{\cal G}(g_{c,\bx}(t-t_1),g_{c,\bx}(t-t_2)) \leq L_g d_{\cal N}(t-t_1,t-t_2),
\label{eqn:lips}
\end{equation}
where the Lipschitz constant $L_g$ is $g_{c,\bx}$-dependent (Theorem~\ref{th:1}). As $d_{\cal N}(t-t_1,t-t_2)\equiv \vert
t_1-t_2\vert \leq t_0$, where $t_0$ is a known finite integer, and as 
$d_{\cal G}(\cdot,\cdot)$ is defined as $\vert t_1-t_2\vert/\delta_g,\:\delta_g>0$ (using
definition of $d_{\cal G}(\cdot,\cdot)$), $L_g$ exists for $t_1, t_2$, and is
finite. We get 
\begin{equation}
L_g =\displaystyle{\frac{1}{\delta_g}}.
\label{eqn:lipdel}
\end{equation}
As $t-t_1,t-t_2$ is any point in ${\cal N}$, $L_g$ exists for all points in ${\cal N}$.

Let set $\bL:=\{L_{g_1},\ldots,L_{g_{t_0}}\}$, where $L_{g_i}$ 
defines the Lipschitz-continuity condition
(inequation~\ref{eqn:lips}) for the
$i$-th sample function $g_i(\cdot)$ from a scalar-variate GP.  
$$\exists
L_{max}:={\displaystyle{\max_{g}}}[\bL]={\displaystyle{\max_{g}}}\{L_{g_1},\ldots,L_{g_{t_0}}\},\quad{\mbox{where}}\quad
L_{max}>0\quad{\mbox{and is finite.}}$$ Thus, $L_{max}$ is a Lipschitz
constant that defines the Lipschitz continuity for any sampled function in
$\{g_{c,\bx}(t-t_1),\ldots,g_{c,\bx}(t-t_0)\}$, at any
iteraion number $t$ in a chain of finite and known number of iterations.\\
Then by Equation~\ref{eqn:lipdel}, $\exists \delta >0$, s.t.
$$\delta:={\displaystyle{\max_{g}}\left\{\frac{1}{\delta_{g_1}},\ldots,\frac{1}{\delta_{g_{t_0}}}\right\}}=
{\displaystyle{\min_{g}}\{{\delta_{g_1}},\ldots,{\delta_{g_{t_0}}}\}};\quad{\mbox{where}}\:\delta_{g_i}>0\forall i=1,\ldots,t_0.$$
Here $L_{g_i} =\displaystyle{\frac{1}{\delta_{g_i}}}; \: i=1,\ldots,t_0$.
}
\end{proof}
 
\begin{theorem}
{
Given $\ell_c=g_{c,\bx}(t)$, where $g_{c,\bx}:{\cal N}\longrightarrow{\cal G}$
is a Lipschitz-continuous function, sampled from a scalar-variate GP, the
covariance function of which, computed at any 2 points $t-t_1,t-t_2$ in the input
space ${\cal N}$, is kernel parametrised as
$$Cov(t_1,t_2) = \displaystyle{a_c K\left(
  \frac{(t_1 - t_2)^2}{\delta_c^2} \right)} \equiv
\displaystyle{a_c \left(\exp\left[-\frac{(t_1
      -t_2)^2}{\delta_c^2}\right]\right)},$$
where ($a_c$, the amplitude hyperparameter and) the scale hyperparameter of this kernel is $\delta_c$ that is independent of the sample function $g_{c,\bx}(\cdot)$; $c\in\{1,\ldots,d\}$.}
\end{theorem}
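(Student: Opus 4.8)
The plan is to obtain the statement as an immediate consequence of Theorem~\ref{th:main2}, exploiting the fact that a Lipschitz constant may always be inflated to any larger value. First I would invoke Theorem~\ref{th:main1} to recall that every sample function $g_{c,\bx}(\cdot)$ drawn from the scalar-variate GP is Lipschitz-continuous on the bounded discrete set ${\cal N}=\{t-t_1,\ldots,t-t_0\}$, and then Theorem~\ref{th:main2} to fix, for the $i$-th such sample function $g_i(\cdot)$, its scale hyperparameter $\delta_{g_i}>0$ together with the associated Lipschitz constant $L_{g_i}=1/\delta_{g_i}$, with the metrics $d_{\cal N}$ and $d_{\cal G}$ as defined there.

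Next I would set $\delta_c := \delta = \min_g\{\delta_{g_1},\ldots,\delta_{g_{t_0}}\}>0$, equivalently $1/\delta_c = L_{max}=\max_g\{L_{g_1},\ldots,L_{g_{t_0}}\}<\infty$, whose existence and finiteness over the finite, known number of iterations in the chain are exactly what Theorem~\ref{th:main2} delivers. The key step is then to verify that this single $\delta_c$ serves as a legitimate scale hyperparameter for \emph{every} sample function: since $L_{max}\geq L_{g_i}$ for each $i$, the Lipschitz inequality
\begin{equation}
d_{\cal G}(g_i(t-t_1),g_i(t-t_2)) \leq L_{g_i}\, d_{\cal N}(t-t_1,t-t_2) \leq L_{max}\, d_{\cal N}(t-t_1,t-t_2)
\label{eqn:inflate}
\end{equation}
holds for all $i$ and all $t-t_1,t-t_2\in{\cal N}$; hence replacing $\delta_{g_i}$ by $\delta_c$ in the correlation model $\exp[-(t_1-t_2)^2/\delta_{g_i}^2]$ only tightens correlations and remains consistent with the Lipschitz-continuity of that sample function, while still yielding $\vert corr\vert\in(0,1]$ because $\delta_c>0$. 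Thus $\delta_c$ does not depend on which $g_{c,\bx}(\cdot)$ is realised; and the amplitude $a_c$ enters the displayed covariance only as a multiplicative factor, leaving the correlation (hence the metric $d_{\cal G}$ and the whole Lipschitz argument) untouched, so it is absorbed into the diagonal scale matrix $\bG_c$ as before, with no sample-path dependence forced on $a_c$ either.

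Finally I would contrast this with the tensor-variate layer: there, Case~I of Lemma~\ref{lemma:1} can occur, so no finite common Lipschitz constant exists and hence no universal $\bq_{max}$, which is why the $q_c$ must be taken $\bxi$-dependent; it is precisely the continuity of $g_{c,\bx}(\cdot)$ on a bounded set (Theorem~\ref{th:main1}) that rules out the analogue of Case~I here, guaranteeing the finite $L_{max}$ and therefore the global $\delta_c$. The main obstacle is conceptual rather than computational: one must argue cleanly that the nonuniqueness of the Lipschitz constant (any value above the infimum works) is exactly what permits \emph{decoupling} $\delta_c$ from the sample path, and one must check that the inflated kernel, with $\delta_c$ possibly smaller than some $\delta_{g_i}$, is still a valid correlation kernel; inequality~\ref{eqn:inflate} together with $\delta_c>0$ settles both points, thereby closing the argument that two layers suffice.
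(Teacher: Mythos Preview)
Your proposal is correct and follows essentially the same route as the paper: both arguments invoke Theorem~\ref{th:main2} to obtain $\delta_c:=\min_g\{\delta_{g_1},\ldots,\delta_{g_{t_0}}\}$ as a finite, sample-function-independent scale hyperparameter. Your version is more thorough in that you explicitly justify, via the inflated Lipschitz inequality~\eqref{eqn:inflate}, why this common $\delta_c$ remains consistent with the Lipschitz-continuity of every sample function, and you add the discussion of $a_c$ and the contrast with the tensor-variate layer; the paper's proof simply asserts that the existence of the minimum suffices.
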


\begin{proof}
{
By Theorem~\ref{th:main2},
$\delta_c:={\displaystyle{\min_{g_c}}\{{\delta_{g_{c,1}}},\ldots,{\delta_{g_{c,n}}}\}}$
exists for any $c\in\{1,\ldots, d\}$. Then the scalar-variate GP that models the sample function 
$g_{c,\bx}(\cdot)$ has a covariance kernel that is marked by the finite 
scale hyperparameter $\delta_c$, independemt of the sample function.}
\end{proof}

\begin{remark}
\label{rem:distinguish}
{
That a stationary scale hyperparameter $\delta$ that is independent of the
sample path can define the covariance kernel of the scalar-variate GP that
$g_{c,\bx}(\cdot)$ is sampled from, owes to the fact that any such sample
function $g_{c,\bx}(\cdot)$ is continuous given the nature of the map (from
a subset of integers to reals). However, when the sample function from a GP is
not continuous, (such as $\bxi(\cdot)$ that is modelled with the
tensor-variate GP discussed above), a set of values of the sample function-dependent 
scale  hyperparameter(s) of the covariance kernel of the corresponding GP,
will not admit a minima, and therefore, in such a case, a global scale 
hyperparameter cannot be ascribed to the covariance kernel of the generating
GP. This is why we need to retain the correlation length scale hyperparameter
$\ell_c$ to be dependent on the tensor-valued sample function $\bxi(\cdot)$,
but the scale hyperparameter $\delta_c$ is no longer dependent on the
scalar-valued sample function $g_{c,\bx}(\cdot)$. In other words, we do not
require to add any further layers to our learning strategy, than the two
layers discussed.}
\end{remark}

\subsection{Learning using a Compound Tensor-Variate $\&$ Scalar-Variate GPs}
\noindent
We find inference defined by a sequential sampling from the 
scalar-variate GPs (for each of the $d$ directions of input
space), followed by that from tensor-variate GP, directly relevant to our
interests. Thus our learning involves a Compound
tensor-variate and multiple scalar-variate GPs. To abbreviate,
we will refer below to such a Compound Stochastic Process, as a ``$nested-GP$'' model.

\begin{remark}
{
As $\delta_c, a_c$ are not stochastic, hereon, we absorb the dependence of the function $g(\cdot)$ on the direction
index, via the descriptor parameters, and refer to this function as
$g_{\bx_c}(t)$; $c=1,\ldots,d$.}
\end{remark}

\begin{definition} {
    $Nested-GP$ model:\\
    for $\bV=\bxi(\bS)$, 
    $$\bxi(\cdot)\sim{\mbox{tensor-variate GP}},$$ s.t.  joint
    probability of $n$ observations of $k-1$-th ordered tensor-valued variable
    $\bV$ (that comprise training data ${\bf D}$), is $k$-th ordered Tensor Normal, with $k$ covariance matrices--which
    are empirically estimated, or learnt directly using MCMC, or kernel
    parametrised, s.t. length scale parameter $\ell_1,\ldots,\ell_d$ of this
    covariance kernel, is each modelled as a dynamically varying function
    $\ell_c=g_{\bx_c}(t)$, where
$$g_{\bx_c}(t)\sim\:c-{\mbox{th scalar-variate GP}},$$ 
$\Longrightarrow$joint probability of the last $t_0$ observations of $\ell_c$ (that
comprise ``lookback data'' ${\bf D}_{c,t}$), is
Multivariate Normal, the covariance function of which is parametrised by a
kernel indexed by the $c$-th, stationary descriptor parameter
    vector $\bx_c=(a_c,\delta_c)^T$, where $a_c$ is the amplitude and
    $\delta_c$ the scale-length hyperparameter of the SQE-looking covariance
    kernel; $c=1,\ldots,d$.}
\end{definition}

\begin{definition} {
    $Nonnested-GP$ model:\\
    for $\bV=\bxi(\bS)$, 
    $$\bxi(\cdot)\sim{\mbox{tensor-variate GP}},$$ s.t. joint
    probability of observations of 
    $\bV$ is $k$-th ordered Tensor Normal, with $k$ covariance matrices--which
    are empirically estimated, or learnt directly using MCMC, or kernel
    parametrised, s.t. length scale parameter $\ell_1,\ldots,\ell_d$ of this
    covariance kernel, is each treated as a stationary unknown. All learning
    is undertaken using training data ${\bf D}$.}
\end{definition}


\section{Inference}
\label{sec:inference}
\noindent
We undertake  inference with Metropolis-within-Gibbs.
Below $\theta^{(t\star)}$ indicates proposed value of parameter $\theta$ in the $t$-th
iteration, while $\theta^{(t)}$ refers 
to the value current in the $t$-th iteration.

\begin{itemize}
\item $Nested-GP$:
\begin{enumerate}
\item In $t>t_0$-th iteration, propose amplitude and
  scale-length of $c$-th scalar-variate GP as: 
$$ a_c^{(t\star)} \sim {\cal TN}(a_c^{(t-1)}, 0, v_a^{(c)}),\quad\forall
c=1,\ldots,d, $$
$$ \delta_c^{(t\star)} \sim {\cal N}(\delta_c^{(t-1)}, 0,
v_\delta^{(c)}),\quad\forall c=1,\ldots,d,$$ where ${\cal N}(\cdot)$ is
Normal, and ${\cal TN}(\cdot,0,\cdot)$ is
a Truncated Normal density left-truncated at 0. \\$v_a^{(c)},
v_\delta^{(c)}$ refer to constant, experimentally-chosen variances.
\item As length scale hyperparameter $\ell_c=g_{\bx_c}(t)\sim GP(0,
  \exp\left(-(\cdot-\cdot)^2/ 2\delta_c^2\right))$, 
probability of
the current lookback data ${\bf D}_{c,t}$ given parameters of this $c$-th
scalar-variate GP, is Multivariate Normal with mean vector $\bzero$ and a
current covariance matrix 
$\bPsi_c^{(t-1)}:=\left[ a_c^{(t-1)} \exp\left(-\frac{(t_i
      - t_j)^2}{2(\delta_c^{(t-1)})^2}\right)\right]; \quad
t_i,t_j=t-1,\ldots,t-t_0.$ Similarly, the likelihood of the proposed
parameters
can be defined. These enter computation of the acceptance ratio in the first
block of Metropolis-within-Gibbs.
\item At the updated parameters $\delta_c,a_c$, at $T=t$, length scale hyperparameters $\ell_1,\ldots,\ell_d$ are
  rendered Normal variates s.t. $$\ell_c^{t\star}\sim{\cal N}(\ell_c^{(t-1)},
  a_c^{(t\star)}),$$ under a Random Walk paradigm, when the mean of this
  Gaussian distribution is the current value of the $\ell_c$ parameter; $\forall c=1,\ldots,d$.
\item The proposed and current values of $\ell_1,\ldots,\ell_d$ inform on the
  acceptance ratio in the 2nd block of our inference, along with other,
  directly learnt parameters, of the covariance structure of the tensor-variate
  GP that $\bx(\cdot)$ is sampled from.
\end{enumerate}

\item $Nonnested-GP$:
\begin{enumerate} 
\item In the first block of Metropolis-within-Gibbs, $\ell_1,\ldots,\ell_d$
  are updated, once proposed as Normal variates, with experimentally chosen
  constant variance of the respective proposal density. 
\item Updating of directly-learnt elements of relevant covariance matrices is
  undertaken in the 2nd block, and the acceptance ratio that invokes the
  tensor-normal likelihood, is computed to accept/reject these proposed
  values, at the $\ell_c$ variable values that are updated in the first block
  of Metropolis-within-Gibbs.
\end{enumerate}
\end{itemize}

Details on inference is presented in Section~1 of the Supplementary
Materials.

\section{Application}
\label{sec:application}
\noindent
We illustrate our method using an application on astronomical data.  In this
application, we are going to learn the location of the Sun in the 
Milky Way modelled as a 2-dimensional disk. 
The training data ${\bf D}$ is cuboidally-shaped, and is of
dimensionalities $m_1\times m_2\times m_3$, where $m_1=2, m_2=50, m_3\equiv
n=216$, i.e. the 3-rd ordered tensor $\bD_{\bV}$ comprises of $n=216$ matrices
of dimension $50\times 2$, where $i$-th value of the matrix-variate observable
$\bV^{(50\times 2)}$ is realised at $i$-th value of system parameter vector $\bS$, s.t.
${\bf D}=\{(\bs_i,\bv_i)\}_{i=1}^n$. The 3rd-ordered tensor $\bD_{\bV}^{(m_1\times m_2\times n)}=(\bv_1,\vdots,\ldots,\vdots\bv_n)$ 

The training data comprises the $m_1=$2-dimensional velocity vectors of a
sample of $m_2=50$ number of real stars that exist around the Sun, in a model
Milky Way disk, where the matrix-variate r.v. $\bV^{(50\times 2)}$ comprising
such velocity vectors of this chosen stellar sample, are generated via
numerical simulations conducted with $n=216$ different astronomical models of
the Galaxy, with each such model of the Galaxy distinguished by a value of the
Milky Way feature parameter vector $\bS\in{\mathbb R}^d$, $d$=2
\ctp{dc2007}. Thus, $\bV=\bv_i$ at the $i$-th design point $\bs_i$,
$i=1,\ldots,216$. As $\bV$ is affected by $\bS$, we write $\bV=\bxi(\bS)$, and
aim to learn the high-dimensional function $\bxi(\cdot)$, with the aim of
predicting value of either $\bV$ or $\bS$, at a given value of the other.

In particular, there exists the test data $\bv^{(test)}$ that comprises the
$m_1=2$-dimensional velocity vectors of the 50 identified, stellar neighbours
of the Sun, as measured by the Hipparcos satellite \ctp{dc2007}. It is the
same 50 stars for which velocity vectors are simulated at each design
point. However, we do not know the real Milky Way feature parameter vector
$\bs^{(test)}$ at which $\bV=\bv^{(test)}$ is realised.

Since we are observing velocities of stars around the Sun, the observed
velocities will be impacted by certain Galactic features. These features
include location of the Sun $\bS$. Thus, the observed matrix $\bv^{(test)}$,
can be regarded as resulting from the Galactic features (including the sought
solar location) to bear certain values. So, fixing all Galactic features other
than the location $\bS$ of the Sun in the simulations that generate the
training data, the matrix $\bV$ of stellar velocities is related to $\bS$,
i.e.  $\bV=\bxi(\bS)$. The input variable $\bS$ is then also the location
from which an observer on Earth (or equivalently the Sun, on Galactic length
scales), observes the 2-dimensional velocity vectors of $m_2$
(=50) of our stellar neighbours.

\ctn{dc2007} generated the training data by first placing a regular
2-dimensional polar grid on a chosen annulus in an 2-dimensional astronomical
model of the MW disk. In the centroid of each grid cell, an observer was
placed. There were $n$ grid cells, so, there were $n$ observers placed in this
grid, such that the $i$-th observer measures velocities of $m_{2i}$ stars
that land in her grid cell, at the end of a simulated evolution of a sample
of stars that are evolved in this model of the MW disk, under the influence
of the feature parameters that mark this MW model. We indexed the $m_{2i}$
stars by their location with respect to the observer inside the grid cell, and
took a sample of $m_2=50$ stars from this collection of $m_{2i}$ stars;
$i=1,\ldots,n=216$. Thus, each observer records a matrix (or sheet) of 2-dimensional velocity vectors of $m_2$ stars. The test data measured by the Hipparcos satellite is
then the 217-th sheet, except we are not aware of the value of $\bS$ that this
sheet is realised at.

The solar location vector is 2-dimensional, i.e. $d$=2 since the Milky Way
disk is assumed to be 2-dimensional, i.e. $\bS=(S_1,S_2)^T$, s.t in this polar
grid, $S_1$ tells us about the radial distance between the Galactic centre and
the observer, while $S_2$ denotes the angular location of the observer in the
MW disk, w.r.t. a pre-fixed axis in the MW, namely, long axis of an elongated
bar of stars that lies pivoted at the Galactic centre, as per the astronomical
model of the MW that was used to generate the training data.

In \ctn{chakrabarty2015bayesian}, the matrix of velocities was vectorised, so
that the observable was then a vector. In our case, the observable is $\bV$--a
matrix. The process of vectorisation, causes \ctn{chakrabarty2015bayesian} to
undergo loss of correlation infomation. Our work allows for clear
quantification of such covariances. More importantly, our work provides a
clear template for implementing methodology for learning given
high-dimensional data that comprise measurements of a tensor-valued
observable. As mentioned above, the empirical estimate of the mean tensor is
obtained, and used as the mean of the Tensor Normal density that represents
the likelihood.

To learn $\bxi(\cdot)$, we model it as a realisation from a high-dimensional
GP, s.t, joint of $n$ values of $\bxi(\cdot)$--computed at
$\bs_1,\ldots,\bs_n$--is 3-rd Tensor Normal, with3 covariance matrices:
that inform on:\\
--amongst-observer-location covariance ($\bSigma_3^{(216\times216)}$),\\
--amongst-stars-at-different-relative-position-w.r.t.-observer covariance ($\bSigma_2^{(50\times 50)}$), and \\
--amongst-velocity-component covariance ($\bSigma_1^{(2\times 2)}$).

The elements of $\bSigma_2$ are not learnt by MCMC.\\
--Firstly, there is no input space variable that can be identified, at which
the $ij$-th element of $\bSigma_2$ can be considered to be realised;
$i,j=1,\ldots,50$, where this $ij$-th element gives the covariance amongst
the $i$-th and $j$-th, $216\times 2$-dimensional matrices within the 3-rd
ordered tensor $\bD_{\bV}$.  Effectively, the 41st star could have been
referred to as the 3rd star in this stellar sample, and the vice versa, i.e. 
there is no meaningful ordering in the labelling of the sampled stars with 
these indices. Therefore, we cannot use these labels as values of an input 
space variable, in terms of which, the covariance between the $i$-th and $j$-th
$216\times 2$-dimensional velocity matrices can be kernel-parametrised. \\ 
--Secondly, direct learning of the 50(51)/2 distinct elements of $\bSigma_2$,
using MCMC, is ruled out, given that this is a large number. \\
--In light of this, we will perform empirical estimation of
$\bSigma_2$. 
\begin{definition}
{Covariance between the $216\times 2$-dimensional stellar velocity matrix
$\bW_i:=[v^{(i)}_{pq}]$ of the sampled star labelled by index $i$, and the
matrix $\bW_j:=[v^{(j)}_{pq}]$
of the star labelled as $j$, ($p=1,\ldots,216; q=1,2$), is estimated as ${\widehat{\sigma_{ij}^{(2)}}}$, where:\\
${\widehat{\sigma_{ij}^{(2)}}}=$
$$ \displaystyle{
\frac{1}{2-1} \times 
\sum_{q=1}^2 \left[
                   \frac{1}{216} \times 
\left(\sum_{p=1}^{216} (v^{(i)}_{pq} - \bar{v}^{(i)}_q)
                       \times
                       (v^{(j)}_{pq} - \bar{v}^{(j)}_q)
\right)\right]},$$
where $\bar{v}^{(i)}_q=\displaystyle{\frac{\left(\sum_{p=1}^{216} v^{(i)}_{pq}\right)}{216}}$ is the sample mean of the $q$-th column of the 
matrix $\bV_i=[v^{(i)}_{pq}]$. }
\end{definition}

The 3 distinct elements of the $2\times 2$-dimensional covariance matrix
$\bSigma_1$ are learnt directly from MCMC. These include the 2 diagonal
elements $\sigma_{11}^{(1)}$, $\sigma_{22}^{(1)}$ and
$\rho:=\displaystyle{\frac{\sigma^{(1)}_{12}}{\sqrt{\sigma^{(1)}_{11}\sigma^{(1)}_{22}}}}$

We perform kernel parametrisation of $\bSigma_3$, using the SQE kernel such
that the $jp$-th element of $\bSigma_3$ is kernel-parametrised as
$[\sigma_{jp}] = \displaystyle{\exp\left(-(\bs_j-\bs_p)^T \bQ^{-1} (\bs_j-\bs_p)\right)}, j,p=1,\ldots,216.$
Since $\bS$ is a 2-dimensional vector, $\bQ$ is a 2$\times$ 2 square diagonal
matrix, the elements $\ell_{1}, \ell_{2}$ of which, represent the
the correlation length scales.

Then in the ``$nonnested-GP$'' model, we learn the (modelled as stationary)
$\ell_1, \ell_2$, along with $\sigma_{11}^{(1)}$, $\sigma_{22}^{(1)}$ and
$\rho$. 

Under the $nested-GP$ model, $\ell_c$ is modelled as $\ell_c=g_{\bx_c}(t)$,
where at iteration number $T=t$ $g_{\bx_c}(t)$ is sampled from the $c$-th
zero-mean, scalar variate GP, amplitude $a_c$ and correlation length scale 
$\delta_c$ of which we learn, for $c=1,2$, in addition to the parameters
$\sigma_{11}^{(1)}$, $\sigma_{22}^{(1)}$ and
$\rho$.
  
The likelihood of the training data given the covariance matrices of the
tensor-variate GP, is then given as per Equation~\ref{eqn:eqn1}:
  \begin{equation}
  \begin{aligned}
&{\cal L}({\bf D}\vert
\ell_1,\ell_2,\sigma_{11}^{(1)},\sigma_{22}^{(1)},\rho)=(2\pi)^{-m/2}(\prod_{i=1}^{3}|\bSigma_i|^{-m/2m_i})
\\ &\times \exp(-\Vert ({\bD}_{\bV}-\hat{\bM})\times_1 {\bA_1}^{-1}\times_2 {\hat{\bA_2}}^{-1}  \times_3 \bA_3^{-1} \Vert^2/2).
  \label{eqn:eqn3_bef}
  \end{aligned}
\end{equation}
where $\bSigma_p = \bA_p \bA^{T}_p$, $p=1,2,3$ and ${\hat{\bM}}$ is the
empirical estimate of the mean tensor and $\hat{\bSigma_2}$ is the empirical
estimate of the covariance matrix $\bSigma_2$ such that ${\hat{\bSigma_2}} =
{\hat{\bA_2}} {\hat{\bA_2}}^{T}$. Here $m_3=216$, $m_2=50$, $m_1=2$, and
$m=m_1 m_2 m_3$.  One or more of the covariance matrices is kernel
parametrised, where the kernel is a function of pairs of values of the input
variable $\bS$--this explains the dependence of the RHS of this equation on
the whole of ${\bf D}$, with the data tensor $\bD_{\bV}$ contributing partly
to training data ${\bf D}$.

This allows us to write the joint posterior probability density of the unknown
parameters given training data ${\bf D}$. We generate posterior samples from
it using Metropolis-within-Gibbs. To write this posterior, we impose
non-informative priors $\pi_0(\cdot)$ on each of our unknowns (Gaussian with
wide, experimentally chosen variances, and mean that is the arbitrarily chosen
seed value of $\ell_{\cdot}$; Jeffry's priors on $\bSigma_1$). The posterior
probability density of our unknown GP parameters, given the training data is
then
\begin{equation}
\begin{aligned}
\pi(\ell_{1}, \ell_{2}, \sigma_{11}^{(1)},\sigma_{22}^{(1)}, \rho\vert{\bf D})
\propto  {\cal L}({\bD}_{\bV} \vert \bSigma_1,\bSigma_3)\times \pi_0(\ell_{1}) \pi_0(\ell_{2}) \pi_0(\bSigma_1).
  \end{aligned}
\label{eqn:marginal_bef}
\end{equation}
The results of our learning and estimation of the mean and covariance
structure of the GP used to model this tensor-valued data, is discussed below
in Section~\ref{sec:results}. 

\begin{definition}
{
The joint posterior probability density of the
unknown parameters given 
the training data ${\bf D}$ that comprises the velocity tensor $\bD_{\bV}$, under the $nested-GP$
model is given by
\begin{equation}
\begin{aligned}
&\pi(\delta_{1}, \delta_{2}, a_1, a_2. \ell_1,\ell_2,
\sigma_{11}^{(1)},\sigma_{22}^{(1)}, \rho\vert {\bf D}) \propto
(2\pi)^{-m/2} \left(\prod_{i=1}^{3}|\bSigma_i|^{-m/2m_i}\right)\\ 
&\times \exp(-\Vert ({\bD}_{\bV}-\hat{\bM})\times_1 {\bA_1}^{-1}\times_2
{\hat{\bA_2}}^{-1}  \times_3 \bA_3^{-1} \Vert^2/2)\times\\
&\displaystyle{\prod\limits_{c=1}^2 \frac{1}{\sqrt{\det(2\pi\bPsi_{\bx_c})}}
\exp\left[-\frac{1}{2}
    (\bell_c^{(t_0)})^T \left(\bPsi_{\bx_c}\right)^{-1}(\bell_c^{(t_0)})\right]}
\times \pi_0(\bSigma_1),
  \end{aligned}
\label{eqn:marginal_bef2}
\end{equation}
where $\bell_c^{(t_0)}:=(\ell_c^{(t-t_0)},\ldots,\ell_c^{(t-1)})^T$, and
$ij$-th element of the covariance matrix $\bPsi_{\bx_c}$ is
$\displaystyle{\left[{a_c\exp\left[-\frac{(t_i-t_j)^2}{2(\delta_c)^2}\right]}\right]}$,
$i,j=1,\ldots,t_0$. N.B. the $t$-dependence of the covariance matrix
$\bPsi_{\bx_c}$ is effectively suppressed, given that this dependence comes in
the form $t-t_i -(t-t_j)$.
}
\end{definition}
We generate posterior
samples using MCMC, to identify the marginal posterior probability
distribution of each unknown. The marginal then allows for the computation of
the 95$\%$ HPD. 

\section{Inverse Prediction--2 Ways}
\label{sec:prediction}
\noindent
We aim to predict the location vector $\bs^{(test)}$ of the Sun in the Milky
Way disk, at which real (test) data
$\bv^{(test)}$ on the 2-dimensional velocity vectors of 50 identified stellar
neighbours of the Sun, measured by the {\it Hipparcos} satellite. We undertake
this, subsequent to learning of relation $\bxi(\cdot)$ between solar location
variable $\bS$ and stellar velocity matrix-valued variable $\bV$, using
astronomically-simulated (training data). 
\begin{definition}
{The tensor that includes both test and training data has dimensions of $217\times 50\times 2$. We call this augmented data
$\bD^*=\{\bv_1,...,\bv_{50},\bv^{(test)}\}$, to distinguish it from the
tensor $\bD_{\bV}$ that lives in the training data. Here $\bv_i$ is realised
at design point $\bs_i$, but the $\bs^{(test)}$ at which $\bv^{(test)}$ is
realised, is not known.}
\end{definition}

\begin{remark}
{This 217-th sheet of (test) data is realised at the unknown value
$\bs^{(test)}$ of $\bS$, and upon its inclusion, the updated covariance
amongst the sheets generated at the different values of $\bS$, is renamed
$\bSigma_1^*$, which is now rendered $217\times 217$-dimensional. Then
$\bSigma_1^*$ includes information about $\bs^{(test)}$ via the
kernel-parametrised covariance matrix $\bSigma_3$. The effect of inclusion of
the test data on the other covariance matrices is less; we refer to them as
(empirically estimated) ${\hat{\bSigma_2^*}}$ and $\bSigma_3^*$. The updated
(empirically estimated) mean tensor is ${\hat{\bM}}^*$. }
\end{remark}

The likelihood for the augmented data is:
  \begin{equation}
  \begin{aligned}
{\cal L}(\bD^*|\bs^{(test)}, \bSigma_1^*,\bSigma_3^*) =&
\displaystyle{(2\pi)^{-m/2}\left(\prod\limits_{i=1}^{3}|\bSigma_i^*|^{-m/2m_i}\right)}\times \\
&{\displaystyle{\exp\left[-\Vert (\bD^*-{\hat{\bM}}^*)\times_1 ({\bA_1^*})^{-1} \times_2 ({\hat{\bA_2^*}})^{-1} \times_3 ({\bA_3^*})^{-1} \Vert^2/2\right]}}
  \label{eqn:eqn4}
  \end{aligned}
\end{equation}
where ${\hat{\bA_2^*}}$ is the square root of ${\hat{\bSigma_2^*}}$.
Here $m_1=217$, $m_2=50$, $m_3=2$, and $m=m_1
m_2 m_3$. Here $\bA_1^*$ is the square root of $\bSigma_1^*$ and
depends on $\bs^{(test)}$.

The posterior of the unknowns given the test+training data is:
\begin{equation}
\begin{aligned}
\pi(s_1^{(test)},s_2^{(test)},\bSigma_1^*,\bSigma_3^*\vert \bD^*) \propto &
{\cal L}(\bD^*|s_1^{(test)},s_2^{(test)},\bSigma_1^*,\bSigma_3^*)\times\\
& \pi_0(s_1^{(test)})\pi_0(s_2^{(test)})\pi_0(q_{2}^{(*)})\pi_0(q_{1}^{(*)}) \pi_0(\bSigma_3^*).
  \end{aligned}
\label{eqn:marginal}
\end{equation}
\begin{remark}
{We use $\pi_0(s_p^{(test)})={\cal U}(l_p, u_p),\:p=1,2$, where $l_p$ and $u_p$
are chosen depending on the spatial boundaries of the fixed area of the Milky
Way disk that was used in the astronomical simulations by
\ctn{dc2007}. Recalling that the observer is located in a two-dimensional
polar grid, \ctn{dc2007} set the lower boundary on the value of the angular
position of the observer to 0 and the upper boundary is $\pi/2$ radians,
i.e. 90 degrees, where the observer's angular coordinate is the angle made by
the observer-Galactic centre line to a chosen line in the MW disk. The
observer's radial location is maintained within the interval [1.7, 2.3] in
model units, where the model units for length are related to galactic unit for
length, as discussed in Section~\ref{sec:astro}.}
\end{remark}

In the second method for prediction, we
infer $\bs^{(test)}$ by
sampling from the posterior of $\bs^{(test)}$ given the
test data and the modal values of the parameters
$q_{1}, q_{2}, \sigma_{11}^{(1)},
\rho,\sigma_{22}^{(1)}$ that were learnt using the training data.
Let modal value of $\bSigma_3$, learnt using ${\bf D}$ be
$[(\sigma_3^{(M)})_{jp}]_{j=1;p=1}^{217,217}$, 
Similarly, the modal value $\bSigma_1^{(M)}$ that was learnt using the
training data, is used. 
The posterior of $\bs^{(test)}$, at learnt (modal) values is then
\begin{equation}
\begin{aligned}
&\pi(s_1^{(test)},s_2^{(test)}\vert \bD^*,\bSigma_1^{(M)},\bSigma_3^{\star}) \propto \\
&{\cal L}(\bD^*|s_1^{(test)},s_2^{(test)},\bSigma_1^{(M)},\bSigma_3^{\star})\times \pi_0(s_1^{(test)})\pi_0(s_2^{(test)})
\times \pi_0(q_{2}^{(M)})\pi_0(q_{1}^{(M)}) \pi_0(\bSigma_3)|\bV^*).
  \end{aligned}
\label{eqn:marginalpred}
\end{equation}
where ${\cal L}(\bD^*|s_1^{(test)},s_2^{(test)},\bSigma_1^*,\bSigma_3^{(M)})$ is as given in Equation~\ref{eqn:eqn3_bef}, with $\bSigma_3$ replaced by $\bSigma_3^*$, and $\bSigma_1$ replaced by its modal value $\bsigma_1^{(M)}$. The priors on $s^{(test)}_1$ and $s^{(test)}_2$ are as discussed above.
For all parameters, we use Normal proposal densities that have experimentally chosen variances.

\begin{figure}[!t]
     \begin{center}
  {
       \includegraphics[width=10cm]{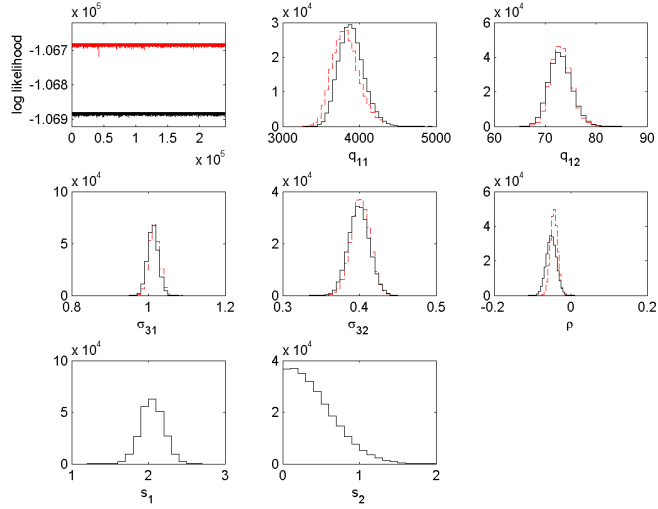}
  }
     \end{center}
     \caption{Results from run done with training data ${\bf D}$ with the
       $nonnested-GP$ model, are shown in grey (or red in the
       electronic version), while results from run undertaken with
       training and test data, $\bD^{\star}$, in this $nonnested-GP$ model,
       are depicted in black. Traces of the logarithm of the likelihood are
       displayed from the two runs in the top left panel. Reciprocal of the
       length scale parameters are the shown in the top middle and right
       panels; here $q_c=\ell_c^{-1}, \: c=1,2$. Histograms representing
       marginal posterior probability density of the learnt diagonal
       elements $\sigma_{11}^{(1)}$ and $\sigma_{22}^{(1)}$, of the
       covariance matrix $\bSigma_1$, are shown in the mid-row, left and
       middle panels (given respective data). Histograms representing marginals
       of the parameter
       $\rho=\displaystyle{\frac{\sigma_{12}}{\sqrt{\sigma_{11}^{(1)}\sigma_{22}^{(1)}}}}$ 
       are displayed in the mid-row right
       panel. Prediction of the values of the input parameter
       $\bS=(S_1,S_2)^T$ is possible only in the run performed with both
       training and test data. Marginals of $S_1$ and $S_2$ values learnt via
       MCMC-based sampling from the joint of all unknown parameters given
       $\bD^{\star}$, are shown in the lower panel, as approximated by histograms.}
\label{fig:nohypergp_with_wo_hist}
\end{figure}

\section{Results}
\label{sec:results}
\noindent
In this section, we present the results of learning the unknown
parameters of the 3rd-order tensor-normal likelihood, given the training as
well as the training+test data.

While Figure~1 of the Supplementary Materials and
Figure~\ref{fig:nohypergp_with_wo_hist} here depict results obtained from using the
$nonnested-GP$, in the following figures, results of the learning of all
relevant unknown parameters, using the $nested-GP$ model, are included. 
Figures that depict results from the $nested-GP$
approach will include results of the learning of amplitude $a_c$ and 
smoothing parameters $d_c:=1/\delta_c$ parameters. Also, our modelling
under the $nested-GP$ paradigm relies on a lookback-time $t_0$ which gives the 
number of iterations over which we gather the generated
$\ell_c$ values.

\begin{figure}
     \begin{center}
  {
       \includegraphics[width=10cm]{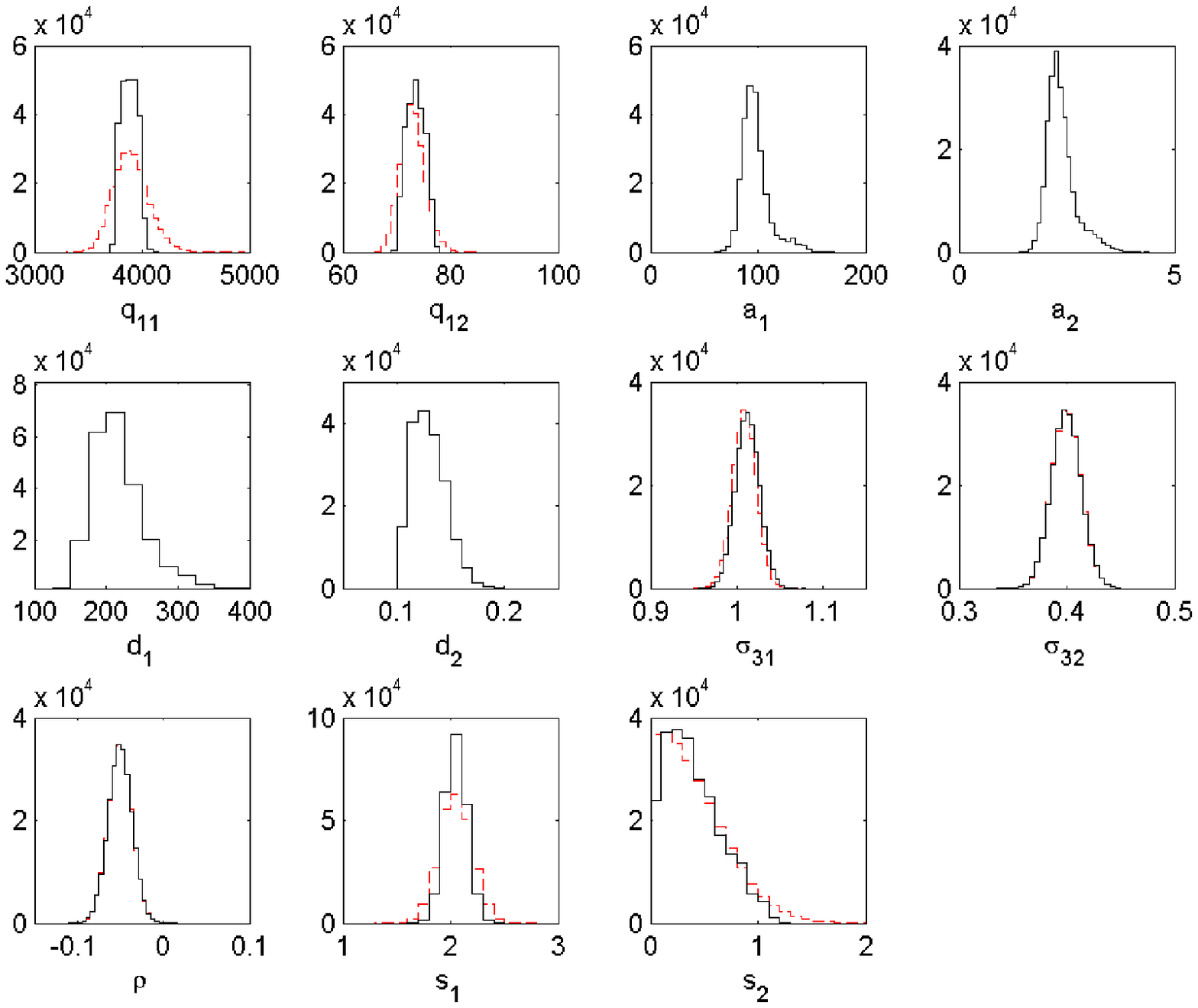}
  }
     \end{center}
     \caption{Results from run done with test+training data ${\bD}^{\star}$
       within the $nested-GP$ model, shown in black, as distinguished from the
       results of learning given the same data, and the $nonnested-GP$
       model depicted in grey (or red in the electronic copy of the
       thesis). Here the used value of $T_0$ is 200 iterations. Histograms
       approximating the marginal posterior probability densities of each
       sought unknown is depicted. Here, sought hyperparameter values $a_c$ and
       $\delta_c$ are relevant only to the $nested-GP$ model ($c=1,2$). Here, we
       have undertaken sampling from the joint posterior of all parameters,
       including the input parameter values $s_1^{(test)}$ and $s_2^{(test)}$,
       at which the test data are realised. Histograms approximating marginal
       posterior of each learnt unknown are presented.  }
\label{fig:hyper200_nohyper_gp}.
\end{figure}

\begin{figure}
     \begin{center}
  {
       \includegraphics[width=10cm]{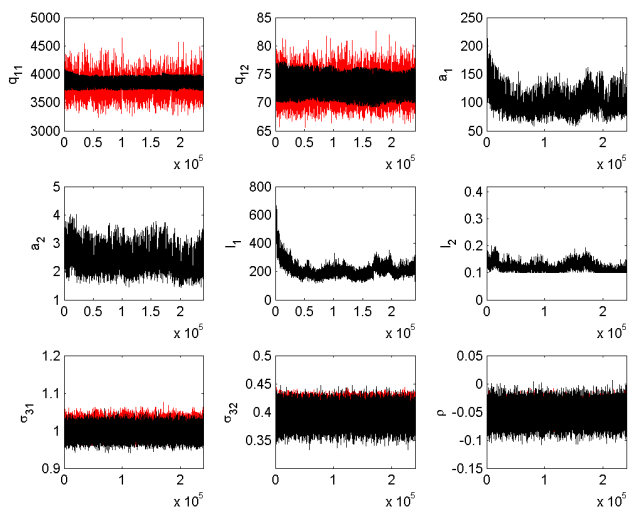}
  }
     \end{center}
     \caption{ Traces of parameters learnt using the training data ${\bf D}$,
       in the run performed with the $nonnested-GP$ model, are compared to
       traces of the corresponding parameter obtained in the run performed
       with the $nested-GP$ model. Traces of parameters learnt within the
       $nonnested-GP$ model are in grey (or red in the e-version) while the
       traces obtained using the $nested-GP$ model are shown in black.  }
\label{fig:hyper200_nohyper_gp_nopred_trace}.
\end{figure}

\subsection{Effect of discontinuity in the data, manifest in our results}
\noindent
One difference between the learning of parameters from the $nested-GP$, as
distinguished from the $nonnested-GP$ models is the quality of the inference,
in the sense that the uncertainty of parameters (i.e. the 95$\%$ HPDs) learnt
using the $nested-GP$ models, is less than that learnt using the $nonnested-GP$
models. This difference in the learnt HPDs is most marked for the learning of
values of $Q_1$ and $S_1$, and $S_2$ to a lesser extent. 

We explain this, by invoking the discontinuity in the training
data--distribution of $S_1$ in this data is sharply discontinuous, though
there is a less sharp discontinuity in the distribution of $S_2$ noted. We
refer to Figure~8 of \ctn{dc2007}, page 152. This figure is available at
{\url{https://www.aanda.org/articles/aa/pdf/2007/19/aa6677-06.pdf}}, and
corresponds to the base astronomical model used in the simulations that
generate the training data that we use here. This figure informs on the
distribution of location $\bS$; compatibility of the stellar velocity matrix
$\bv (=\bxi(\bs))$ realised (in astronomical simulations) at a given $\bs$, to
the test velocity matrix $\bv^{(test)}$ (recorded by the {\it Hipparcos}
satellite), is parametrised, and this compatibility parameter plotted against
$\bs$ in this figure. In fact, this figure is a contour plot of the distribution of such a
compatibility parameter, in the space ${\cal D}$, where $\bS\in{\cal
  D}\subset{\mathbb R}^2$. The 2 components of $\bS$ are represented in
polar coordinates, with $S_1$ the radial and $S_2$ the angular component. We
see clearly from this figure, that the distribution across $S_1$ is highly
discontinuous, at given values of $S_2$ (i.e. at fixed angular bins). In fact,
this distribution is visually more discontinuous, than the distribution across
$S_2$, at given values of $S_1$, i.e. at fixed radial bins (each of which is
represented by the space between two bounding radial arcs). In other words,
the velocity matrices that are astronomically simulated at different $\bS$
values, are differently compatible with a given reference velocity matrix
($\bv^{(test)}$)--and, the distribution of velocity matrix variable $\bV$, is
discontinuous across values of $\bS$, and in fact, less smoothly distributed
at fixed $s_2$, than at fixed $s_1$. Thus, this figure brings forth the
discontinuity with the input-space variable $\bS$, in the data tensor
$\bD_{\bV}$ that is part of the training data.

Then, it is incorrect to use a stationary kernel to parametrise the covariance
$\bSigma_3$, that informs on the covariance between velocity matrices
generated at different values of $\bS$. Our implementation of the $nested-GP$
model tackles this shortcoming of the model. However, when we implement the
$nonnested-GP$ model, Metropolis needs to explore a
wider volume of the state space to accommodate parameter values, given the
data at hand--and even then, there is a possibility for incorrect inference
under the stationary kernel model. This explains the noted trend of higher
95$\%$ HPDs on most parameters learnt using the $nonnested-GP$ model, compared
to the $nested-GP$ model, as observed in comparison of results from runs done
with training data alone, or both training and test data; compare
Figure~\ref{fig:hyper200_nohyper_gp} to
Figure~\ref{fig:hyper200_nohyper_gp_nopred_trace}, and note the comparison in
the traces as displayed in
Figure~\ref{fig:hyper200_nohyper_gp_nopred_trace}. Indeed, this also explains
the bigger difference noted in these figures when we compare the learning of
$q_1$ over $q_2$, in runs that use the stationary model, as distinguished from
the non-stationary model. After all, the discontinuity across $S_1$ is
discussed above, to be higher than across $S_2$.

\subsection{Effect of varying lookback times, i.e. length of historical data}
\noindent
To check for the effect of the lookback time $t_0$, we
present traces of the covariance parameters and kernel hyperparameters
learnt from runs undertaken within the $nested-GP$ model, but different $t_0$
values of 50 and 100, in Figure~\ref{fig:50_100_hypergp}, which we can
compare to the traces obtained in runs performed under the $nested-GP$ model,
with $t_0=200$, as displayed in
Figure~\ref{fig:hyper200_nohyper_gp_nopred_trace}.

\begin{figure}
     \begin{center}
  {
       \includegraphics[width=10cm]{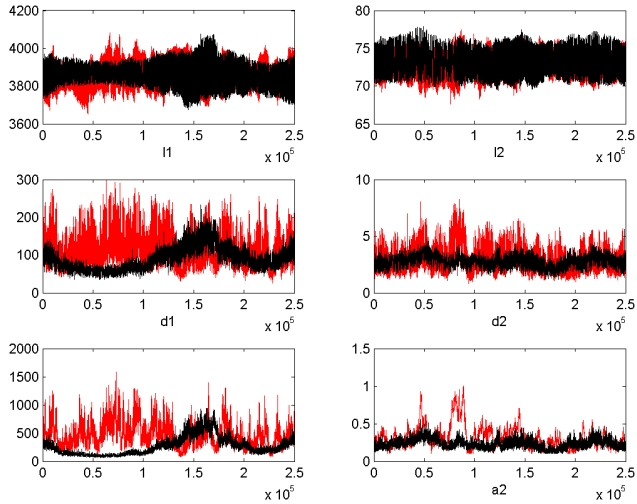}
  }
     \end{center}
     \caption{Comparison of traces of unknown smoothness parameters of
       $\bSigma_3$ and hyperparameters of GPs invoked to model these
       parameters, obtained in runs performed with training data ${\bf D}$ and
       $t_0=50$ (in grey, or red in the e-version) and $t_0=100$ (in black). 
       }
\label{fig:50_100_hypergp}.
\end{figure}

It is indeed interesting to note the trends in traces of the
the smoothness parameters $q$ that are 
the reciprocal of $\ell$ parameters, and values of the amplitude ($a_1,
a_2$) and values of length scale hyperparameters
($\delta_1, \delta_2$), evidenced in
Figure~\ref{fig:50_100_hypergp} and the results in black in
Figure~\ref{fig:hyper200_nohyper_gp_nopred_trace}). A zeroth-order model for 
these parameters that are realisations from a non-stationary process, is a moving averages time-series model--$MA(t_0)$ to be precise.
We note the increase in
fluctuation amplitude of the traces, with
decreasing $t_0$. For smaller values of lookback time $t_0$, the average
covariance between $g_{\bx_c}(t_1)$ and $g_{\bx_c}(t_2)$ is higher, than when $t_0$ is
higher, where the averaging is performed over a $t_0$-iteration long interval
that has its right edge on the current iteration; here
$\bx_c=(a_c,\delta_c)^T$, $c=1,2$ and as
introduced above, we model the length scale parameter of the kernel that
parametrises $\bSigma_3$, as $\ell_c=g_{\bx_c}(t)$. Here $g_{\bx_c}(\cdot)$ is modelled as
a realisation from a scalar-variate GP with covariance kernel that is
itself kernel-parametrised using an SQE kernel with amplitude $a_c$ and
correlation-length $\delta_c$. Then higher covariances between values of
$g_{\bx_c}(\cdot)$ at different $t$-values in general would suggest higher values of
the global amplitude of this parametrised kernel, and higher values of the
length-scales of this SQE kernel.

Indeed an important question is, what is the ``best'' $t_0$, given our
data. Such  question is itself of relevance, and discussed intensively under
distributed lag models, often within Econometrics \ctp{shirley}. An interesting trend noted in the parameter traces presented in
Figure~\ref{fig:50_100_hypergp} for $t_0=50,100$, and to a lesser extent for
$t_0=200$, in the results in black in
Figure~\ref{fig:hyper200_nohyper_gp_nopred_trace}, is the global near-periodic
existence of crests and troughs in these traces. This periodic fluctuation is
more marked for smoothness $q_1$ (=$1/\ell_1$) and the hyperparameters of the
scalar-variate GP used to model $g_{\bx_1}(\cdot)$, than for $q_2$ (and $a_2$
and $\delta_2$). 

From the point of view of a polynomial (of order $t_0$) model for the lag
operator--that transfers information from the
past $t_0$ realisations from a stochastic process to the current
iteration--the shape of the trace will be dictated by parameters of ths
model. If this polynomial admits complex roots, then coefficients of the relevant
lag terms will behave like a damped sine function with iterations. For a
different value of $t_0$, such a pronounced oscillatory trend might not be
equally apparent. Loosely
speaking, the value of $\ell_c$ in any iteration, represented by a moving
average, will manifest the result of superposition of the different
(discontinuous) modal neighbourhoods present in the data. The more multimodal
the data, i.e. larger the number of ``classes'' (by correlation-length scales)
of functional form $\bxi(\cdot)$ sampled from the tensor-variate GP, s.t.
superposition of the sample paths will cause a washing-out of the effect of
the different modes, and a less prominent global trend will be manifest in the
traces. However, for data that is globally bimodal, the superposition of the
two ``classes'' of sampled functions $\bxi(\cdot)$ will create a periodicity
in the global trend of the generated $\ell_c$ values (and thereby of the
smoothness parameter values $q_c$, where $q=\ell_c^{-1}$). 
Again, the larger the value $t_0$ of the lookback-time parameter, the moving
average is over a larger number of samples, and hence greater is the
washing-out effect. Thus, depending on the discontinuity in the data, it is
anticipated that there is a range of optimal lookback-time values, for which,
the global periodicity is most marked. This is what we might be noticing in
the trace of $q_1$ at $t_0=100$ displaying the global periodicity more
strongly than that at $t_0=200$ (see Figure~\ref{fig:50_100_hypergp} and
Figure~\ref{fig:hyper200_nohyper_gp_nopred_trace}). 

Another point is that the strength of this global periodicity will be stronger
for the correlation-length scale along that direction in input-space, the
discontinuity along which is stronger. Indeed, as we have discussed above, the
discontinuity in the data with varying $S_1$ is anticipated to be higher than
with $S_2$. So we would expect a more prominent periodic trend in the trace
of $q_1$ than $q_2$. This is indeed what to note in
Figure~\ref{fig:50_100_hypergp}. A simulation study can be undertaken to
explore the effects of empirical discontinuities.

The arguments above qualitatively explain the observed trends in the traces of
the hyperparameters, obtained from runs using different $t_0$. That in spite
of discrepancies in $a_c$ and $\delta_c$, with $t_0$, values of the length
scale parameter $\ell_c$ (and therefore its reciprocal $q_c$) are concurrent
within the 95$\%$ HPDs, is testament to the robustness of
inference. Stationarity of the traces betrays the achievement of convergence
of the chain.



We notice that the reciprocal correlation length scale $q_{1}$ is a
couple of orders of magnitude higher than $q_{2}$; correlation between
values of the sampled function $\bxi(\cdot)$, at 2 different $S_1$
values (at the same $s_2$), then wanes more quickly than correlation
between sampled functions computed at same $s_1$ and different $S_2$
values. Here $\bs=(s_1,s_2)^T$ and given that $\bS$ is the location of
the observer who observes the velocities of her neighbouring stars on
a two-dimensional polar grid, $S_1$ is interpreted as the radial
coordinate of the observer's location in the Galaxy and $S_2$ is the
observer's angular coordinate. Then it appears that the velocities
measured by observers at different radial coordinates, but at the same
angle, are correlated over shorter radial-length scales than
velocities measured by observers at the same radial coordinate, but
different angles. This is understood to be due to the astro-dynamical
influences of the Galactic features included by \ctn{dc2007}
in the simulation that generates the training data that we use
here. This simulation incorporates the joint dynamical effect of the
Galactic spiral arms and the elongated Galactic bar (made of stars)
that rotate at different frequencies (as per the astronomical model
responsible for the generation of our training data), pivoted at the
centre of the Galaxy. An effect of this joint handiwork of the bar and
the spiral arms is to generate distinctive stellar velocity
distributions at different radial (i.e. along the $S_1$ direction)
coordinates, at the same angle ($s_2$). On the other hand, the stellar
velocity distributions are more similar at different $S_2$ values, at
the same $s_1$. This pattern is borne by the work by \ctn{chakrabarty05}, in
which the radial and angular variation of the standard deviations of
these bivariate velocity distributions are plotted. Then it is
understandable why the correlation length scales are shorter along the
$S_1$ direction, than along the $S_2$ direction. 

Furthermore, for the correlation parameter $\rho$, physics suggests that the
correlation will be zero among the two components of a velocity vector. These
two components are after all, the components of the velocity vector in a
2-dimensional orthogonal basis. However, the MCMC chain shows that there is a
small (negative) correlation between the two components of the stellar
velocity vector.

\subsection{Predicting $\bs^{(test)}$}
\noindent
Figure~\ref{fig:nohypergp_with_wo_hist}, displays histogram-representations of
marginal posterior probability densities of the solar location coordinates
$s^{(test)}_1$, $s^{(test)}_2$; $q_{1}^{*}$ and $q_{2}^{*}$ that get updated
once the test data is added to augment the training data, and parameters
$\sigma_{11}^{1*}$, $\sigma_{22}^{1*}$ and $\rho^*$. 95$\%$ HPD credible
regions computed on each parameter in this inference scheme, are displayed in
Table~1 of Supplementary Materials. These figures display these parameters in the $nonnested-GP$ model. When the $nested-GP$ model is used, histogram-representations of the marginals of the aforementioned parameters, are displayed in Figure~\ref{fig:hyper200_nohyper_gp}.

Prediction of $\bs^{(test)}$ using the $nested-GP$ models gives rise to similar results as when the $nonnested-GP$ models are used, (see Figure~\ref{fig:hyper200_nohyper_gp} that compares the marginals of the solar location parameters sampled from the joint of all unknowns, given all data, in $nested-GP$ models, against those obtained when $nonnested-GP$ models are used).

The marginal distributions of $s_1^{(test)}$ indicates that the
marginal is unimodal and converges well, with modes at about 2 in model units.
The distribution of $s_2^{(test)}$ on the other hand is quite
strongly skewed towards values of $s_2^{(test)}\lesssim 1$ radians,
i.e. $s_2^{(test)}\lesssim 57$ degrees, though the probability mass in
this marginal density falls sharply after about 0.4 radians,
i.e. about 23 degrees. These values tally quite well with previous
work \ctp{chakrabarty2015bayesian}. In that earlier work, using the training data that we use in this work,
(constructed using the the astronomical model $sp3bar3{\_}18$
discussed by \ctn{chakrabarty2015bayesian}), the marginal distribution of
$s_1^{(test)}$ was learnt to be bimodal, with modes at about 1.85 and
2, in model units. The
distribution of $s_2^{(test)}$ found by \ctn{chakrabarty2015bayesian} is however more
constricted, with a sharp mode at about 0.32 radians (i.e. about 20
degrees). We do notice a mode at about this value in our inference,
but unlike in the results of \ctn{chakrabarty2015bayesian}, we do not find the
probability mass declining to low values beyond about 15 degrees. One
possible reason for this lack of compatibility could be that in
\ctn{chakrabarty2015bayesian}, the matrix of velocities $\bV$ was vectorised, so that
the training data then resembled a matrix, rather than a 3-tensor as
we know it to be. Such vectorisation could have led to some loss of correlation information, leading to their results.


Model checking of our models and results is undertaken in Section~3 of the
Supplementary Materials.

\subsection{Astronomical implications}
\label{sec:astro}
\noindent
The radial coordinate of the observer in the Milky Way, i.e. the solar radial
location, is dealt with in model units, but will need to be scaled to real
galactic unit of distance, which is kilo parsec (kpc). Now, from independent
astronomical work, the radial location of the Sun is set as 8 kpc. Then our
learnt value of $S_1^{(test)}$ is to be scaled to 8 kpc, which gives 1 model
unit of length to be ${{m}}:=\displaystyle{\left(\frac{8 \mbox{kpc}}{\mbox{learnt value of\:\:}S_1^{(test)}}\right)}$. Our main interest in learning the solar location is to find the frequency $\Omega_{bar}$ with which the Galactic bar is rotating, pivoted at the galactic centre, (loosely speaking). Here $\Omega_{bar}=\displaystyle{\frac{v_0}{\mbox{1 model unit of length}}=\frac{v_0}{{m}}}$, where $v_0=220$ km/s (see \ctn{dc2007} for details). The solar angular location being measured as the angular distance from the long-axis of the Galactic bar, our estimate of $S_2$ actually tells us the angular distance between the Sun-Galactic centre line and the long axis of the bar. These estimates are included in Table~\ref{tab:tab3}.

\begin{table*}[!h]
\caption{$95\%$ HPD on each Galactic feature parameter learnt from the solar
  location coordinates learnt using the two predictive inference schemes listed
above and as reported  in a past paper for the same training and test data.}
\label{tab:tab3}
\centering
\begin{tabular}{|l|l|l|}
\hline
          & $95\%$ HPD for $\Omega_{bar}$ (km/s/kpc)& for angular distance of\\
          &                                         &bar to Sun (degrees)\\ \hline
{\mbox{from posterior predictive}} & $[48.11,57.73]$        & $[4.53,43.62]$
   \\ \hline
{\mbox{from joint posterior}} & $[48.25,57.244]$           & $[2.25,46.80]$
        \\ \hline
{\mbox{from Chakrabarty et. al (2015)}} & $[46.75, 62.98]$ & $[17.60, 79.90]$
         \\ \hline
\end{tabular}
\end{table*}

Table~\ref{tab:tab3} displays the Galactic feature parameters that are derived
from the learnt solar location parameters, under the different inference
schemes using the $nonnested-GP$ model, namely, sampling from the joint posterior
probability of all parameters given all data, and from the posterior
predictive of the solar location coordinates given test data and GP parameters
already learnt from training data alone. The derived Galactic feature
parameters are the bar rotational frequency $\Omega_{bar}$ in the real
astronomical units of km/s/kpc and the angular distance between the bar and
the Sun, in degrees. The table also includes results from \ctn{chakrabarty2015bayesian}, the reference for which is in the main paper.

\section{Conclusions}
\noindent
Our work presents a method for learning tensor-valued functional relations
between a sytem parameter vector, and a tensor-valued observable, multiple
measurements of which build up a hypercuboidally-shaped data, that is in
general not continuous, thus demanding a non-stationary covariance structure
of the invoked GP. We clarify the need for generalising a stationary
covariance to one in which the hyperparameters (correlation length scales
along each direction of the space of the system parameter vector) need to be
treated as dependent on the sample function of the invoked GP. We address this
need by modelling the sought tensor-valued function with a tensor-variate GP,
each parameter of the covariance function of which, is modelled as a
dynamically varying, scalar-valued function that is treated as a realisation
from a scalar-variate GP with distinct covariance structure, that we
parametrise. We employ Metropolis-within-Gibbs-based inference, that allows
comprehensive and objective uncertainties on all learnt unknowns. Subsequent
to the learning of the sought tensor-valued function, we make an inverse
Bayesian prediction of the system parameter values at which test data on the
observable is realised. While in this work we focussed on the learning given
discontinuous data, the inclusion of non-stationarity in the covariance is a
generic cure for non-stationary data; we will consider an application to a
temporally varying, econometric dataset in a future contribution.

\renewcommand\baselinestretch{1.}
\small
\bibliographystyle{ECA_jasa}


\pagebreak

\begin{frontmatter}
\title{Supplementary Material for "Deep Bayesian Supervised Learning given Hypercuboidally-shaped,
Discontinuous Data, using Compound Tensor-Variate and Scalar-Variate Gaussian Processes"
}
\runtitle{Compound Tensor $\&$ Scalar-Variate GPs}

\begin{aug}
\author{
{\fnms{Kangrui} \snm{Wang}\thanksref{t2,m2}\ead[label=e2]{kw202@le.ac.uk}},
{\fnms{Dalia} \snm{Chakrabarty}\thanksref{t1,m1}\ead[label=e1]{d.chakrabarty@lboro.ac.uk}}
},
\thankstext{t2}{PhD student in Department of Mathematics, University of Leicester} 
\thankstext{t1}{Lecturer in Statistics, Department of Mathematical Sciences,
  Loughborough University}

\runauthor{Wang $\&$ Chakrabarty}

\affiliation{University of Leicester, Loughborough University}

\address{\thanksmark{m2} Department of Mathematics\\
University of Leicester\\
Leicester LE1 3RH,
U.K.\\
\printead*{e2}
}

\address{\thanksmark{m1} Department of Mathematical Sciences\\
Loughborough University\\
Loughborough LE11 3TU,
U.K.\\
\printead*{e1}
}

\end{aug}




\end{frontmatter}

Here, we refer to the main paper as ``KWDC''.

\section{Algorithm used to make inference in KWDC}
\begin{enumerate}
\item[1] In the $0$-th iteration, set all unknown parameters to arbitrarily
  chosen seed values : $a_c$ is set to the seed $a_{c}^{(0)}$ and $\delta_c$ is
  set to the seed $\delta_{c}^{(0)}$ $\forall c=1,\ldots,d$; $\sigma_q$ is set
  to the seed $\sigma_{q}^{(0)}$ $\forall q=1,\ldots, q_{max}$. We also set
  the length scales of the kernel-parametrised covariance matrix $\bSigma_p$
  to their respective seed values, i.e. set $\ell_c:=\ell_c^{(0)} \forall
  c=1,\ldots,d$.
\item[2(a)] At the beginning of the $t$-th iteration, for $t < t_0$, the
  current value of the $ell_c$ parameter is $\ell_c^{(t-1)}$. We propose the
  new value, $\ell_c^{(t\star)}$ from a Gaussian distribution, the mean of
  which is the current value of this parameter, namely $\ell_c^{(t-1)}$, and
  the variance of which is chosen experimentally, to be the constant $v_c$,
  i.e.
$$ \ell_c^{(t\star)} \sim {\cal N}(\ell_c^{(t-1)}, v_c).$$
This proposing is undertaken $\forall c=1,\ldots,d$. We choose adequate priors
(often Gaussian priors with mean $\ell_c^{(0)}$ and large constant variances)
on all the $\ell_c$ parameters. We refer to these priors as
$pi_0(\ell_1,\ldots,\ell_d)$. The proposed $\ell_c$ parameters then inform the kernel
function $K_p(\cdot,\cdot)$ that is used to kernel parametrise the covariance
matrix $\bSigma_p$, s.t. the proposed kernel-parametrised covariance matrix in
the $t$-th iteration, $t<t_0$, is $\bSigma_p^{(t\star)}=\bA_p^{(t\star)} (\bA_p^{(t\star)})^T$, while the current values of the $\ell_c$ parameters
suggest that the current kernel-parametrised covariance matrix is
$\bSigma_p^{(t-1)}=\bA_p^{(t-1)} (\bA_p^{(t-1)})^T$. We compute
$\bSigma_p^{(t\star)}=\left[\exp\left(-\frac{(\bs_i-\bs_j)^2}{2(\bell^{(t\star)})^2}\right)\right]$,
where
$\bell^{(t\star)}:=(\ell_1^{(t\star)},\ldots,\ell_d^{(t\star)})^T$. Similarly, 
$\bSigma_p^{(t-1)}$ is defined in terms of the vector of the length scales,
$\bell^{(t-1)}$ that is the current value of the end of the $t-1$-th iteration.
\item[3(a)] We compute the ratio of the posterior probability densities of the
  proposed $\ell_c$ parameters given the training data ${\bf D}$ to the
  posterior of the current $\ell_c$ values. The ratio of the proposal
  densities does not get invoked since the proposal is symmetric. Thus, the
  ratio that we compute is  
\small{
$$r :=\displaystyle{\frac{
 \exp(-\Vert ({\bf D})\times_1 \bA_1^{-1} \ldots \times_p
 (\bA_p^{(t\star)})^{-1} \ldots \times_k \bA_k^{-1}
 \Vert^2/2)\pi_0(\ell_1^{(t\star)},\ldots,\ell_d^{(t\star)})}
{
\exp(-\Vert ({\bf D})\times_1 \bA_1^{-1} \ldots \times_p
 (\bA_p^{(t-1)})^{-1} \ldots \times_k \bA_k^{-1}
 \Vert^2/2)\pi_0(\ell_1^{(t-1)},\ldots,\ell_d^{(t-1)})}},$$}
and compare $r$ with the value of the uniform random variate $U\sim{\cal
  U}[0,1]$.\\ 
--If $u \geq r$, we reject the proposed values
$\ell_1^{(t\star)},\ldots,\ell_d^{(t\star)}$, and set the current value of the
$\ell_c$ parameter at the end of the $t$-th iteration to be
$\ell_c^{(t)}=\ell_c^{(t-1)} \forall c=1,\ldots,d$.\\
--If $u < r$, we accept the proposed values
$\ell_1^{(t\star)},\ldots,\ell_d^{(t\star)}$, and set the current value of the
$\ell_c$ parameter at the end of the $t$-th iteration to be
$\ell_c^{(t)}=\ell_c^{(t\star)}, \forall c=1,\ldots,d$.\\
Thus, for iterations $t<t_0$, the first block update is a manifestation of
Random Walk.
\item[2(b)] If the iteration number $t$ is s.t. $t \geq t_0$, then we model
  the $\ell_c$ parameters, each as a realisation from a distinct
  scalar-variate GP, the covariance structure of which is kernel-parametrised
  s.t. these kernel hyperparameters are $a_c$ and $\delta_c$, $\forall
  c=1,\ldots,d$. Then the counterpart of point 2(a), within the ``nested GP''
  approach, is now discussed. Let the current values of $a_c$ and $\delta_c$
  be $a_c^{(t-1)}$ and $\delta_c^{(t-1)}$. We propose values of these
  parameters in the $t$-th iteration, respectively from a Truncated-Normal
  density (left-truncated at 0, mean $a_c^{(t-1)}$, and experimentally chosen
  constant variance $v_a^{(c)}$), and a Normal (mean $\delta_c^{(t-1)}$, and
  experimentally chosen constant variance $v_\delta^{(c)}$), i.e.
$$ a_c^{(t\star)} \sim {\cal TN}(a_c^{(t-1)}, 0, v_a^{(c)}),\quad\forall
c=1,\ldots,d, $$
$$ \delta_c^{(t\star)} \sim {\cal N}(\delta_c^{(t-1)}, 0,
v_\delta^{(c)}),\quad\forall c=1,\ldots,d. $$ Now the GP that $\ell_c$ is
modelled with, currently has a covariance structure that is parametrised by
the $t_0\times t_0$-dimensional covariance matrix $\bS_{c}$ s.t. currently the
$ij$-th element of this matrix is the covariance between the value of $\ell_c$
that was current in the $t-i$-th iteration and the value current in the
$t-j$-th iteration, i.e. $\bS_c^{(t-1)}=\left[ a_c^{(t-1)} \exp\left(-\frac{(i
      - j)^2}{2(\delta_c^{(t-1)})^2}\right)\right]$; $i,j=1,\ldots,t_0$. Thus,
at any fixed value (say $i$) of the input variable--the iteration number--the
$i$-th diagonal element $a_c^{(t-i)}$ of the covariance matrix $\bS$, gives
the variance of the Gaussian distribution that $\ell_c^{(t-i)}$ can be
considered to be sampled from. Following this, we reduce this scalar-variate
GP to a Gaussian distribution, by fixing the value of the input-space variable,
(which in this situation is the iteration number), to $t$. Then the proposed
variance of the Gaussian distribution that $\ell_c$ is sampled from, at the
$t$-th iteration, is the proposed value of the $a_c$ parameter in this
iteration, i.e. $a_c^{(t\star)}$. Under a Random Walk paradigm, the mean of
this Gaussian distribution is the current value of the $\ell_c$ parameter. In
other words, the model suggests that
$$\ell_c^{t\star}\sim{\cal N}(\ell_c^{(t-1)}, a_c^{(t\star)}).$$
This is essentially suggesting an adaptive Random Walk updating scheme for the
$\ell_c$ parameter, $\forall c=1,\ldots,d$.

\item[3(b)] This is the counterpart of point 3(a) for the $t\geq t_0$
  iterations, i.e. when the ``nested GP'' model is in play. Again, as during
  the discussion of 3(a), here we compute the ratio of the posterior
  probability densities of the proposed to the current values of the unknowns
  that are updated in the first block. This posterior has the contribution
  from the $k$-th ordered tensor-normal likelihood that the observable $\bV$
  ($=\bxi(\bS)$) is modelled as a realisation from. But the covariance matrix
  $\bSigma_p$ of this tensor-normal likelihood is kernel-parametrised, with
  a GP prior imposed on each length scale parameters $\ell_1,\ldots,\ell_d$ of
  this kernel function. Then the joint probability density of the set of
  $t_0$ number of realisations $\{\ell_c^{(t-t_0)},\ldots,\ell_c^{(t-1)}\}$ of
  the parameter $\ell_c$, from the 
  scalar-variate, zero-mean GP, is multivariate normal with mean vector ${\bf
    0}$ and covariance matrix $\bS_c=[s_c^{(ij)}]$, which is
  kernel-parametrised as $s_c^{ij} =
  \displaystyle{a_c\exp\left[-\frac{(i-j)^2}{2\delta_c^2}\right]}$, s.t. the
  current value of the covariance matrix in the $t$-th iteration is
  $\bS_c^{(t-1)}=\displaystyle{\left[{a_c^{(t-1)}\exp\left[-\frac{(i-j)^2}{2(\delta_c^{(t-1)})^2}\right]}\right]}$,
and the proposed value of the covariance matrix in the $t$-th iteration is 
$\bS_c^{(t\star)}=\displaystyle{\left[{a_c^{(t\star)}\exp\left[-\frac{(i-j)^2}{2(\delta_c^{(t\star)})^2}\right]}\right]}$. In other
  words, the prior probability density on the $t_0$-dimensional vector
  $\bell_c^{(t_0)}:=(\ell_c^{(t-t_0)},\ldots,\ell_c^{(t-1)})^T$ of
  values of the $c$-th length scale parameter, over the last $t_0$ iterations
  is multivariate normal, with mean vector ${\bf 0}$ and covariance matrix
  $\bS$, i.e.
$$\pi_0(\ell_c^{(t-t_0)},\ldots,\ell_c^{(t-1)}) =
\displaystyle{\frac{1}{\sqrt{\det(2\pi\bS)}}\exp\left[-\frac{1}{2}
    (\bell_c^{(t_0)})^T \bS^{-1}(\bell_c^{(t_0)})\right]},$$ where
$\bell_c^{(t_0)}$ and the current and proposed $\bS$ (as a function of current
and proposed $a_c$ and $\delta_c$ values) are defined
above. This is true $\forall c=1,\ldots,d$. Then the ratio of the posterior
probability density of the proposed to the current values of the unknowns
$a_1,\ldots,a_d,\delta_1,\ldots,\delta_c$, given the data is
{\tiny{
$$r :=\displaystyle{
\frac{
 \exp(-\Vert ({\bf D})\times_1 \bA_1^{-1} \ldots \times_p
 (\bA_p^{(t\star)})^{-1} \ldots \times_k \bA_k^{-1}
 \Vert^2/2)
\prod\limits_{c=1}^d
\frac{1}{\sqrt{\det(2\pi\bS^{(t\star)})}}
\exp\left[-\frac{1}{2}
          (\bell_c^{(t_0)})^T (\bS^{(t\star)}){-1}(\bell_c^{(t_0)})
    \right]
\prod\limits_{c=1}^d {\cal TN}(a_c^{(t\star)},0,v_a^{(c)})
}
{
\exp(-\Vert ({\bf D})\times_1 \bA_1^{-1} \ldots \times_p
 (\bA_p^{(t-1)})^{-1} \ldots \times_k \bA_k^{-1}
 \Vert^2/2) 
\prod\limits_{c=1}^d
\frac{1}{\sqrt{\det(2\pi\bS^{(t-1)})}}
\exp\left[-\frac{1}{2}
          (\bell_c^{(t_0)})^T (\bS^{(t-1)}){-1}(\bell_c^{(t_0)})
\right]
\prod\limits_{c=1}^d {\cal TN}(a_c^{(t-1)},0,v_a^{(c)})
}
},$$}}

and compare $r$ with the value of the uniform random variate 
$U\sim{\cal  U}[0,1]$.\\
--If $u \geq r$, we reject the proposed values of the unknowns, and set the current value of the
$\delta_c$ and $a_c$ parameters at the end of the $t$-th iteration to be
$a_c^{(t)}=a_c^{(t-1)}, \delta_c^{(t)}=\delta_c^{(t-1)},  \forall c=1,\ldots,d$.\\
--If $u < r$, we accept the proposed values,
and set $a_c^{(t)}=a_c^{(t\star)}, \delta_c^{(t)}=\delta_c^{(t\star)},  \forall c=1,\ldots,d$.\\ 
Thus, the updating for the $\delta_c$ parameters is Random Walk as they are
proposed from a Gaussian.
\item[4] In this point, we discuss the updating of the remaining unknowns, $\sigma_1,\ldots,\sigma_{q_{max}}$,
  i.e. the elements of covariance matrices of the tensor-normal
  joint probability distribution of a set of realisations of $\bV$
  ($=\bxi(\bS)$), that are not kernel-parametrised, but learnt directly by
  MCMC. These elements can in general be positive of negative, and so, in the
  $t$-th iteration, we
  propose them from a Gaussian with mean given by their current value
  $\sigma_q^{(t-1)}$, and experimentally fixed variance $v_q$,
  $q=1,\ldots,q_{max}$, i.e. the proposed value is
$$\sigma_q^{(t\star)}\sim {\cal N}(\sigma_q^{(t-1)}, v_q)\quad\forall
q=1,\ldots,v_q.$$
Then using these proposed values of the elements, the proposed values of all
covariance matrices other than $\bSigma_p$ that is kernel-parametrised, are
$\bSigma_1^{(t\star)},\ldots,\bSigma_{p-1}^{(t\star)},\bSigma_{p+1}^{(t\star)},\ldots,\bSigma_k^{(t\star)}$,
while their current values (populated by the current values $\sigma_q^{(t-1)}$
of elements) are
$\bSigma_1^{(t-1)},\ldots,\bSigma_{p-1}^{(t-1)},\bSigma_{p+1}^{(t-1)},\ldots,\bSigma_k^{(t-1)}$. The
priors on the $\sigma_q$ parameters are treated as Gaussians with mean given
by the seed value of $\sigma_q^{(0)}$ and experimentally chose, large
variance, to suggest vague priors. Thus, the ratio of the posterior
probability of the proposed and current $\sigma_1,\ldots,\sigma_{q_{max}}$,
parameters, given the training data ${\bf D}$, at the already updated
$\bSigma_p$ to value $\bSigma_p^{(t)}=(A_p^{(t)})^T A_p^{(t)}$, is
{\tiny{
$$r :=\displaystyle{
\frac{
    \exp(-\Vert ({\bf D})\times_1 (\bA_1^{(t\star)}){-1} \ldots \times_{p-1}
    (\bA_{p-1}^{(t\star)})^{-1}\times_{p} (\bA_{p}^{(t)})^{-1} \times_{p+1} (\bA_{p+1}^{(t\star)})^{-1} \ldots
    \times_k (\bA_k{(t\star)})^{-1}
    \Vert^2/2)
\pi_0(\sigma_1^{(t\star)},\ldots,\sigma_{q_{max}}^{(t\star)})
}
  { \exp(-\Vert ({\bf D})\times_1 (\bA_1^{(t-1)}){-1} \ldots \times_{p-1}
    (\bA_{p-1}^{(t-1)})^{-1}\times_{p} (\bA_{p}^{(t\star)})^{-1}\times_{p+1} (\bA_{p+1}^{(t-1)})^{-1} \ldots
    \times_k (\bA_k{(t-1)})^{-1}
    \Vert^2/2)
\pi_0(\sigma_1^{(t-1)},\ldots,\sigma_{q_{max}}^{(t-1)})
}
},$$}}
and compare $r$ with the value of the uniform random variate $U\sim{\cal
  U}[0,1]$.\\
It is possible that some of these $k-1$ covariance matrices are not learnt
using MCMC, but empirically estimated--in that case, the empirically estimated
value of the corresponding covariance matrix is used in both denominator and
numerator in the definition of the likelihood above, instead of its current
and proposed values respectively.
--If $u \geq r$, we reject the proposed values of the unknowns, and set the current value of the
$\sigma_q$ at the end of the $t$-th iteration to be
$\sigma_q^{(t)}=\sigma_q^{(t-1)}, \forall q=1,\ldots,q_{max}$.\\
--If $u < r$, we accept the proposed values,
and set $\sigma_q^{(t)}=\sigma_q^{(t\star)},  \forall q=1,\ldots,q_{max}$.\\ 
Thus, the updating for the $\sigma_q$ parameters is Random Walk as they are
proposed from a Gaussian.
\item[5] Repeat steps 2-5 until $t=t_{max}$, the length of the chain.
\end{enumerate}

\section{Results}
\label{sec:resutls}
\noindent
Figure~\ref{fig:nohypergp_with_wo} display traces of the sought parameters
learnt using the $nonnested-GP$. In the following figure
(Figure~\ref{fig:hyper200_nohyper_gp_nopred}), marginal posterior probability
density of the sought parameters, given training data, (depicted as
histograms), obtained using the $nonnoested-GP$ model, are compared to the
corresponding result obtained from the $nested-GP$ model. In this $nested-GP$
model, the covariance matrix $\bSigma_3$ (that bears information about the
covariance structure between sheets of data generated at different values of
the input variable $\bS=(S_1, S_2)^T$), is parameterised using a kernel, each
length-scale hyperparameter of which, is itself modelled as a
dynamically-varying function that is considered sampled from a GP.  For each
such scalar-variate GP that generates the length-scale $\ell_c$, $c=1,\ldots,
d=2$ the covariance matrix is itself kernel-parametrised using a stationary
kernel, with an amplitude parameter value $a_c$ and length-scale parameter
$\delta_c$.

\begin{figure}
     \begin{center}
  {
       \includegraphics[width=10cm]{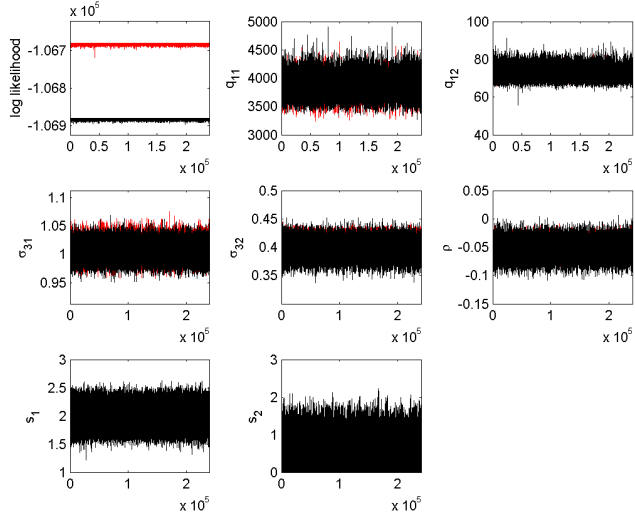}
  }
     \end{center}
     \caption{Results from run done with training data ${\bf D}$ with the $nonnested-GP$ model, are shown in grey (or red in the
       electronic copy of the paper) while results from run undertaken with
       training and test data, $\bD^{\star}$, in this $nonnested-GP$ model,
       are depicted in black. Traces of the logarithm of the likelihood are
       displayed from the two runs in the top left panel. Reciprocal of the
       length scale parameters are the shown in the top middle and right
       panels; here $q_c=\ell_c^{-1}, \: c=1,2$. Traces of the learnt diagonal
       elements $\sigma_{11}^{(1)}$ and $\sigma_{22}^{(1)}$, of the
       covariance matrix $\bSigma_1$, are shown in the mid-row, left and
       middle panels. Trace of the correlation
       $\rho=\displaystyle{\frac{\sigma_{12}}{\sqrt{\sigma_{11}^{(1)}\sigma_{22}^{(1)}}}}$ 
       is displayed in the mid-row right
       panel. Prediction of the values of the input parameter
       $\bS=(S_1,S_2)^T$ is possible only in the run performed with both
       training and test data. Traces of $S_1$ and $S_2$ values learnt via
       MCMC-based sampling from the joint of all unknown parameters given
       $\bD^{\star}$, are shown in the lower panel.  
       }
\label{fig:nohypergp_with_wo}.
\end{figure}

\begin{figure}
     \begin{center}
  {
       \includegraphics[width=10cm]{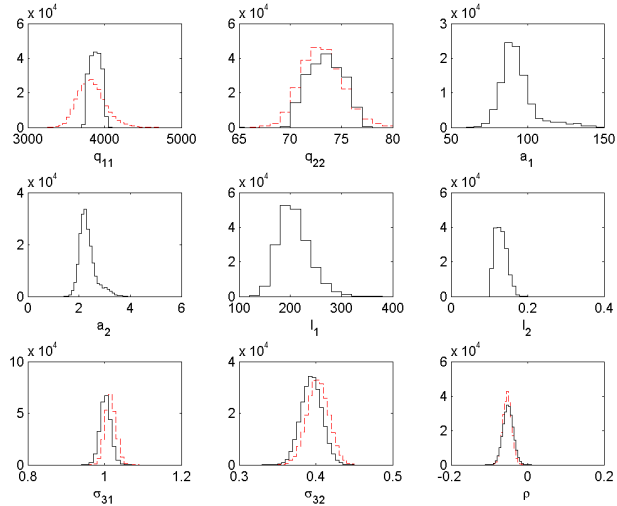}
  }
     \end{center}
     \caption{Marginal posterior probability densities of unknown parameters,
       given training data ${\bf D}$, are depicted as histograms. Histograms
       obtained from the run done with only 
       within the $nested-GP$ model, shown in black, as distinguished from the
       results of learning given the same data, and the $nonnested-GP$
       model depicted in grey (or red in the
       electronic copy of the thesis). Given the data used here, $s_1^{(test)}$ and
$s_2^{(test)}$, are not learnt.
       }
\label{fig:hyper200_nohyper_gp_nopred}.
\end{figure}

95$\%$ HPD credible regions computed on each learnt parameter given
the $nonnested-GP$ model, are displayed in Table~\ref{tab:tab1}. 
Again, a
similar set of results from the chains run with the $nested-GP$ models 
are displayed in Table~\ref{tab:tab2}. The results
on prediction of $\bs^{(test)}$ are
also presented in Table~\ref{tab:tab1} and Table~\ref{tab:tab2}.

\begin{table}[htbp]
\caption{$95\%$ HPD credible regions on each learnt parameter, from the
  $nonnested-GP$ model}
\label{tab:tab1}
\centering
\begin{tabular}{|l|l|l|l|}
\hline
{\mbox{Parameters}} & {\mbox{using only training data}}      & {\mbox{sampling from posterior predictive}} & {\mbox{sampling from joint}}         \\ \hline
$q_{1}$      & {[}3492.1,4198.1{]}   &                & {[}3573.2,4220.8{]}   \\ \hline
$q_{2}$      & {[}68.92,76.88{]}     &                & {[}68.37,77.33{]}     \\ \hline
$\sigma^{(1)}_{11}$ & {[}0.9837,1.0380{]}   &                & {[}0.9797,1.0338{]}   \\ \hline
$\rho$              & {[}-0.0653,-0.0275{]} &                & {[}-0.0798,-0.0261{]} \\ \hline
$\sigma^{(1)}_{22}$ & {[}0.3747,0.4234{]}   &                & {[}0.3703,0.4237{]}   \\ \hline
$s_1$               & -                     & {[}1.8212,2.1532{]} & {[}1.8038,2.1960{]} \\ \hline
$s_2$               & -                     & {[}0.0421,1.2052{]}  & {[}0.0157,1.2172{]}   \\ \hline
\end{tabular}

\end{table} 

\begin{table}[htbp]
\caption{$95\%$ HPD credible regions on each learnt parameter, from the
  $nested-GP$ model}
\label{tab:tab2}
\centering
\begin{tabular}{|l|l|l|l|}
\hline
{\mbox{Parameters}} & {$t_0=200$}      & {$t_0=100$} & {$t_0=50$}         \\ \hline
$q_{1}$      & [3740.96, 3917.32] & [3710.4, 4011.66] & [3650.92, 4033.51]  \\ \hline
$q_{2}$      & [70.34, 75.70] & [70.42, 76.43] & [68.94, 76.22]    \\ \hline
$a_{1}$      & [78.67, 124.02] & [43.82, 167.35] & [48.27, 219.37]  \\ \hline
$a_2$        & [1.88, 3.03] & [2.12, 3.57] & [1.64, 6.16] \\ \hline
$d_1$ & [155.64, 301.65] & [78.47, 521.67] & [123.42, 828.37] \\ \hline
$d_2$ & [0.10, 0.15] & [0.12, 0.46] & [0.10, 0.52]\\ \hline
$\sigma_{31}$ & [0.97, 1.02] & [0.97, 1.03] & [0.98, 1.02]\\ \hline
$\sigma_{32}$ & [0.37, 0.41] & [0.37, 0.41] & [0.38, 0.41]\\ \hline
$\rho$ &  [-0.076, -0.031] & [-0.073, -0.03] & [-0.075, -0.032]\\ \hline
$s_1$ & [1.83, 2.16] &[1.77, 2.22] &[1.76, 2.24] \\ \hline
$s_2$ & [0.138, 1.15] &[0.112, 1.16] & [0.071, 1.15]\\ \hline
\end{tabular}

\end{table}

\section{Model Checking}
\label{sec:model_ch}
\noindent
One way to check for the model and results, given the data at hand, is to
generate data from the learnt model, and then compare this generated data with
the observed data. Now, the model that we learn, is essentially the
tensor-variate GP that is used to model the functional relationship
$\bxi(\cdot)$ between the observable $\bV$ and the input-space parameter
$\bS$. By, saying that we want to generate new data, we imply the
prediction of a new value of $\bV$, given the learnt model of this GP. 

This prediction of new datum on $\bV$, is fundamentally different from the
inverse prediction of the value $\bs^{(test)}$ of the input-space parameter
$\bS$ that we have undertaken--as discussed above--where the sought
$\bs^{(test)}$ is the value of $\bS$ at which test data $\bv^{(test)}$ on
$\bV$ is recorded. There is no closed-form solution to the posterior
predictive of $\bs^{(test)}$ given the test data and the learnt GP parameters.

In fact, at chosen values of $\bS$--chosen to be the design points in the
training data, for convenience--the covariance function $\bSigma_3$ of this
GP, (modelled as a GP with an estimated mean), is known, given the learnt
values of the parameters of the kernel used to parametrise
$\bSigma_3$. However, in our Bayesian inference, we do not really learn a value
of any parameter, but learn the marginal posterior of each unknown parameter,
given the data. Thus, in order to pin the value of each element of $\bSigma_3$, we
identify the parameter value corresponding to a selected summary of this
posterior distribution. For example, we could choose to define $\bSigma_3$ at
pairs of known design points $\bs_i$, $\bs_j$, and the modal value of
$\ell_c$--identified from the marginal posterior of $\ell_c$ inferred upon,
given the data. Here $i,j\in\{1,\ldots,n=216\}$. The resulting value of the
$ij$-th element of $\bSigma_3$ will then provide one summary, of the
covariance between the $50\times 2$ stellar velocity matrix $\bv_i$ realised
at $\bS=\bs_i$, and $\bv_j$ realised at $\bS=\bs_j$. Similarly, the learnt
modal values of the parameters $\sigma_{11}^{(1)}$, $\sigma_{22}^{(1)}$ and
$\rho$ define one summary of the covariance matrix $\bSigma_1$ that informs on
the covariance between the 2 $216\times 50$-dimensional sheets of data on each
component of the 2-dimensional stellar velocity vector. Again, other summaries
of the parameter values could be used as well, for example, the parameter
value identified at the mean of the marginal posterior density of this
parameter, as learnt given the training data, is also used.

In this model checking exercise, the unknowns are certain elements of the
cuboidally-shaped data comprising the 216 number of $50\times 2$-dimensional
stellar velocity matrices generated by astronomical simulation, at chosen
design points $\bs_1,\ldots,\bs_{216}$, i.e. the 3rd-order tensor ${\bf
  D}_V:=\{ \bv_1, \vdots \bv_2, \vdots \ldots, \vdots \bv_{216} \}$. In the
first attempt to model checking, we generate all elements of the $q$-th such
simulated stellar velocity matrix $\bv_q$, (that is generated at the known
design point $\bs_q$), i.e. generate values of $50\times 2=100$ unknown
elements of matrix $\bv_q$. We refer to these
unknown elements of $\bv_q$ as $v_{11}^{(q)}, v_{12}^{(q)},
v_{21}^{(q)},\ldots, v_{50,2}^{(q)}$. The 3rd-ordered tensor without the $q$-th
slice, is referred to as ${\bf
  D}_V^{(-q)}:=\{ \bv_1, \vdots \bv_2, \vdots \ldots,
\vdots\bv_{q-1},\vdots\bv_{q+1}, \vdots \bv_{216} \}$. The joint posterior
probability density of the 100 unknowns, at the learnt modal values
$q_1^{(mode)},q_2^{(mode)},\sigma_{11}^{(1,mode)},\sigma_{22}^{(1,mode)},\rho^{(mode)}$
is
$$\pi\left(v_{11}^{(q)}, v_{12}^{(q)}, v_{21}^{(q)},\ldots,
  v_{50,2}^{(q)}\vert {\bf D}_V^{(-q)}\right) \propto {\cal TN}_{2\times
  50\times 216}({\hat{\bM}}, \bSigma_1^{(mode)}, {\hat{\bSigma}}_2, \bSigma_3^{(mode)}),$$
where, \\
--the 3rd-ordered tensor-valued data that enters the parametric form of the
3rd-ordered tensor-normal density on the RHS, has elements of its $q$-th
slice, (out of a total of 216 slices), unknown. All other elements of this
$2\times 50\times 216$-dimensional tensor are known; 
--uniform priors are used on the unknowns; 
--$\bSigma_1^{(mode)}$ is the
learnt modal value of the $2\times 2$-dimensional covariance matrix
$\bSigma_1$ s.t. its $1,1$-th element is $\sigma_{11}^{(1,mode)}$, $2,2$-th
element is $\sigma_{22}^{(1,mode)}$, $1,2$-th element is
$\rho^{(mode)}\sqrt{\sigma_{22}^{(1,mode)}\sigma_{11}^{(1,mode)}}$, and the
$2,1$-th element is equal to the $1,2$-th element (as this is a covariance matrix);\\  
--$\bSigma_3^{(mode)}$ is the learnt modal value of the $216\times
216$-dimensional covariance matrix $\bSigma_3$, s.t. its $ij$-th element is
$\displaystyle{\exp\left[-(\bs_i-\bs_j)^T \bQ^{(mode)} (\bs_i-\bs_j)\right]}$,
with the non-zero elements of the diagonal $2\times 2$-dimensional
$\bQ^{(mode)}$-matrix given by $q_1^{(mode)}$ and $q_2^{(mode)}$. $\bs_i$
being the $i$-th design point, is known $\forall i,j=1,\ldots,216$.

\begin{figure}
\vspace*{-.9in}
     \begin{center}
$\begin{array}{c c c}
  {
       \includegraphics[width=5cm]{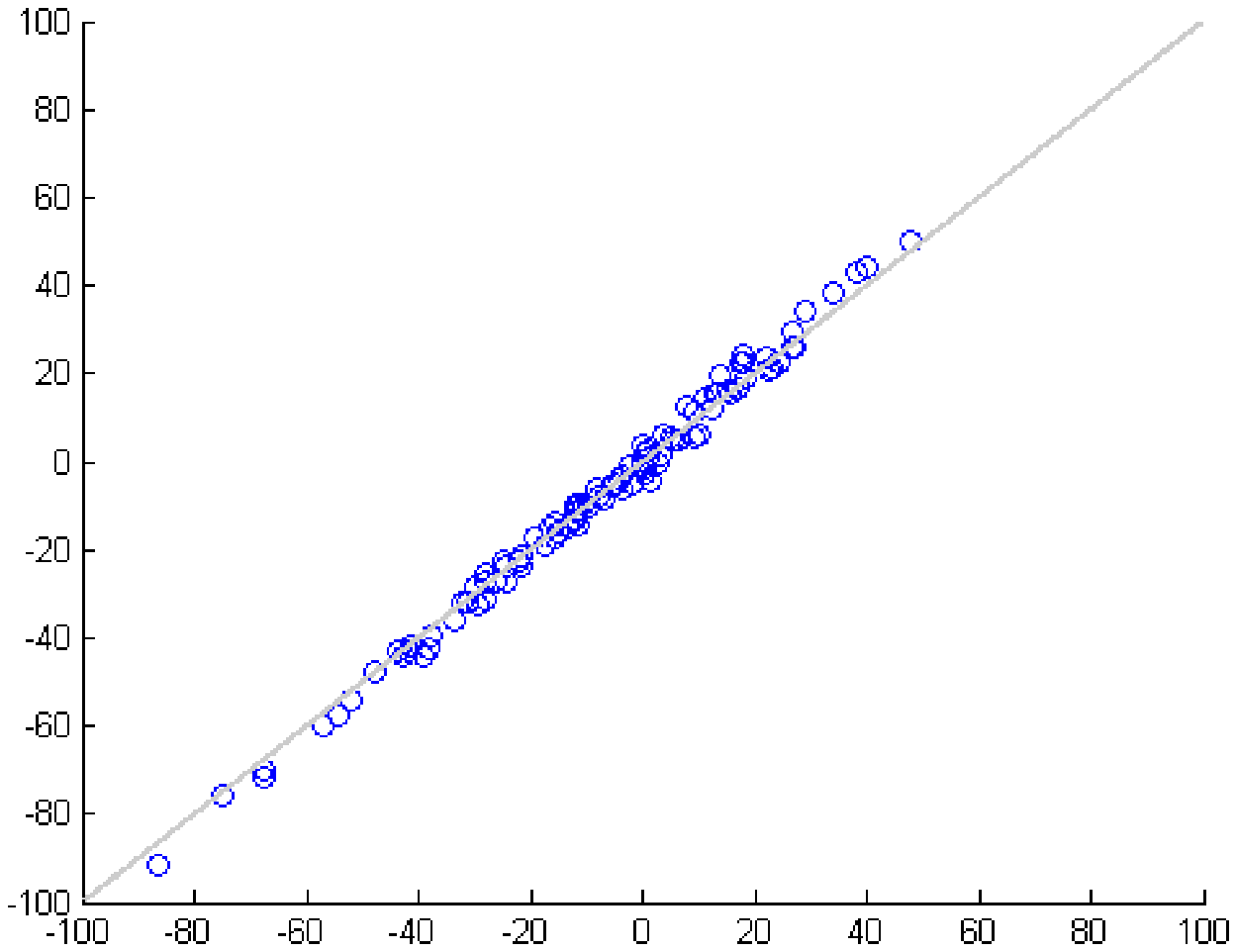}
       \includegraphics[width=5cm]{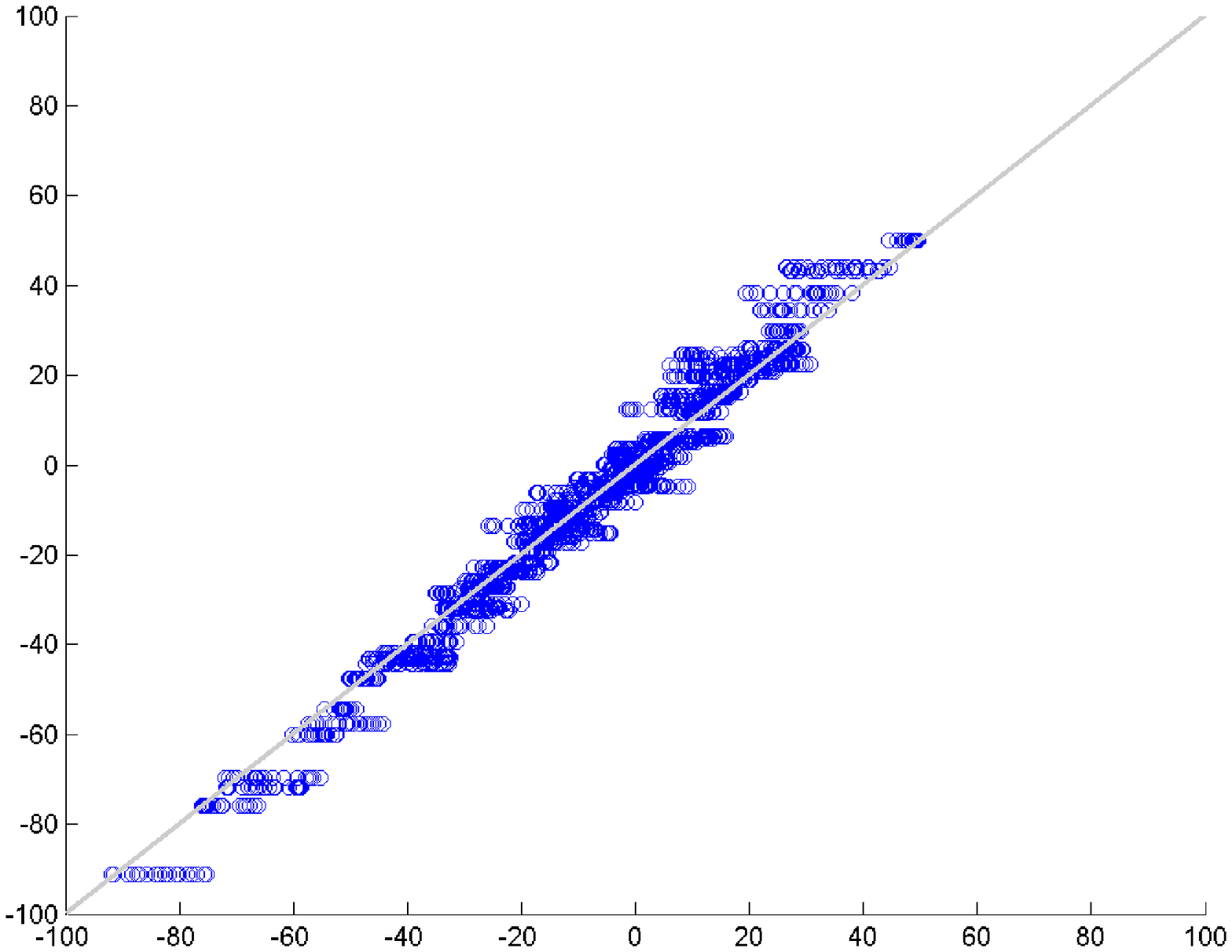}
       \includegraphics[width=5cm]{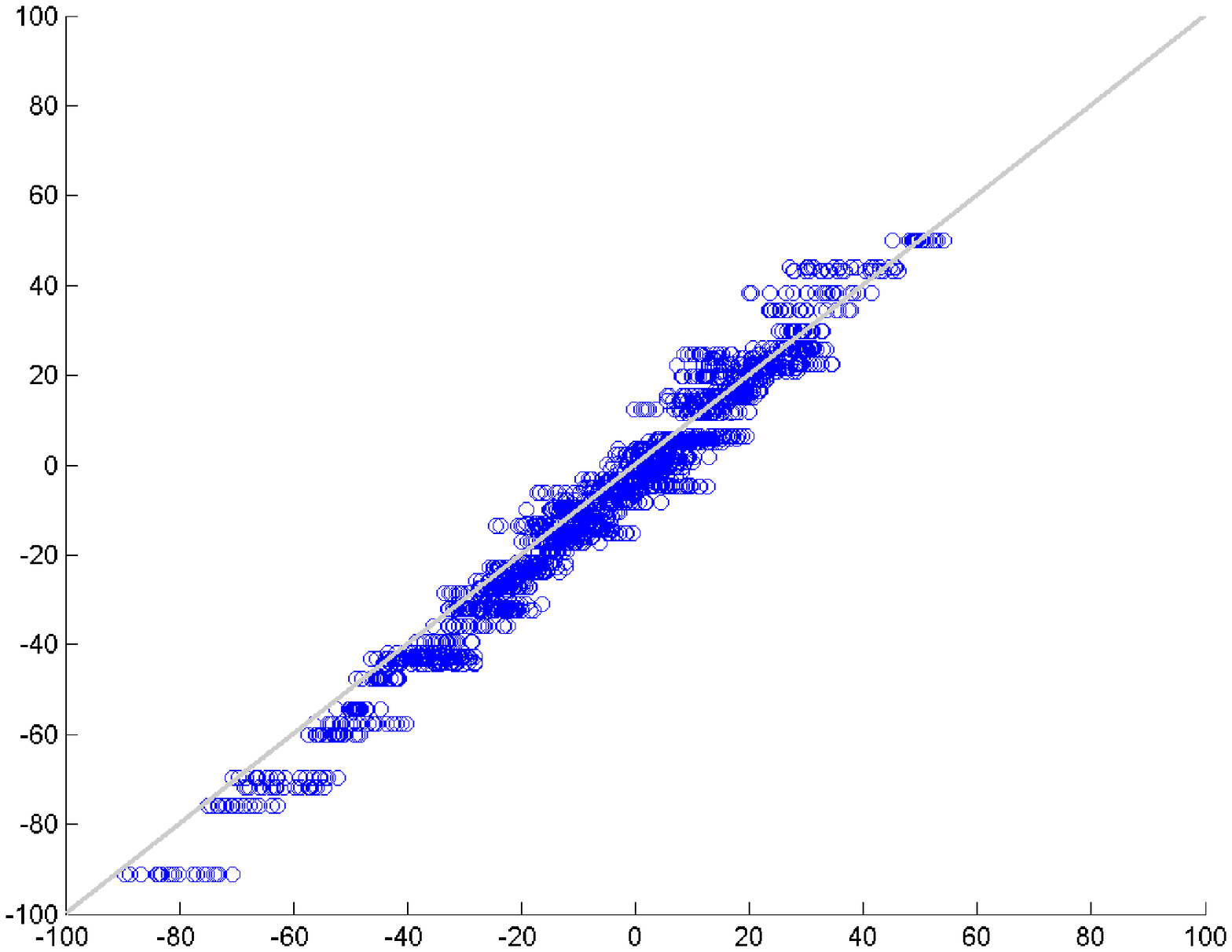}
  }
\end{array}$
     \end{center}
     \caption{{\it Left:} Comparison of the observed and predicted values of
       elements of the $q$-th $50\times 2$-dimensional stellar velocity matrix
       $\bv_q$, where 216 such matrices constitute the training data ${\bf
         D}_V$ (on velocities of 50 stellar neighbours of the Sun) that is
       generated by astronomical simulations. The predicted or learnt values
       are obtained from a RW-MCMC chain undertaken with the all elements of
       the 3rd-order tensor ${\bf D}_V$ known, except for the elements of its
       $q$-th slice, and the learnt values of the parameters of the GP used to
       model the data at hand, at a chosen summary, namely the mode, of the
       marginal posterior density of each such learnt GP parameter. Here
       $q$=200. Equality of the observed and predicted values of he elements
       of $\bv_q$ is indicated by the point lying on the drawn straight line
       with unit slope; the predicted values are found to lie close to this
       line. {\it Middle:} Depicts a similar comparison, as displayed in the
       left panel, but for 20 distinct values of $q$, namely for
       $q=190,191,\ldots,210$. {\it Right:} Depicts the same comparison of
       observed and predicted values of elements of 20 slices
       $\bv_{190},\ldots,\bv_{210}$, but this time, the employed GP parameters
       are the means of their respective marginals. Thus, this model-checking
       exercise checks for the used models and results obtained (given the
       data at hand) at the mean of the respective posterior.}
\label{fig:mod_chk}.
\end{figure}

To learn the 100 unknowns $v_{11}^{(q)}, v_{12}^{(q)}, v_{21}^{(q)},\ldots,
v_{50,2}^{(q)}$, we run a RW Metropolis-Hastings chain, with the data defined
as above, the known 216 number of design points, and all the learnt, modal
parameter values. The joint posterior of the unknowns that defines the
acceptance ratio in this chain, is given as in the last equation. The chain is
run for 20,000 iterations, for $q$=200, and the mean of the last 1000
samples of $v_{ij}^{(200)}$ is recorded, where $i=1,\ldots,50$, $j=1,2$. 
These sample means ${\bar{v}}_{ij}^{(200)}$ then constitute the learnt value
of the 100 elements of the 200-th stellar velocity matrix $\bv_{200}$. We plot
the pairs of learnt value ${\bar{v}}_{ij}^{(200)}$ of elements  of the
$\bv_{200}$ matrix, against the empirically observed value of this element,
$\forall i=1,\ldots,50$, $\forall j=1,2$. The plot is presented in the left
panel of Figure~\ref{fig:mod_chk}. Thus, each point on this plot is a pair
$({\mbox{empirically observed value of}}\:{{v}}_{ij}^{(200)},
{\bar{v}}_{ij}^{(200)})$, and there are $50\times 2=100$ points in this plot. 
The points are found to lie around the straight line with slope 1. In other
words, the values of the elements in the $q$-th (=200-th) slice of the
training data that we learn using our model, are approximately equal
to the empirically observed values of these elements. This is corroboration of
our models and results.

We attempt a similar prediction of elements of the training data for other
values of $q$, namely for $q=190,\ldots,210$. The learnt values of elements
of $\bv_q$, for each $q$, is plotted against the empirically observed elements
of $\bv_q$. We have superimposed results for all 20 values of $q$ in the same
plot, resulting in the middle panel of Figure~\ref{fig:mod_chk}. Again, the
values predicted for all 20 slices, are found to be close to the empirical
observations, as betrayed by the points lying close to the straight line of
unit slope.

Lastly, we wanted to ensure that the encouraging results from our model
checking exercise is robust to changes in the posterior summary of the learnt
GP parameters. Thus, we switch to using the mean of the parameter marginal
posterior from the posterior mode, and carry out the same exercise of
predicting elements of slices $\bv_{190},\ldots,\bv_{210}$. Results are
displayed in the right panel of Figure~\ref{fig:mod_chk}. Again, very
encouraging corroboration of our used models and results (of learning the GP
parameters) is noted. Indeed, in such model checking exercises, encouraging
match between the predictions and the empirical observations lends confidence
in the used models and results obtained therefrom, given the data at
hand--such models and results are the inputs to this exercise. However, if
lack of compatibility is noted in such a model checking exercise, between  
empirical observations and predictions, then it implies that either the used
modelling is wrong, and/or the results obtained therefrom given the data are
wrong. However, the model checking exercise that we undertake, vindicates our
models and results, given the data at hand.

\section{Proof of Theorem~3.3 in KWDC}
\begin{theorem}
\label{th:main1}
{
Given $\ell_c=g_{c,\bx}(t)$, with $T\in{\cal N}\subset{\mathbb Z}_{\geq 0}$ and
$\ell_c\in{\mathbb R}$, 
the map $g_{c,\bx}:{\cal N}\longrightarrow{\mathbb R}_{\geq 0}$ is a Lipschitz-continuous map,
$\forall c=1,\ldots,d$, where we know that there is a distinct value of
$\ell_c$ generated at a given $t$, i.e. $g_{c,\bx}(\cdot)$ is injective. Here
${\cal N}=\{t-1,\ldots,t-t_0\}$.}
\end{theorem}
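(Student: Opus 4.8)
The plan is to read off Lipschitz continuity directly from the shape of the two metrics involved, using the fact that the domain $\mathcal N=\{t-1,\ldots,t-t_0\}$ is a finite set of integers (and that $t\geq t_0$, so indeed $\mathcal N\subset\mathbb Z_{\geq 0}$). First I would fix the metric on the domain, $d_{\mathcal N}(t-t_i,t-t_j):=|t_i-t_j|$, and record the only properties of it that matter: for $i\neq j$ the points are distinct integers, so $d_{\mathcal N}(t-t_i,t-t_j)\geq 1$, and in all cases $d_{\mathcal N}(t-t_i,t-t_j)\leq t_0<\infty$.

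Next I would pin down the metric on the range, following the same convention as in Theorem~\ref{th:1}: $d_{\mathcal G}\bigl(g_{c,\bx}(t-t_i),g_{c,\bx}(t-t_j)\bigr):=\sqrt{-\log\bigl|\,corr(g_{c,\bx}(t-t_i),g_{c,\bx}(t-t_j))\,\bigr|}$, where the modelled correlation is $\exp[-(t_i-t_j)^2/\delta_c^2]\in(0,1]$. Two points need checking here, and they are the only places any care is needed. Because $|corr|\in(0,1]$, the logarithm is defined and non-positive, so $d_{\mathcal G}$ is a well-defined non-negative quantity, equal to $|t_i-t_j|/\delta_c$ and in particular finite for every pair in $\mathcal N$. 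And the injectivity hypothesis on $g_{c,\bx}$ is exactly what makes $d_{\mathcal G}$ a bona fide (pseudo)metric on the image $\mathcal G:=g_{c,\bx}(\mathcal N)$: were $g_{c,\bx}$ not injective we would have $g_{c,\bx}(t-t_i)=g_{c,\bx}(t-t_j)$ for some $i\neq j$ while the right-hand side $|t_i-t_j|/\delta_c$ is strictly positive, which is incoherent on the diagonal.

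The bound is then immediate: for all $i,j\in\{1,\ldots,t_0\}$,
$$d_{\mathcal G}\bigl(g_{c,\bx}(t-t_i),g_{c,\bx}(t-t_j)\bigr)=\frac{|t_i-t_j|}{\delta_c}=\frac{1}{\delta_c}\,d_{\mathcal N}(t-t_i,t-t_j),$$
so $g_{c,\bx}$ is Lipschitz-continuous on $\mathcal N$ with constant $L_{g_{c,\bx}}=1/\delta_c$, which is finite and strictly positive; this constant depends on the realised sample function through the scale hyperparameter $\delta_c$ of the scalar-variate GP it is drawn from (an identification that is then the content picked up in Theorem~\ref{th:main2}). The argument is uniform in $c\in\{1,\ldots,d\}$. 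Equivalently, one can invoke the general principle that any map out of a finite metric space is Lipschitz: take $L$ to be the maximum over the finitely many distinct index pairs of the ratios $d_{\mathcal G}/d_{\mathcal N}$, each of which has a finite numerator and a denominator $\geq 1$.

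I do not anticipate a genuine obstacle: this is a standard fact, and the whole proof is the observation that the domain is finite together with the bookkeeping noted above — the correlation staying in $(0,1]$ so that the range metric is defined, the stated injectivity being precisely what is needed for that metric to be consistent on the diagonal, and $t\geq t_0$ so that $\mathcal N$ sits inside $\mathbb Z_{\geq 0}$ as claimed.
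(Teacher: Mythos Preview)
Your proof is correct, but it takes a genuinely different route from the paper's. The paper equips the range with the ordinary absolute-value metric $d_{g_{1,2}}:=|g_{c,\bx}(t_1)-g_{c,\bx}(t_2)|$ and argues by contradiction: assuming no finite Lipschitz constant exists, one is forced into $|t_1-t_2|<1/2$, which for integers means $t_1=t_2$ and hence $d_{g_{1,2}}=0$, a contradiction. The entire content is the integer spacing of the domain; the injectivity hypothesis is invoked (somewhat superfluously) at the step $t_1=t_2\Rightarrow g(t_1)=g(t_2)$.

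You instead adopt the correlation-based metric $d_{\cal G}=\sqrt{-\log|corr|}$ from Theorem~\ref{th:1}, under which the Lipschitz inequality becomes the identity $d_{\cal G}=|t_i-t_j|/\delta_c=(1/\delta_c)\,d_{\cal N}$, so the constant $L_g=1/\delta_c$ drops out immediately. This is exactly the identification the paper only establishes in the subsequent Theorem~\ref{th:main2}, so your argument effectively merges the two results into one line. The trade-off is that the paper's version is metric-agnostic (it shows \emph{any} real-valued map on a finite integer set is Lipschitz), whereas yours is tied to the specific kernel model --- but since that model is precisely what the downstream argument in Section~\ref{sec:suffice} needs, your route is arguably the more efficient one. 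Your closing remark about the finite-domain principle is the same idea underlying the paper's proof, stated directly rather than via contradiction.
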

\begin{proof}
  { Distance $d_{t_{1,2}}$ between $t_1,t_2\in{\cal N}$ is $\vert
    t_1-t_2\vert$.

    Similarly, distance $d_{g_{1,2}}$ between
    $g_{c,\bx}(t_1),g_{c,\bx}(t_2)\in{\mathbb R}_{\geq 0}$ is $\vert
    g_{c,\bx}(t_1)-g_{c,\bx}(t_2)\vert$.

    $${\mbox{Assume}}\quad \displaystyle{\frac{d_{g_{1,2}}}{M}} > d_{t_{1,2}},\quad
    M>0,\:\:M\:\:{\mbox{is finite}}\:\:\forall
    t_1,t_2\in{\cal N}.$$

    \noindent
    Let $\displaystyle{\frac{d_{g_{1,2}}}{M}}:= 1/2$. \\
    $\Longrightarrow \vert t_1-t_2\vert < 1/2$ by  our assumption, \\
    i.e. for this choice of the LHS of the inequation assumed, the only
    solution for $\vert t_1-t_2\vert < 1/2$ is $t_1=t_2$. \\
    But $t_1=t_2\Longrightarrow g_{c,\bx}(t_1)=g_{c,\bx}(t_2)$ for
    injective $g_{c,\bx}(\cdot)$, 
    i.e. LHS of the assumed inequation is then 0.\\ 
    This is a contradiction (contradicts our choice of 1/2 for the LHS).\\
    $\therefore$ our assumption is wrong,\\
    $\Longrightarrow$, the correct inequation is: 
$$\displaystyle{\frac{d_{g_{1,2}}}{M}} \leq d_{t_{1,2}},\quad
    M>0,\:\:M\:\:{\mbox{is finite}}\:\:\forall
    t_1,t_2\in{\cal N},$$
i.e.
$$\displaystyle{{\vert g_{c,\bx}(t_1)-g_{c,\bx}(t_2)\vert}} \leq M \vert t_1-t_2\vert,\quad
    M>0,\:\:M\:\:{\mbox{is finite}}\:\:\forall
    t_1,t_2\in{\cal N}\subset{\mathbb Z}_{\geq 0}$$
i.e. $g_{c,\bx}(\cdot)$ is Lipschitz continuous.
     }
\end{proof}

\renewcommand\baselinestretch{1.}
\small
\end{document}